\documentclass{article}

\usepackage{amsfonts,amsmath,amsthm,amssymb,enumerate,xcolor,siunitx,multirow,hhline,hyperref,breakurl,mathdots,graphicx}
\usepackage[margin=1in]{geometry}
\usepackage[makeroom]{cancel}
\usepackage[nottoc]{tocbibind}

\newtheorem{problem}{Problem}
\newtheorem{theorem}{Theorem}
\newtheorem{example}{Example}
\newtheorem{lemma}{Lemma}
\newtheorem{corollary}{Corollary}
\newtheorem{remark}{Remark}
\newtheorem{definition}{Definition}

\newcommand{\paren}[1]{\left(#1\right)}
\newcommand{\bracket}[1]{\left[#1\right]}
\renewcommand{\brace}[1]{\left\{#1\right\}}
\renewcommand{\ang}[1]{\left\langle#1\right\rangle}
\newcommand{\floor}[1]{\left\lfloor#1\right\rfloor}
\newcommand{\ceil}[1]{\left\lceil#1\right\rceil}
\newcommand{\abs}[1]{\left|#1\right|}
\newcommand{\R}{\mathbb{R}}
\newcommand{\C}{\mathbb{C}}
\newcommand{\F}{\mathbb{F}}
\newcommand{\Z}{\mathbb{Z}}
\newcommand{\ring}{{\reflectbox{$\mathcal{R}$}}}
\newcommand{\M}[1]{\begin{bmatrix}#1\end{bmatrix}}
\newcommand{\MA}[2]{\bracket{\begin{array}{#1}#2\end{array}}}

\renewcommand{\cases}[1]{\brace{\begin{array}{ll}#1\end{array}}}

\newcommand{\rk}{\operatorname{rk}}
\newcommand{\brk}[1]{\underline{\rk}_{#1}}
\newcommand{\borderk}[1]{\underline{\operatorname{rk}}\paren{#1}}

\newcommand{\rowspan}[1]{\operatorname{rowspan}\paren{#1}}

\renewcommand{\ker}[1]{\operatorname{ker}\paren{#1}}
\renewcommand{\dim}[1]{\operatorname{dim}\paren{#1}}
\newcommand{\rref}[1]{\operatorname{rref}\paren{#1}}
\newcommand{\GL}[2]{\operatorname{GL}\paren{#1,\ #2}}
\renewcommand{\~}[1]{\widetilde{#1}}
\renewcommand{\vec}[1]{\overrightarrow{#1}}
\renewcommand{\Vec}[1]{\operatorname{vec}\paren{#1}}
\renewcommand{\span}[1]{\operatorname{span}\paren{#1}}
\newcommand{\cpdeval}[1]{\bracket{\bracket{#1}}}
\newcommand{\unfold}[2]{#1_{\paren{#2}}}

\newcommand{\ch}[1]{\operatorname{char}\paren{#1}}
\newcommand{\maxrank}{\boldsymbol{R}}
\newcommand{\borderring}[2]{\underline{#1}_{#2}}

\usepackage[noend]{algpseudocode}
\usepackage{algorithm,algorithmicx}

\newcommand*\Let[2]{\State #1 $\gets$ #2}
\algrenewcommand\algorithmicrequire{\textbf{Precondition:}}
\algrenewcommand\algorithmicensure{\textbf{Postcondition:}}

\newcommand{\Yield}[1]{\State \textbf{yield} #1}

\newcommand{\mainResult}{O^*\paren{|\F|^{(R-n_0)(\sum_{d\ge 1} n_d) \ + \ \min\paren{R,\ \sum_{d\ge 2} n_d}}}}
\newcommand{\borderResult}{O^*\paren{|\F|^{H\sum_{1\le r\le R} \sum_d \min(r,n_d)}}}

\newcommand{\DEPARTMENT}{Department of Electrical Engineering and Computer Science}
\newcommand{\DEGREE}{Master of Engineering in Electrical Engineering and Computer Science}
\newcommand{\SUPERVISOR}{Virginia Williams}
\newcommand{\SUPERVISORTITLE}{Professor of Electrical Engineering and Computer Science, Thesis supervisor}
\newcommand{\SUBMITDATE}{May 9, 2025}

\title{New results in canonical polyadic decomposition over finite fields}
\author{by \\ \\ Jason Yang}
\date{}

\begin{document}

\begin{center}
{\huge New results in canonical polyadic decomposition over finite fields}

$\phantom{}$

by

$\phantom{}$

Jason Yang

$\phantom{}$

S.B. Computer Science and Engineering

Massachusetts Institute of Technology, 2025

$\phantom{}$

Submitted to the

\DEPARTMENT

in partial fulfillment of the requirements for the degree of

$\phantom{}$

\DEGREE

$\phantom{}$

at the

$\phantom{}$

Massachusetts Institute of Technology

$\phantom{}$

May 2025

$\phantom{}$

\textcopyright \ 2025 Jason Yang.
This work is licensed under a \href{https://creativecommons.org/licenses/by/4.0/}{CC BY 4.0} license.

$\phantom{}$

The author hereby grants to MIT a nonexclusive, worldwide, irrevocable, royalty-free license to exercise any and all rights under copyright, including to reproduce, preserve, distribute and publicly display copies of the thesis, or release the thesis under an open-access license.
\end{center}

$\phantom{}$

$\phantom{}$

$\begin{array}{ll}
    \textrm{Authored by:} & \textrm{Jason Yang} \\
     & \textrm{\DEPARTMENT} \\
     & \textrm{\SUBMITDATE} \\
     & \\
    \textrm{Certified by:} & \textrm{\SUPERVISOR} \\
     & \textrm{\SUPERVISORTITLE} \\
     & \\
    \textrm{Accepted by:} & \textrm{Katrina LaCurts} \\
      & \textrm{Chair, Master of Engineering Thesis Committee}
\end{array}$


\newpage

\maketitle

\begin{center}
Submitted to the
\DEPARTMENT $ $
on \SUBMITDATE $ $

in Partial Fulfillment of the Requirements for the Degree of

\DEGREE

$\phantom{}$
\end{center}

\subsection*{Abstract}
Canonical polyadic decomposition (CPD) consists of expressing a tensor (multidimensional array) as a sum of several rank-1 tensors, each of which is an outer/separable product of vectors.
The number of rank-1 tensors used in a CPD is called the rank of the CPD, and the minimum possible rank of a CPD for a given tensor is called the rank of the tensor.
CPD is at the core of fast matrix multiplication, a computational problem with widespread implications across several seemingly unrelated problems in computer science.
Much recent progress in this field has used randomized heuristic search to find new CPDs, often over a finite field.
However, if these techniques fail to find a CPD with low enough rank, they cannot prove that no such CPD exists.
Consequently, these methods fail to resolve certain long-standing questions, such as whether the tensor corresponding to $3\times 3$ matrix multiplication has rank less than 23.

To make progress on these problems, we develop a novel algorithm that preserves exactness, i.e. they can provably verify whether or not a given tensor has a specified rank.
Compared to brute force, when searching for a rank-$R$ CPD of a $n_0\times\dots\times n_{D-1}$-shaped tensor over a finite field $\F$, where $n_0\ge \dots\ge n_{D-1}$, our algorithm saves a multiplicative factor of roughly $|\F|^{R(n_0-1)+n_0(\sum_{d\ge 1} n_d)}$.
Additionally, our algorithm runs in polynomial time.
We also find a novel algorithm to search border CPDs, a variant of CPDs that is also important in fast matrix multiplication.

Finally, we study the maximum rank problem and give new upper and lower bounds, both for families of tensor shapes and specific shapes. 
Although our CPD search algorithms are still too slow to resolve the rank of $3\times 3$ matrix multiplication, we are able to utilize them in this problem by adding extra search pruners that do not affect exactness or increase asymptotic running time.

$\phantom{0}$

\noindent Thesis supervisor: \SUPERVISOR

\noindent Title: \SUPERVISORTITLE

\newpage

\section*{Acknowledgments}
I am deeply grateful to the following people, each of whom have made this thesis possible:

My supervisor Prof. Virginia Vassilevska Williams, as she guided me for the initial version of this research project --- studying fast matrix multiplication --- throughout all my undergraduate years, and during this thesis.
She acted as my main source of knowledge for this topic,
continually suggested conjectures, patterns, or other interesting things to investigate,
and helped me gain progress with this project via weekly meetings.

Prof. Erik Demaine, for which his class \href{https://courses.csail.mit.edu/6.5440/fall23/}{6.5440 Algorithmic Lower Bounds: Fun with Hardness Proofs}\footnote{https://courses.csail.mit.edu/6.5440/fall23/} motivated me to branch out beyond fast matrix multiplication and study tensor decomposition in a more general setting.

Austin Conner, for resolving an important lemma for the border variant of tensor decomposition.

Everyone, including anonymous colleagues, who reviewed my thesis and relevant papers, and suggested improvements to writing and organization.

My family, for supporting me throughout my MIT journey and keeping me in good health.

\newpage

\tableofcontents

\newpage

\listoftables

\newpage

\section{Introduction}
Given a tensor (multidimensional array) $T\in \ring^{n_0\times\dots\times n_{D-1}}$ over a ground ring $\ring$, a \textit{rank-$R$ canonical polyadic decomposition (CPD)} of $T$ is a list of matrices $A_d\in\ring^{n_d\times R},\ 0\le d<D$ such that \[T = \M{\sum_{0\le r<R} \prod_{0\le d<D} (A_d)_{i_d,r}}_{i_0,\dots,i_{D-1}},\]

i.e. the $(i_0,\dots,i_{D-1})$ entry of $T$ is $\sum_{0\le r<R} \prod_{0\le d<D} (A_d)_{i_d,r}$.

We call the $A_d$ ``factor matrices" and abbreviate the right-hand side as $\cpdeval{A_0,\dots,A_{D-1}}$. Note that this expression is also equal to $\sum_r \bigotimes_d (A_d)_{:,r}$, where $\otimes$ denotes the tensor product.

The notation we use for tensors is typical in computer science.
In algebraic geometry, tensors are typically notated as multilinear forms over formal variables.
Specifically, the above tensor $T$ would be the $D$-linear form $\sum_{i_0,\dots,i_{D-1}} T_{i_0,\dots,i_{D-1}}(\alpha_{0,i_0}\cdots \alpha_{D-1,i_{D-1}})$ for formal variables $\alpha_{d,i_d}$, and the CPD expression $\cpdeval{A_0,\dots,A_{D-1}}$ would be $\sum_r \paren{\sum_{i_0} (A_0)_{i_0,r}\alpha_{0,i_0}}\cdots\paren{\sum_{i_{D-1}} (A_{D-1})_{i_{D-1,r}}\alpha_{D-1,i_{D-1}}}$.

The rank of $T$, denoted $\rk_\ring\paren{T}$, is the smallest $R$ such that there exists a rank-$R$ CPD of $T$ over the ring $\ring$. Determining tensor rank is the central problem underlying fast matrix multiplication \cite{fmm-survey}. Formally, the action of multiplying a $m\times k$ matrix with a $k\times n$ matrix can be represented as a $mk\times kn\times nm$ tensor commonly denoted $\ang{m,k,n}$ \footnote{Typically, $\ang{m,k,n}$ represents the action of matrix multiplication followed by a transpose, which makes certain symmetries of the tensor easier to notate. This does not change the rank of this tensor, since applying a transpose to the output is equivalent to permuting the slices of $\ang{m,k,n}$ along axis 2 (the $nm$-length axis).}; then a rank-$R$ CPD of this tensor can be converted into a divide-and-conquer algorithm for multiplying two $N\times N$ matrices in $O(N^{3\log_{mkn} R})$ time. The quantity $3\log_{mkn} R$ is known as the \textit{running time exponent} of such a CPD. The famous Strassen algorithm \cite{strassen} corresponds to a rank-7 CPD of $\ang{2,2,2}$.
Furthermore, every fast matrix multiplication algorithm corresponds to a CPD of some $\ang{m,k,n}$, as long as the algorithm is restricted to arithmetic operations \cite{fmm-survey}.

The asymptotically fastest known algorithm for matrix multiplication \cite{fmm-record} and its predecessors correspond to CPDs of very large $\ang{m,k,n}$ tensors; this is a consequence of applying several algebraic techniques, each of which produces a \textit{sequence} of CPDs whose running time exponents converge to some limit.
As a result, the constant factors of such algorithms render them impractical \cite{fmm-survey}, despite new developments to mitigate this issue \cite{fmm-leading-constant}.
Furthermore, it is known that the optimal running time exponent must be a limit, i.e. it cannot be achieved by any single CPD \cite{cw-limit}.

An alternative approach to fast matrix multiplication is to directly find low-rank CPDs of small $\ang{m,k,n}$ tensors, which will yield suboptimal running time exponents but hopefully manageable constant factors. Much work has been done in this direction using computer search \cite{smirnov, courtois-333, heule-sat, alphatensor, flip, flip2, adaflip, deza-constraint-programming}, which we detail in Section \ref{sec:prior-work}.
We are most interested in the latter approach to fast matrix multiplication, due to impractical constant factors in the former approach. However, we deviate from previous research in two ways:
\begin{enumerate}[1.]
    \item We restrict ourselves to \textit{exact} algorithms for finding low-rank CPDs. Although there has been exciting progress with heuristic search methods in fast matrix multiplication \cite{courtois-333, smirnov, heule-sat, alphatensor, flip, symmflip}, these methods still have not resolved questions such as whether $\rk(\ang{3,3,3})<23$, which has been open since 1976 \cite{laderman-333}.

    \item We generalize to arbitrary tensors, not just those for matrix multiplication. Our reasons for doing so are: to find general search optimizations in tensor CPD that may have been overlooked; and to make our algorithms applicable to problems besides matrix multiplication that use CPD, such as the light bulb problem \cite{light-bulb} and dynamic programming \cite{fine-grained-dp}.
\end{enumerate}

To ensure an exact algorithm is possible in principle, we restrict the ground ring to be a finite field, so that the search space is finite. Doing so is not too strong of a handicap, as some previous work \cite{alphatensor, flip, flip2} has successfully lifted novel CPDs of matrix multiplication tensors from a finite field to the integers.

It is likely impossible to solve tensor rank in polynomial time, as tensor rank is NP-hard over the finite fields (for arbitrary tensors) \cite{tensor-np}.
However, we are still interested in improving the runtime as much as we can.
Our main result, developed over the course of \cite{yang24, yang25}, is:

\begin{theorem}
\label{thm:main}
Given a concise tensor $T\in\F^{n_0\times\dots\times n_{D-1}}$, finding a rank-$R$ CPD of $T$ or determining that none exists can be done in $\mainResult$ time and $O^*(1)$ space.
\end{theorem}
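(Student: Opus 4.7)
The brute-force search, which enumerates every entry of every factor matrix, costs $|\F|^{R \sum_d n_d}$; my goal is to reduce this to the advertised bound by exploiting the conciseness of $T$ together with the gauge freedoms of CPD. The starting observation is that if $T$ is concise then every factor $A_d$ in any rank-$R$ CPD of $T$ must itself have rank $n_d$: otherwise the mode-$d$ unfolding $T_{(d)}$, which factors through $A_d$, would have rank strictly below $n_d$. In particular $R \ge n_d$ for every $d$, and $A_0$ has some $n_0$-element subset of linearly independent columns. By branching over which $n_0$ columns those are (a polynomial overhead absorbed into $O^*$) and applying the column-permutation symmetry of CPD, I may restrict attention to CPDs in which the first $n_0$ columns of $A_0$ are linearly independent.

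The central idea is then to enumerate only a carefully chosen part of the factor matrices and to \emph{solve} (or narrowly branch) for the rest. Concretely, I plan to enumerate the last $R - n_0$ columns of each of $A_1, \dots, A_{D-1}$, for cost $|\F|^{(R - n_0) \sum_{d \ge 1} n_d}$ matching the dominant term. Let $X_r := (A_1)_{:,r} \otimes \dots \otimes (A_{D-1})_{:,r}$ for $n_0 \le r < R$ denote the resulting rank-1 ``tail'' tensors in $V := \F^{n_1} \otimes \dots \otimes \F^{n_{D-1}}$, and let $Y_r := (A_1)_{:,r} \otimes \dots \otimes (A_{D-1})_{:,r}$ for $r < n_0$ denote the still-unknown ``head'' rank-1 tensors. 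Then the CPD equation rewrites, slice by slice along axis 0, as
\[
T_{i, :, \dots, :} \;-\; \sum_{r = n_0}^{R-1} (A_0)_{i, r}\, X_r \;=\; \sum_{r = 0}^{n_0 - 1} (A_0)_{i, r}\, Y_r, \qquad 0 \le i < n_0,
\]
so each axis-0 slice of $T$ must lie in the $R$-dimensional subspace $\operatorname{span}\{X_{n_0}, \dots, X_{R-1}\} + \operatorname{span}\{Y_0, \dots, Y_{n_0-1}\}$ of $V$. Since the span of the $X_r$'s is already determined by the outer enumeration, the images $\overline{Y_0}, \dots, \overline{Y_{n_0-1}}$ in the quotient $\overline{V} := V / \operatorname{span}\{X_r\}$ must span the same $n_0$-dimensional subspace as do the images of the axis-0 slices of $T$. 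From any such set of $n_0$ rank-1 lifts one reads off the first columns of $A_1, \dots, A_{D-1}$, and then all entries of $A_0$ (both head and tail) are recovered by expressing each slice in the basis $\{X_r, Y_r\}$.

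The main obstacle is bounding the cost of this residual rank-1 search so that its total contribution across the $n_0$ needed lifts is at most $|\F|^{\min(R, \sum_{d \ge 2} n_d)}$, matching the second term of the claim. A naive enumeration of each $Y_r$ as a rank-1 tensor of $V$ costs $|\F|^{\sum_{d \ge 1} n_d}$, which is hopeless. The intended refinement is to enumerate the factor vectors of $Y_r = v_1 \otimes \dots \otimes v_{D-1}$ one axis at a time, pruning after each step against the linear constraint cutting out the relevant coset, which has codimension close to $\dim V - R$ inside $V$. A dimension count on the intersection of the Segre variety of rank-1 tensors with the relevant $R$-dimensional subspace of $V$, combined with the $n_0$-fold amortization over slices, should yield the advertised bound; verifying this worst-case count over all concise $T$ is the technical heart of the argument. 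Throughout, all enumeration is implemented as depth-first recursion so that the working memory stays $O^*(1)$.
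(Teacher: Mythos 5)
Your outer loop matches the paper's: enumerate the last $R-n_0$ columns of $A_1,\dots,A_{D-1}$ at cost $|\F|^{(R-n_0)\sum_{d\ge 1}n_d}$, then recover the rest. But the inner step --- achieving $O^*\paren{|\F|^{\min(R,\sum_{d\ge 2}n_d)}}$ per outer assignment --- is exactly where your argument stops: you reduce it to finding rank-1 lifts $Y_r$ in cosets of $\span{X_{n_0},\dots,X_{R-1}}$, propose an axis-by-axis enumeration of their factor vectors with pruning, and defer the bound to ``a dimension count on the intersection of the Segre variety with the relevant $R$-dimensional subspace,'' which you yourself flag as unverified. That count is not established and is not obviously true in the worst case over a finite field: an $R$-dimensional subspace can contain many rank-1 tensors (consider a subspace spanned by rank-1 tensors sharing factors), and a pruned depth-first enumeration of factor vectors has no proven running-time guarantee here. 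So as written the proposal does not prove the claimed time bound; the technical heart is missing, not just deferred.

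The paper closes this gap by changing what is searched. With $\~{T}$ the augmented tensor whose first $n_0$ axis-0 slices are $T$ and whose remaining $R-n_0$ slices are the negated tail rank-1 tensors, the head data is recovered by finding rows $v\in\F^{1\times R}$ of a matrix $\~{Q}=\MA{c|c}{Q&C}$ such that $\rk(v\times_0\~{T})\le 1$, with the additional requirement that $n_0$ such rows have invertible leading $n_0\times n_0$ block; each candidate $v$ is tested in polynomial time (a rank-$\le 1$ check on a contraction), so enumerating $v\in\F^{1\times R}$ costs $O^*(|\F|^R)$ outright. Alternatively, writing $v\times_0\~{T}=\bigotimes_{d\ge 1}u_d$ and fixing $u_2,\dots,u_{D-1}$ makes the equation linear in $(v,u_1)$, so enumerating only $(u_d)_{d\ge 2}$ and solving a linear system per choice costs $O^*\paren{|\F|^{\sum_{d\ge 2}n_d}}$, and the spans of the solutions found either way coincide; taking the faster of the two gives the $\min$ in the exponent, and the chosen rows' rank-1 contractions directly yield $(A_d)_{:,:n_0}$ while $A_0=Q^{-1}\MA{c|c}{I_{n_0}&C}$. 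In short, the key idea you are missing is to parameterize the residual search by the $R$-dimensional coefficient vectors $v$ (or by the factors $u_2,\dots,u_{D-1}$ alone, solving linearly for the rest), rather than by the rank-1 head tensors themselves; this also removes the need for your preliminary branching over which columns of $A_0$ are independent and for any amortization over the $n_0$ slices, since the set of admissible $v$ is computed once per outer assignment.
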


We explain our proof of Theorem \ref{thm:main} and the definition of ``concise" in Section \ref{sec:cpd}.
We also detail several search pruning strategies that significantly speed up our algorithm in practice.

Compared to brute force, which would run in $O^*\paren{|\F|^{R(\sum_d n_d)}}$ time, Theorem \ref{thm:main} decreases the exponent by $Rn_0+n_0(\sum_{d\ge 1} n_d)-\min(R,\sum_{d\ge 2} n_d)$; for large enough $D$ and $n_d$, this simplifies to $R(n_0-1)+n_0(\sum_{d\ge 1} n_d)$.

We also study \textit{border} CPDs, which are essentially CPDs over the polynomial ring $\F[x]/(x^H)$ for some positive integer $H$, and are equally important as ordinary CPDs in terms of minimizing the asymptotic complexity of fast matrix multiplication (more detail in Section \ref{sec:border}). Over this ring we achieve the following:

\begin{theorem}
\label{thm:border}
Given a concise tensor $T\in\paren{\F[x]/(x^H)}^{n_0\times\dots\times n_{D-1}}$, finding a rank-$R$ CPD of $T$ or determining that none exists can be done in $\borderResult$ time and $O^*(1)$ space.
\end{theorem}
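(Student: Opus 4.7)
The plan is to prove Theorem \ref{thm:border} by extending the strategy of Theorem \ref{thm:main} to the local ring $\F[x]/(x^H)$. The target exponent $H \sum_{r=1}^R \sum_d \min(r, n_d)$ has a natural interpretation: it counts the $\F[x]/(x^H)$-coordinates needed to specify each factor matrix $A_d$ when constrained to be \emph{upper trapezoidal}, i.e.\ to satisfy $(A_d)_{i,r} = 0$ whenever $i > r$. So I would aim to show that, without loss of generality, a rank-$R$ CPD of a concise tensor $T$ admits a representative in which every $A_d$ is upper trapezoidal, and then enumerate the remaining free coordinates.

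To obtain this reduction I would use the standard CPD gauge group on the columns (column permutations and the admissible rescalings that act on all $A_d$ simultaneously) together with a per-axis left action by $\GL(n_d, \F[x]/(x^H))$ --- the latter corresponds to a basis change on axis $d$ of $T$ and can be absorbed by pre-transforming the target tensor. The idea is that some combination of these symmetries, possibly together with a polynomial-sized enumeration over combinatorial data such as pivot patterns, forces every $A_d$ simultaneously into the desired upper trapezoidal form. The border notion of conciseness should guarantee that each $A_d$ has full row rank $n_d$ in the appropriate sense over $\F[x]/(x^H)$, playing the role of the full-rank hypothesis used in the ordinary-CPD argument.

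Granting the reduction, the algorithm becomes a depth-first enumeration of the $\sum_{r,d}\min(r,n_d)$ free coordinates, each drawn from the $|\F|^H$-element ring $\F[x]/(x^H)$, yielding $|\F|^{H\sum_{r,d}\min(r,n_d)}$ leaves. At each leaf one verifies $\cpdeval{A_0,\dots,A_{D-1}} = T$ by polynomial-time arithmetic modulo $x^H$, and the depth-first structure keeps space usage to $O^*(1)$.

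The hard part will be justifying the upper-trapezoidal reduction over $\F[x]/(x^H)$. Over a field this is an elementary row reduction on each $A_d$, but over the local ring the obstruction is that nonzero elements need not be units: an entry whose constant coefficient vanishes cannot be used as a pivot for division. I expect this technical step is precisely the lemma attributed to Austin Conner in the Acknowledgments --- a Smith-normal-form-style statement asserting that, after a suitable gauge transformation and at most a polynomial-sized enumeration of combinatorial invariants, every concise factor matrix over $\F[x]/(x^H)$ admits the upper trapezoidal representative claimed. Verifying that this reduction covers every valid CPD (so the algorithm remains exact) without inflating the exponent beyond the stated bound is the principal technical content to be established.
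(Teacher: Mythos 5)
The central step of your plan --- that a rank-$R$ CPD of the given concise tensor $T$ may be assumed, without loss of generality, to have every factor matrix upper trapezoidal --- has a gap I do not see how to close. The only symmetries that preserve the target tensor are simultaneous column permutations and column rescalings whose per-column product is $1$; these never change the support of a column, so already over a field a tensor with an essentially unique CPD built from generic rank-1 terms admits no decomposition in which the first column of each $A_d$ is a multiple of $e_0$. The trapezoidal shape genuinely requires the per-axis left action by $\GL{n_d}{\borderring{\F}{H}}$, but that action changes $T$, and the transformation you would need depends on the unknown decomposition, so it cannot be ``absorbed by pre-transforming the target tensor'': the algorithm would have to search over the $D$ unknown change-of-basis matrices as well, which carry $\sum_d n_d^2$ ring-valued degrees of freedom, not a polynomial-sized set of combinatorial pivot data. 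This is precisely the difficulty that Theorem \ref{thm:main} confronts for a \emph{single} axis, and resolving it even there takes the whole $S$-set machinery of Section \ref{sec:cpd}; over $\borderring{\F}{H}$ matters are worse, since no reduced echelon form exists in general (e.g.\ $\M{1&1\\&x}$), and the triangularization of Lemma \ref{border-reduc} needs column permutations, which in a CPD must be applied simultaneously to all axes. Finally, the lemma due to Conner is not a normal-form statement for factor matrices: it is the fact that $\rk(x^{H-1}I_r)=r$ over $\borderring{\F}{H}$, used to certify that the reduction of Lemma \ref{border-reduc} computes the correct rank (and hence that conciseness can be assumed).

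The paper's actual proof avoids normal forms on factor matrices entirely: it enumerates a single rank-1 term $\bigotimes_d u_d$, subtracts it, uses Lemma \ref{border-reduc} to make the residual concise again, and recurses with rank budget decreased by one; conciseness plus the budget force the side lengths at budget $r$ to be at most $\min(r,n_d)$, so the branching factors multiply to $|\F|^{H\sum_{1\le r\le R}\sum_d\min(r,n_d)}$, and depth-first traversal gives $O^*(1)$ space. Your observation that this exponent equals the number of ring coordinates of an upper-trapezoidal family of factor matrices is a nice reading of the bound, but the bound is obtained by this recursive peeling argument, not by exhibiting such a normal form.
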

Compared to brute force, which would take $O^*\paren{|\F|^{HR\sum_d n_d}}$ time, Theorem \ref{thm:border} decreases the exponent term by roughly $\frac{1}{2}H\sum_d n_d^2$.

Finally, in Section \ref{sec:max-rank} we improve existing bounds on the maximum possible rank of a tensor with a given shape, some with the help of our algorithm for Theorem \ref{thm:main}.
Source code is available at \href{https://github.com/coolcomputery/tensor-cpd-search}{this URL}\footnote{https://github.com/coolcomputery/tensor-cpd-search}.

\section{Previous work}
\label{sec:prior-work}
To our knowledge, there are no previously known algorithms that find CPDs of arbitrary tensors with guaranteed correctness, besides Gr\"obner bases, which solves general systems of polynomial equations.
For CPDs over $\R$, there are some practical numerical methods (e.g. Jennrich's algorithm) that work for ``nice" tensors but not all.

Much previous work in this field has focused on finding CPDs of matrix multiplication (MM) tensors. This work can be divided into two camps: the first uses algebraic methods to indirectly construct CPDs for (very large) MM tensors, in order to reduce the MM exponent $\omega$ by any means necessary; the second directly searches for CPDs of small MM tensors.

A good overview of the algebraic camp for MM is given in \cite{fmm-survey}. For the search camp, we give a list of such approaches that we previously wrote in \cite{yang25}:

\begin{itemize}
    \item By hand: Strassen's algorithm was extended by Hopcroft and Kerr \cite{hopcroft-kerr-p2n} to prove $\rk(\ang{m,2,n})\le \ceil{\frac{3mn+\max\brace{m,n}}{2}}$.
    Laderman \cite{laderman-333} proved $\rk(\ang{3,3,3})\le 23$, which today is still the best known upper bound for $\ang{3,3,3}$. Many other upper bounds of small $\ang{m,k,n}$ are listed on Sedoglavic's table \cite{sedoglavic-table}.
    
    However, the most successful manual approach has been Pan's trilinear aggregation \cite{pan78, pan82}, consisting of several upper bounds on $\rk(\ang{n,n,n})$. The lowest running time exponent obtained by this approach is $\approx 2.7734$, from the rank bound $\rk(\ang{44,44,44})\le 36\ 133$.

    \item Numerical optimization: Smirnov \cite{smirnov} converts tensor CPD into an optimization problem whose objective is to minimize the sum-of-squares error between the target tensor $T$ and a CPD $\cpdeval{A,B,C}$ of fixed rank. The primary optimization technique is \textit{alternating least squares}, where one cycles through each factor matrix $A,B,C$, and analytically minimizes the objective with respect to that matrix while fixing all other factor matrices. Together with clever regularization, along with post-processing (to convert a sufficiently accurate CPD into an exact CPD), Smirnov proved $\rk(\ang{3,3,6})\le 40$ over the rationals (running time exponent $\approx 2.7743$), nearly surpassing trilinear aggregation \cite{pan82}.
    
    \item Boolean SAT: several works, such as Courtois et. al. \cite{courtois-333} and Heule et. al. \cite{heule-sat}, have searched for CPDs over the finite field $\F_2$ by formulating the problem as boolean SAT, specifically to try finding a rank-22 CPD of $\ang{3,3,3}$. Although unsuccessful, thousands of new rank-23 CPDs were discovered (over $\F_2$) that are inequivalent to Laderman's CPD.

    \item Reinforcement learning: Fawzi et. al. \cite{alphatensor} (AlphaTensor) formulates tensor CPD as a single-player game where one starts with a given tensor, a move consists of subtracting away a rank-1 tensor, and the goal is to end with all-zeros in as few moves as possible.
    Using a neural network together with Monte-Carlo tree search, several new CPDs were found, including $\rk(\ang{3,4,5})\le 47$ over the integers (running time exponent $\approx 2.8211$) and $\rk(\ang{4,4,4})\le 47$ over $\F_2$ (running time exponent $\approx 2.7773$); the latter result could not be lifted to the integers.

    \item Flip graphs: Kauers and Moosbauer \cite{flip, flip2} begin with an arbitrary CPD $T=\sum_r a_r\otimes b_r\otimes c_r$, then (when possible) iteratively apply the following ``flip" transformation to a randomly chosen pair of summands with matching factors in the first axis,

    \[a\otimes b\otimes c + a\otimes b'\otimes c'\]
    \[\rightarrow a\otimes (b+b')\otimes c + a\otimes b'\otimes (-c+c'),\]

    or an equivalent transformation with the tensor axes and the order of the summands permuted.
    It is worth noting that this is essentially an elementary row operation in a matrix factorization.
    
    After each iteration, a CPD is also ``reduced" when possible (detected using some conditions involving linear span) so that its rank decreases by 1.
    

    Kauers and Moosbauer first proved $\rk_{\F_2}(\ang{5,5,5})\le 95$ by initializing their search at the rank-96 CPD discovered by AlphaTensor \cite{alphatensor}.
    In a subsequent paper \cite{flip2}, the authors proved $\rk(\ang{2,6,6})\le 56$ over the integers by lifting a CPD from $\F_2$, marking the first improvement on $\ang{2,6,6}$ since Hopcroft and Kerr \cite{hopcroft-kerr-p2n}.

    Further modifications of this technique, such as adaptive flip graphs \cite{adaflip} and flip graphs with symmetry \cite{symmflip} have given stronger bounds.

    \item Constraint programming: closest to the spirit of our work, Deza et. al. \cite{deza-constraint-programming} use the IBM ILOG CP Optimizer to exhaustively search for CPDs with elements in $\brace{-1,0,1}$, with symmetry-breaking constraints added that do not affect correctness (e.g., forcing lexicographic order among the summands $a_r\otimes b_r\otimes c_r$). Although the authors do not find new upper bounds on tensor rank, they do recover the previously known bounds $\rk(\ang{2,2,2})\le 7$ and $\rk(\ang{2,2,3})\le 11$.
\end{itemize}

A table of upper bounds on $\rk(\ang{m,n,k})$ for many (thousands of) small $\ang{m,n,k}$ is given in \cite{sedoglavic-table}.
Despite this progress, the lowest known MM exponent that comes from a small tensor is still Pan's latest trilinear aggregation result \cite{pan82} from 1982, and its small improvement over Strassen's algorithm is likely outweighed by increased constant factors.

\subsection{Notation}
\label{sec:notation}
\begin{itemize}
    \item The statement $T:=\M{f(i_0,\dots,i_{D-1})}_{i_0,\dots,i_{D-1}}$ means that the $(i_0,\dots,i_{D-1})$ entry of tensor $T$ is $f(i_0,\dots,i_{D-1})$.

    \item Tensor slices are denoted with NumPy notation, and indices are 0-indexed.
    \begin{itemize}
        \item e.g., for a two-dimensional tensor $A$: $A_{0,:}$ is the first (topmost) row of $A$; and $A_{:,:c}$ denotes the submatrix containing all rows and the first (leftmost) $c$ columns.
    \end{itemize}

    \item The tensor product of $A\in\ring^{n_0\times\dots\times n_{D-1}}$ and $B\in\ring^{s_0\times\dots\times s_{E-1}}$
    is
    \[A\otimes B:=\M{A_{i_0,\dots,i_{D-1}} B_{j_0,\dots,j_{E-1}}}_{i_0,\dots,i_{D-1}, j_0,\dots,j_{E-1}}.\]

    The chain tensor product $A_0\otimes\dots\otimes A_{n-1}$ is denoted as $\bigotimes_{0\le i<n} A_i$.
    
    \item The axis-$d$ contraction of $T\in\ring^{n_0\times\dots\times n_{D-1}}$ by $M\in\ring^{n'_d \times n_d}$ is
    \[M\times_d T:=\M{\sum_{i_d} M_{i'_d,i_d} T_{_{i_0,\dots,i_{D-1}}}}_{i_0,\dots,i_{d-1}, i'_d, i_{d+1},\dots,i_{D-1}}.\]

    \item A rank-$R$ CPD is a list of factor matrices $A_d\in\ring^{n_d\times R},\ 0\le d<D$ that evaluates to the tensor $\cpdeval{A_0,\dots,A_{D-1}}:=\sum_r \bigotimes_d (A_d)_{:,r}$.

    \item $\rref{M}$ denotes the reduced row echelon form of matrix $M$.

    \item $\GL{n}{\F}$ denotes the set of invertible $n\times n$ matrices with elements in the field $\F$.

\end{itemize}

\section{Algorithms}
Here we describe our algorithms for both standard CPD and border CPD, which were cited in our previous works \cite{yang25} and \cite{yang24}, respectively. The key insight we use is to extract information about the input tensor $T$ via axis-operations:

\[M\times_d T:=\M{\sum_{i_d} M_{i'_d,i_d} T_{_{i_0,\dots,i_{D-1}}}}_{i_0,\dots,i_{d-1}, i'_d, i_{d+1},\dots,i_{D-1}},\]

defined for a suitably sized matrix $M$ (see notation in Section \ref{sec:notation}).

The following relation between axis-operations and CPDs is the crux of our algorithm and is easy to check by hand:


\[M\times_d \cpdeval{A_0,\dots,A_{D-1}}=\cpdeval{A_0,\dots,A_{d-1},MA_d,A_{d+1},\dots,A_{D-1}}.\]

A consequence is that applying an axis contraction on a tensor cannot increase its rank.

\subsection{CPD DFS}
\label{sec:cpd}
When $T$ has elements in a field $\F$, we can reduce $T$ in a manner similar to row reduction for matrices. More precisely, for each axis $d$:
\begin{itemize}
    \item unfold $T$ into the matrix $\unfold{T}{d}:=\MA{c}{\vdots\\\hline \Vec{T_{\cdots,:,i_d,:,\cdots}}\\\hline \vdots}_{i_d}$;
    \item find some invertible matrix $Q$ s.t. $Q \unfold{T}{d} = \rref{\unfold{T}{d}}$;
    \item and replace $T$ with $Q\times_d T$.
\end{itemize}

Afterwards, remove any all-zeros slices from $T$; doing so does not change $\rk(T)$.

Since each such $Q$ can be constructed in polynomial time using Gaussian elimination,
and any rank-$R$ CPD of the original tensor $T$ can be converted into a CPD of the new $T$ (and vice versa) by applying axis-operations,
we can assume without loss of generality (WLOG) that the input tensor $T$ with shape $n_0\times\dots\times n_{D-1}$ satisfies $\rk(T_{(d)})=n_d$; such a tensor is called \textit{concise}:
\begin{definition}
A tensor $T\in\F^{n_0\times\dots\times n_{D-1}}$ is concise if for each axis $d$, the unfolding $\unfold{T}{d}:=\MA{c}{\vdots\\\hline \Vec{T_{\cdots,:,i_d,:,\cdots}}\\\hline \vdots}_{i_d}$ satisfies $\rk(\unfold{T}{d})=n_d$, i.e. $\unfold{T}{d}$ has full row rank.
\end{definition}

It is also clear that if a concise tensor $T$ with rank $\le R$ must have each side length be $\le R$:
\begin{lemma}
If $T\in\F^{n_0\times\dots\times n_{D-1}}$ is concise and $\rk(T)\le R$, then $\forall d: n_d\le R$.
\end{lemma}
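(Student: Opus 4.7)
The plan is to combine two facts: (i) any rank-$R$ CPD of $T$ gives a factorization of the axis-$d$ unfolding $\unfold{T}{d}$ through a matrix with only $R$ columns, so $\rk(\unfold{T}{d})\le R$; and (ii) conciseness directly identifies $n_d$ with $\rk(\unfold{T}{d})$. Combining these two inequalities immediately yields $n_d\le R$.

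Concretely, I would first fix a rank-$R$ CPD $T=\cpdeval{A_0,\dots,A_{D-1}}$, whose existence is guaranteed by $\rk(T)\le R$. Expanding the definition entrywise, the scalar $T_{i_0,\dots,i_{D-1}}$ equals $\sum_{0\le r<R} (A_d)_{i_d,r}\prod_{d'\neq d}(A_{d'})_{i_{d'},r}$. I would then observe that this is exactly the matrix product $\unfold{T}{d}=A_d \cdot B_d$, where $B_d\in\F^{R\times \prod_{d'\ne d} n_{d'}}$ has entries $(B_d)_{r,(i_{d'})_{d'\ne d}} = \prod_{d'\neq d}(A_{d'})_{i_{d'},r}$ (reading the columns of $\unfold{T}{d}$ in the same order as the columns of $B_d$).

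From this factorization it follows that $\rk(\unfold{T}{d})\le \rk(A_d)\le R$, since $A_d$ has only $R$ columns. On the other hand, the definition of conciseness says $\rk(\unfold{T}{d})=n_d$, so $n_d\le R$ as claimed. Since $d$ was arbitrary, this proves the lemma.

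I do not anticipate any real obstacle; the only potentially fiddly part is being careful about the correspondence between the column index of $\unfold{T}{d}$ (which ranges over tuples $(i_0,\dots,i_{d-1},i_{d+1},\dots,i_{D-1})$ via some fixed vectorization order $\Vec{\cdot}$) and the matching column index of $B_d$. Using the same vectorization for both sides makes the equality $\unfold{T}{d}=A_d B_d$ immediate.
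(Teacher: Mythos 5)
Your proposal is correct and matches the paper's own argument: the paper likewise factors the unfolding as $\unfold{T}{d}=A_d\,\MA{c}{\vdots \\\hline \Vec{\otimes_{d'\ne d} (A_{d'})_{:,r}} \\\hline \vdots}_{r}$, which is exactly your matrix $B_d$, and concludes $n_d=\rk(\unfold{T}{d})\le R$ from conciseness. No gaps.
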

\begin{proof}
Suppose $T=\cpdeval{A_0,\dots,A_{D-1}}$ is a rank-$R$ CPD. Then unfolding along axis $d$ yields the relation $T_{(d)}=A_d \MA{c}{\vdots \\\hline \Vec{\otimes_{d'\ne d} (A_{d'})_{:,r}} \\\hline \vdots}_{r}$, implying $\rk(T_{(d)})\le R$.
\end{proof}

An immediate consequence is that finding a rank-1 CPD is easy:
\begin{lemma}
\label{rank1}
Given an arbitrary tensor $T\in\F^{n_0\times\dots\times n_{D-1}}$, returning a rank-1 CPD of $T$ or determining that no such CPD exists can be done in $O^*(1)$ time.
\end{lemma}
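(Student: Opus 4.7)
The plan is to exploit the rigidity of rank-1 tensors: up to rescaling, a nonzero rank-1 tensor $T = v_0 \otimes \dots \otimes v_{D-1}$ is completely determined by any single nonzero entry together with the $D$ fibers passing through that entry. I would reconstruct the unique candidate CPD directly from such data and then verify it by a single entrywise comparison.

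First I handle $T = 0$ separately by returning the trivial CPD with $v_d = 0$ for all $d$, which evaluates to the zero tensor. Otherwise, I scan $T$ to find a nonzero entry $\alpha := T_{i_0^*,\dots,i_{D-1}^*}$, which takes time proportional to $\prod_d n_d$. For each axis $d$, define the candidate fiber
\[u_d := T_{i_0^*,\dots,i_{d-1}^*,\,:\,,i_{d+1}^*,\dots,i_{D-1}^*} \in \F^{n_d}.\]
If any rank-1 CPD $T = v_0 \otimes \dots \otimes v_{D-1}$ exists, then letting $s_d := (v_d)_{i_d^*}$ so that $\alpha = \prod_d s_d$, one checks directly that $u_d = (\alpha / s_d)\, v_d$. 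Multiplying these relations across all $d$, the scalars $s_d$ cancel out of the outer product in exactly one way, so the only possible rank-1 tensor consistent with the chosen fibers is
\[\alpha^{1-D} \cdot u_0 \otimes u_1 \otimes \dots \otimes u_{D-1}.\]

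Finally I compute this candidate tensor and compare it entrywise with $T$: if they agree, output the CPD (placing the scalar $\alpha^{1-D}$ into any single $u_d$); if not, conclude that no rank-1 CPD exists. Each step --- locating a nonzero entry, reading off the $D$ fibers, forming the outer product, and comparing with $T$ --- runs in time polynomial in $\prod_d n_d$, yielding the claimed $O^*(1)$ bound. The main ``obstacle'' is genuinely mild: it is just the bookkeeping of the global scaling factor, which is pinned down uniquely by the requirement that the product of the selected fiber entries reproduce $\alpha$; once this is handled, correctness follows from the fact that any other rank-1 decomposition would necessarily produce the same fibers through $(i_0^*,\dots,i_{D-1}^*)$.
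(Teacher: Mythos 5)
Your proposal is correct, but it takes a different route from the paper. The paper proves this lemma by reusing its conciseness machinery: it row-reduces each unfolding $\unfold{T}{d}$ with invertible axis-operations (discarding zero slices), and then observes that a concise tensor has rank $\le 1$ exactly when every remaining side length is at most $1$, recovering the CPD by undoing those operations. You instead argue directly: locate a nonzero entry $\alpha$, read off the $D$ fibers $u_d$ through it, note that any rank-$1$ decomposition forces $T=\alpha^{1-D}\,u_0\otimes\cdots\otimes u_{D-1}$ (your scaling bookkeeping is right, since $u_d=(\alpha/s_d)v_d$ gives $u_0\otimes\cdots\otimes u_{D-1}=\alpha^{D-1}T$ whenever $T$ is rank $1$), and verify the unique candidate entrywise, handling $T=0$ trivially. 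Both arguments are valid and run in time polynomial in $\prod_d n_d$, i.e.\ $O^*(1)$. Your version is more elementary and self-contained, and it yields an explicit closed-form decomposition without Gaussian elimination; the paper's version is a one-liner only because it leans on the conciseness reduction it has already set up and needs elsewhere in Section 3.1, which keeps the overall presentation uniform but hides the work inside that reduction.
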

\begin{proof}
WLOG assume $T$ is concise; then $\rk(T)\le 1$ if and only if $\forall d:\ n_d\ne 1$.
\end{proof}

Now we describe the main algorithm of Theorem \ref{thm:main}.
Suppose a concise tensor $T$ has a rank-$R$ CPD $\cpdeval{A_0,\dots,A_{D-1}}$. By performing row reduction on $A_0$, there exists some $Q\in\GL{n_0}{\F}$ such that $QA_0=\rref{A_0}$. Since $T$ is concise, $\rk(A_0)=n_0$, so $\rref{A_0}=\MA{c|c}{I_{n_0} & C}$ for some $C$, up to simultaneous permutation of columns of all $A_d$.

Consider isolating the $I_{n_0}$ portion as follows:
\[Q\times_0 T
=\cpdeval{I_{n_0}, (A_1)_{:,:n_0},\dots,(A_{D-1})_{:,:n_0}}+\cpdeval{C, (A_1)_{:,n_0:},\dots,(A_{D-1})_{:,n_0:}}\]
\[\Leftrightarrow Q\times_0 T-\cpdeval{C, (A_1)_{:,n_0:},\dots,(A_{D-1})_{:,n_0:}}
=\cpdeval{I_{n_0}, (A_1)_{:,:n_0},\dots,(A_{D-1})_{:,:n_0}}\]
\[\Leftrightarrow \~{Q}\times_0 \~{T}=\cpdeval{I_{n_0}, (A_1)_{:,:n_0},\dots,(A_{D-1})_{:,:n_0}},\]

where $\~{Q}=\M{Q|C}$, $\~{T}_{:n_0,:,\cdots}=T$, and $\~{T}_{r,:,\cdots}=-\bigotimes_{d\ge 1} (A_d)_{:,r}$ for $n_0\le r<R$.

The right-hand side of the last equation is a tensor with shape $n_0\times\dots\times n_{D-1}$ whose $i$-th slice along axis 0 is $\otimes_{d\ge 1} (A_d)_{:,i}$, since the $i$-th row of $I_{n_0}$ has a one at column $i$ and zeros everywhere else.
Because each of these slices is a rank-1 tensor and uses a disjoint subset of unknowns, the last equation is equivalent to

\[\forall 0\le i<n_0:\ \rk((\~{Q}\times_0 \~{T})_{i,:,\cdots})\le 1\]

\[\Leftrightarrow \forall 0\le i<n_0:\ \rk(\~{Q}_{i,:}\times_0 \~{T})\le 1.\]

Suppose we knew $\~{T}$ and just had to solve for $\~{Q}$; then the following conditions on $\~{Q}$ are necessary and sufficient:
\begin{itemize}
    \item each row of $\~{Q}$ is in the set $S:=\brace{v\in\F^{1\times R} : \rk(v\times_0 \~{T})\le 1}$;
    \item $\~{Q}_{:,:n_0}$ is invertible;
\end{itemize}

There are two ways to solve for $\~{Q}$:
\begin{itemize}
    \item Construct $S$ directly (by iterating over all $v$) and check if the matrix $M:=\MA{c}{\vdots\\\hline v\\\hline \vdots}_{v\in S}$ satisfies $\rk(M_{:,:n_0})=n_0$.
    
    If so, we can use Gaussian elimination to extract $n_0$ many distinct row indices $i_0,\dots,i_{n_0-1}$ in $M$ such that $\MA{c}{M_{i_0,:n_0}\\\hline \vdots\\\hline M_{i_{n_0-1},:n_0}}$ is invertible,
    and then set $\~{Q}=\MA{c}{M_{i_0,:}\\\hline \vdots\\\hline M_{i_{n_0-1},:}}$.

    By Lemma \ref{rank1}, we can check for an arbitrary $v$ whether $\rk(v\times_0 \~{T})\le 1$ in polynomial time. Thus, constructing $S$ takes $O^*(|\F|^R)$ time. Finally, constructing $M$ and $i_0,\dots,i_{n_0-1}$ increases total running time by a polynomial multiplicative factor.

    \item Replace the condition $\rk(v\times_0 \~{T})\le 1$ with $v\times_0 \~{T}=\bigotimes_{d\ge 1} u_d$, where $u_d\in\F^{n_d},\ d\ge 1$ are new vector variables.

    If we fix $u_2,\dots,u_{D-1}$, the equation becomes linear with respect to $(v,u_1)$, so it is possible to construct a basis of these solutions $B(u_2,\dots,u_{D-1})\subseteq \F^R\times \F^{n_0}$ in polynomial time.
    To take advantage of this property, consider iterating over all tuples $(u_d)_{d\ge 2}$ and constructing the set $\~{S}:=\brace{v:\ (v,u_1)\in B(u_2,\dots,u_{D-1}),\ u_2\in\F^{n_2}, \dots,u_{D-1}\in\F^{D-1}}$; then $\span{\~{S}}=\span{S}$, so we can use $\~{S}$ in place of $S$ when constructing $M$.

    Because each basis $B(u_2,\dots,u_{D-1})$ contains $\le R$ many elements, constructing $\~{S}$ takes $O^*\paren{|\F|^{\sum_{d\ge 2} n_d}}$ time.
\end{itemize}

Once we have $\~{Q}$, we can determine $(A_d)_{:,:n_0}$ by solving
$(\~{Q}\times_0 \~{T})_{i,:,\cdots}=\otimes_{d\ge 1} (A_d)_{:,i}$ for $0\le i<n_0$ (finding several independent rank-1 CPDs)
and setting $A_0=Q^{-1}\MA{c|c}{I_{n_0} & C}$, where $(Q,C)=(\~{Q}_{:,:n_0}, \~{Q}_{:,n_0:})$.

Finally, since we had to fix $\~{T}$ during this procedure, we just repeat with every possible $\~{T}$, which is equivalent to fixing all possible assignments of $(A_d)_{:,n_0:},\ d\ge 1$.

To analyze total runtime,
we fix $\~{T}$ to $|\F|^{(R-n_0)\sum_{d\ge 1} n_d}$ many assignments,
then solve for $\~{Q}$ using the faster of the two methods \footnote{Deciding which method of solving $\~{Q}$ is asymptotically faster can be done by comparing the quantities $R$ and $\sum_{d\ge 2} n_d$.} in $O^*\paren{|\F|^{\min\paren{R,\ \sum_{d\ge 2} n_d}}}$ time. Thus, the total runtime is
\[\mainResult.\]

With a small modification, the space complexity is $O^*(1)$, as we can generate each $v$ in $S$ (or $\~{S}$) on the fly and append $v$ to $\~{Q}$ only if doing so increases $\rk(\~{Q}_{:,:n_0})$.

To minimize the time complexity, we can permute the axes of $T$ beforehand so that $n_0\ge\dots\ge n_{D-1}$.
In addition, we can speed up by a multiplicative factor of $(R-n_0)!$, since $\~{T}$ is invariant under simultaneous permutation of columns of $(A_d)_{:,n_0:},\ d\ge 1$;
for example, we can force the vector tuples $\paren{(A_d)_{:,r}}_{d\ge 1}$ to be lexicographically sorted w.r.t. $n_0\le r<R$.

Algorithm \ref{alg:main} shows pseudocode for implementing this procedure; the matrix $\~{Q}=\MA{c|c}{Q&C}$ is represented as the pair $(Q,C)$, and matrix slices $(A_d)_{:,:n_0}, (A_d)_{:,n_0:}$ are represented as $X_d, Y_d$, respectively.

In \cite{yang25}, we describe a variant of Algorithm \ref{alg:main} that improves the exponent in the runtime shown above by roughly an $O(R)$ additive amount. However, this variant is incompatible with the aforementioned lex-sorting optimization that saves a multiplicative factor of $(R-n_0)!$, making it slower overall, so we omit it here.

\begin{algorithm}
    \caption{Search algorithm for Theorem \ref{thm:main}}
    \label{alg:main}
    \begin{algorithmic}[1]
        \Require{
            $T\in\F^{n_0\times\dots\times n_{D-1}}$, $R\ge 0$;
            $T$ concise, $n_0\ge \dots\ge n_{D-1}$
        }
        \Ensure{
            returns a rank-$\le R$ CPD of $T$ as a sequence of factor matrices, if one exists;
            else, returns null
        }
        \Function{search}{$T, R$}
            \If{$R<n_0$}
                \State \Return{\textbf{null}}
            \EndIf
            \For{$1\le d<D,\ Y_d\in\F^{n_d\times (R-n_0)}$}
                \Let{$\Delta$}{\Call{test}{$T,R,(Y_d)_{1\le d<D}$}}
                \If{$\Delta$ \textrm{not null}}
                    \State \Return{$\Delta$}
                \EndIf
            \EndFor
            \State \Return{\textbf{null}}
        \EndFunction
        
        \Function{test}{$T,R,(Y_d)_{1\le d<D}$}
            \Let{$(Q,C)$}{$([],[])$}
            \For{$(v,c)\in$ \Call{good\_pairs}{$T,R,(Y_d)_{1\le d<D}$}}
                \If{$\rk\paren{\MA{c}{Q\\\hline v}}>\rk\paren{Q}$}
                    \Let{$(Q,C)$}{$\paren{\MA{c}{Q\\\hline v}, \MA{c}{C\\\hline c}}$}
                \EndIf
            \EndFor
            \If{$Q$ has $n_0$ many rows}
                \Let{$(X_d)_{1\le d<D}$}{$([])_{1\le d<D}$}
                \For{$0\le i<n_0$}
                    \Let{$T_\square$}{$Q_{i,:}\times_0 T - \cpdeval{C_{i,:},Y_1,\dots,Y_{D-1}}$}
                    \Let{$(u_d)_{1\le d<D}$}{\Call{rank1}{$T_\square$}} \Comment{obtainable with Lemma \ref{rank1}}
                    \Let{$(X_d)_{1\le d<D}$}{$\paren{\MA{c|c}{X_d&u_d}}_{1\le d<D}$}
                \EndFor
                \State \textbf{return} $\Big(Q^{-1}\MA{c|c}{I_{n_0}&C}, \MA{c|c}{X_1&Y_1},\dots,\MA{c|c}{X_{D-1}&Y_{D-1}}\Big)$
            \EndIf
            \State \Return{\textbf{null}}
        \EndFunction
        \Function{good\_pairs}{$T,R,(Y_d)_{1\le d<D}$}
            \If{$R \le \sum_{d\ge 2} n_d$}
                \For{$v\in \F^{1\times n_0},\ c\in \F^{1\times (R-n_0)}$}
                    \If{$\rk\paren{v\times_0 T - \cpdeval{c,Y_1,\dots,Y_{D-1}}}\le 1$}
                        \Yield{$(v,c)$} \Comment{emit output without storing in memory}
                    \EndIf
                \EndFor
            \Else
                \For{$2\le d<D,\ u_d\in\F^{n_d}$}
                    \State $f\gets\Big(\MA{c|c|c}{v&c&u_1}\mapsto v\times_0 T - \cpdeval{c,Y_1,\dots,Y_{D-1}}-u_1\otimes\paren{\bigotimes_{2\le d<D} u_d}\Big)$
                    \Let{$M$}{matrix s.t. $\rowspan{M}=\ker{f}$} \Comment{obtainable with Gaussian elimination}
                    \For{$0\le i<(\# \textrm{ rows in } M)$}
                        \Yield{$(M_{i,:n_0},M_{i,n_0:R})$}
                    \EndFor
                \EndFor
            \EndIf
        \EndFunction
    \end{algorithmic}
\end{algorithm}

\subsubsection{Pruners}
\label{sec:pruners}
Algorithm \ref{alg:main} can be implemented in a recursive depth-first search fashion, by fixing each tuple of columns $((A_d)_{:,r})_{d\ge 1}$ for each $r$ at a time, e.g., $r=n_0,\dots,R-1$.
By examining more closely what we fix these variables to, we can prune some subtrees of the search space and achieve significant practical speedups.

Suppose we have currently fixed $((A_d)_{:,n_0:r})_{d\ge 1}$ for some $n_0\le r\le R$: we can take the condition $\paren{\forall i: \rk\paren{(\~{Q}_{i,:}\times_0 \~{T})\le 1}}$ that our algorithm would check if all $((A_d)_{:,n_0:})_{d\ge 1}$ were fixed, isolate the fixed variables from the indeterminate variables as much as possible, then weaken the condition so that only the fixed variables remain.
The remaining expression is what we call a ``pruning condition" or ``pruner".
We will call a pruner ``practical" if its running time is $O^*(|\F|^{O(R+\sum_d n_d)})$, i.e. the exponent is at most linear in the rank threshold $R$ and shape lengths $n_d$ of the tensor $T$.

We avoid applying pruners in base cases $r=R$, since Algorithm \ref{alg:main} already resolves each of those search states in $O^*(|\F|^R)$ time.

In this section, we list three pruners, all of which we use in our Java implementation of Algorithm \ref{alg:main}. All three pruners are practical if $D=3$, which is the case we are most interested in.

\begin{lemma}[``Rref-pruning"]
\label{rref-prune}
Let $T\in\F^{n_0\times\dots\times n_{D-1}}$ be a concise tensor, and suppose we have fixed thte submatrices $((A_d)_{:,n_0:r})_{d\ge 1}$ for some $n_0\le r\le R$. If there exists a rank-$R$ CPD $T=\cpdeval{A_0,\dots,A_{D-1}}$ that is consistent with what we have fixed, then the set 
\[S':=\brace{v\in\F^{1\times r}:\ \rk(v\times_0 T')\le R-r+1},\]
where $T'$ is the augmented tensor such that $T'_{:n_0,:,\cdots}=T$ and $T'_{r',:,\cdots}=-\bigotimes_{d\ge 1} (A_d)_{:,r'}$ for $n_0\le r'<r$,
contains $n_0$ many vectors $v_0,\dots,v_{n_0-1}$ such that the matrix $\MA{c}{\vdots\\\hline (v_i)_{:,:n_0}\\\hline \vdots}_i$ has rank $n_0$.
\end{lemma}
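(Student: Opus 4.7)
The plan is to mimic the construction from the main CPD algorithm (the one preceding Algorithm \ref{alg:main}) but only ``fold in'' the columns $n_0:r$ of the fixed $A_d$'s into the augmented tensor, leaving the remaining columns $r:R$ as leftover rank-1 contributions. Since the hypothesis guarantees the existence of a completion to a full rank-$R$ CPD, we can recycle the invertible matrix $\~Q$ from that construction to produce the $n_0$ rows of $S'$.

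Concretely, I would start by applying the full-CPD argument: let $A_0,\dots,A_{D-1}$ be any completion, pick $Q\in\GL{n_0}{\F}$ with $QA_0=[I_{n_0}\mid \hat C]$ after a suitable column permutation (where $\hat C\in\F^{n_0\times(R-n_0)}$), and form $\~Q=[Q\mid \hat C]\in\F^{n_0\times R}$ together with the fully augmented $\~T$ (with $R-n_0$ extra slices from \emph{all} columns $n_0:R$). As derived in Section \ref{sec:cpd}, each row satisfies $\~Q_{i,:}\times_0\~T=\bigotimes_{d\ge 1}(A_d)_{:,i}$, a rank-$1$ tensor.

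Next, I compare this full identity to the partial augmentation $T'$ used in the lemma, which only absorbs the $r-n_0$ slices coming from the \emph{fixed} columns $n_0:r$. Setting $v_i:=\~Q_{i,:r}\in\F^{1\times r}$ for $0\le i<n_0$ and splitting the sum $\~Q_{i,:}\times_0\~T$ according to the indices $0\le r'<n_0$, $n_0\le r'<r$, and $r\le r'<R$, one obtains
\[v_i\times_0 T' \;=\; \bigotimes_{d\ge 1}(A_d)_{:,i} \;+\; \sum_{r'=r}^{R-1} \~Q_{i,r'}\bigotimes_{d\ge 1}(A_d)_{:,r'},\]
an explicit sum of $1+(R-r)$ rank-$1$ tensors, hence $v_i\in S'$. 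Finally, the matrix formed by stacking the truncations $(v_i)_{:,:n_0}$ is exactly $\~Q_{:,:n_0}=Q$, which lies in $\GL{n_0}{\F}$ and therefore has rank $n_0$, yielding the required $n_0$ vectors.

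I do not anticipate a hard step here: once the correspondence $v_i=\~Q_{i,:r}$ is set up, the rank bound is just bookkeeping of which rank-$1$ summands are folded into $T'$ versus left outside, and the full-rank property of the truncations is immediate from $\rref{A_0}$ having an $I_{n_0}$ block. The only mild care needed is tracking the column permutation applied to $A_0$ and confirming it leaves $r$ (and the definition of $T'$ via the fixed columns $n_0:r$) untouched, which can be arranged since the pruner is invoked before those later columns are finalized.
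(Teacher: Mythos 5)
Your argument is essentially the paper's own proof made explicit. The paper observes that any $v$ with $\rk(v\times_0 \~{T})\le 1$ truncates into $S'$, because $v_{:,r:}\times_0 \~{T}_{r:,:,\cdots}$ is a combination of the $R-r$ rank-one slices $-\bigotimes_{d\ge 1}(A_d)_{:,r'}$, $r\le r'<R$, and then takes the $n_0$ good vectors supplied by the Section \ref{sec:cpd} construction; you simply instantiate those vectors as the rows $v_i=\~{Q}_{i,:r}$ and expand $v_i\times_0 T'$ into the same $1+(R-r)$ rank-one summands, with $\~{Q}_{:,:n_0}=Q\in\GL{n_0}{\F}$ giving the rank-$n_0$ condition --- the same bookkeeping, just written constructively rather than as ``$S\subseteq S'$''. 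The one weak spot is your closing remark about the column permutation: the permutation needed to bring $A_0$ to the form $\MA{c|c}{I_{n_0}&\hat{C}}$ may have to move one of the \emph{already-fixed} columns $n_0\le r'<r$ into the leading block (this happens exactly when the given completion has $(A_0)_{:,:n_0}$ singular and the unfixed columns of $A_0$ do not span $\F^{n_0}$), and ``the pruner is invoked before those later columns are finalized'' does not address that situation. However, the paper's proof makes the same tacit assumption when it asserts, ``using similar reasoning as in Section \ref{sec:cpd}'', that a consistent CPD yields $n_0$ suitable vectors in $S$ for the augmented tensor built in the \emph{given} column order; so this is a shared wrinkle in how the lemma's hypothesis is to be read (a completion in the algorithm's normal form, i.e.\ with invertible leading $n_0\times n_0$ block of $A_0$, which is what the DFS actually enumerates) rather than a gap specific to your write-up.
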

\begin{proof}
Let $\~{T}$ be the augmented tensor such that $\~{T}_{:n_0,:,\cdots}=T$ and $\~{T}_{r',:,\cdots}=-\bigotimes_{d\ge 1} (A_d)_{:,r'}$ for $n_0\le r'<R$.

Using similar reasoning as in Section \ref{sec:cpd}, a CPD exists if and only $S:=\brace{v\in\F^{1\times R}:\ \rk(v\times_0 \~{T})\le 1}$ contains $n_0$ many vectors $v_0,\dots,v_{n_0-1}$ such that the matrix $\MA{c}{\vdots\\\hline (v_i)_{:,:n_0}\\\hline \vdots}_i$ has rank $n_0$.

We want to separate the fixed portion $(A_d)_{:,n_0:r}$ of each factor matrix from the unfixed portion $(A_d)_{:,r:}$. To do this, we rewrite $v\times_0 \~{T}$ as $v_{:,:r}\times_0 \~{T}_{:r,:,\cdots} + v_{:,r:}\times_0 \~{T}_{r:,:,\cdots}$,
then note that 
having $\rk(v\times_0 \~{T})\le 1$ implies $\rk(v_{:,:r}\times_0 \~{T}_{:r,:,\cdots})\le R-r+1$.
Finally, since $T'=\~{T}_{:r,:,\cdots}$, we have $S\subseteq S'$.

Since $r\ge n_0$, any satisfying subset of vectors $\brace{v_i}_i\subseteq S$ would yield a satisfying subset $\brace{(v_i)_{:,:r}}_i \in S'$, since $\paren{(v_i)_{:,:r}}_{:,:n_0}=(v_i)_{:,:n_0}$.
\end{proof}

We call Lemma \ref{rref-prune} ``rref-pruning", since it indirectly relies on the fact that the factor matrix $A_0$ has a reduced row-echelon form. When $D=3$, rref-pruning can be implemented to run in $O^*(|\F|^r)$ time and $O^*(1)$ space, since constructing $S'$ consists of evaluating the ranks of many matrices, and we can generate elements of $S'$ on the fly in a similar manner as in Algorithm \ref{alg:main}.

The next pruner relies on the following result previously discovered by \cite{laskowski}:
\begin{theorem}[\cite{laskowski}]
\label{laskowski}
If a concise tensor $T\in\F^{n_0\times\dots\times n_{D-1}}$ has rank $\le R$, then $s:=\sum_{v\in\F^{1\times n_0}} \rk(v\times_0 T)$ is at most $R(|\F|^{n_0}-|\F|^{n_0-1})$.
\end{theorem}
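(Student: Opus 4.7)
The plan is to bound the sum by exploiting the fact that applying a single row vector on axis 0 collapses a CPD into a sum of rank-1 tensors whose support depends only on $A_0$. Start by fixing some rank-$R$ CPD $T=\cpdeval{A_0,\dots,A_{D-1}}$ (which exists since $\rk(T)\le R$) and use the basic identity
\[v\times_0 T=\cpdeval{vA_0,A_1,\dots,A_{D-1}}=\sum_{0\le r<R}(vA_0)_r\bigotimes_{d\ge 1}(A_d)_{:,r}.\]
Any summand with $(vA_0)_r=0$ contributes the zero tensor, so $\rk(v\times_0 T)\le\#\{r:v(A_0)_{:,r}\ne 0\}$.

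Next, I would swap the order of summation:
\[s=\sum_{v\in\F^{1\times n_0}}\rk(v\times_0 T)\le\sum_{v}\#\{r:v(A_0)_{:,r}\ne 0\}=\sum_{0\le r<R}\#\{v\in\F^{1\times n_0}:v(A_0)_{:,r}\ne 0\}.\]
For a fixed $r$, if $(A_0)_{:,r}=0$ the inner count is $0$; otherwise the equation $v(A_0)_{:,r}=0$ cuts out a codimension-$1$ subspace of $\F^{1\times n_0}$, whose complement has exactly $|\F|^{n_0}-|\F|^{n_0-1}$ elements. In either case the term is at most $|\F|^{n_0}-|\F|^{n_0-1}$, and summing over the $R$ values of $r$ yields the desired bound $s\le R(|\F|^{n_0}-|\F|^{n_0-1})$.

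There is no real obstacle here; the only subtlety is that conciseness of $T$ is not actually needed to get this inequality (the bound uses only the existence of \emph{some} rank-$R$ decomposition), and one should note that columns of $A_0$ which happen to be zero only help the bound. The argument is essentially a double-counting of pairs $(v,r)$ with $v(A_0)_{:,r}\ne 0$, combined with the fact that the rank-1 summands of the CPD are indexed by $\{0,\dots,R-1\}$.
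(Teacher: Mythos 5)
Your proposal is correct and follows essentially the same route as the paper's proof: bound $\rk(v\times_0 T)$ by the number of nonzero entries of $vA_0$, swap the order of summation over $v$ and $r$, and count the vectors $v$ with $v(A_0)_{:,r}\ne 0$ via the (co)dimension of the kernel of $v\mapsto v(A_0)_{:,r}$, which is exactly the paper's left-nullspace/rank-nullity step. Your remark that conciseness is not needed for the inequality is also consistent with the paper's argument.
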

\begin{proof}
Let $\cpdeval{A_0,\dots,A_{D-1}}$ be a rank-$R$ CPD of $T$.
Then $v\times_0 T=\cpdeval{vA_0,A_1,\dots,A_{D-1}}$,
so $\rk(v\times_0 T)$ is at most the number of nonzeros in $vA_0$, which we denote as $\#_{\ne 0}(vA_0)$.
We can upper-bound the sum $s$ with

\[
s\le \sum_{v\in\F^{1\times n_0}} \#_{\ne 0}(vA_0)
\]
\[
=\sum_{v\in\F^{1\times n_0},\ 0\le r<R} 1[v(A_0)_{:,r}\ne 0]
\]
\[
=\sum_{0\le r<R} |\F|^{n_0} - \abs{\textrm{left\_nullspace}((A_0)_{:,r})};
\]

since $\rk\paren{(A_0)_{:,r}}\le 1$, the rank-nullity theorem implies $\dim{\textrm{left\_nullspace}((A_0)_{:,r})}\ge n_0-1$, so
\[
s\le \sum_{0\le r<R} |\F|^{n_0} - |\F|^{n_0-1}
=R(|\F|^{n_0} - |\F|^{n_0-1}).
\]
\end{proof}

\begin{lemma}[``Lask-pruning"]
\label{lask-prune}
Let $T\in\F^{n_0\times\dots\times n_{D-1}}$ be a concise tensor, and suppose we have fixed $((A_d)_{i,:})_{d\ge 1}$. If there exists a rank-$R$ CPD $T=\cpdeval{A_0,\dots,A_{D-1}}$ that is consistent with what we have fixed, then
\[
\sum_{v\in\F^{1\times n_0}} \min_{w\in\F^{1\times (r-n_0)}} \rk\paren{v\times_0 T \ - \ \cpdeval{w, (A_1)_{:,n_0:r},\dots}}
\le (R-(r-n_0))(|\F|^{n_0}-|\F|^{n_0-1}).
\]
\end{lemma}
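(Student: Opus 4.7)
The plan is to adapt the proof of Theorem \ref{laskowski} to an auxiliary tensor that encodes the ``leftover'' part of any hypothetical consistent CPD. Suppose $T = \cpdeval{A_0,\dots,A_{D-1}}$ is a rank-$R$ CPD consistent with the fixed data $((A_d)_{:,n_0:r})_{d\ge 1}$, and define
\[T' := T - \cpdeval{(A_0)_{:,n_0:r}, (A_1)_{:,n_0:r},\dots,(A_{D-1})_{:,n_0:r}}.\]
Then $T'$ has an explicit rank-$(R-(r-n_0))$ CPD, obtained by deleting columns $n_0,\dots,r-1$ from each $A_d$.

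Next, for each $v \in \F^{1\times n_0}$, I would plug in the specific witness $w^\ast := v(A_0)_{:,n_0:r}$ into the $\min_w$. Linearity of the axis-$0$ contraction then gives
\[v \times_0 T - \cpdeval{w^\ast, (A_1)_{:,n_0:r},\dots,(A_{D-1})_{:,n_0:r}} = v \times_0 T',\]
so $\min_w \rk\paren{v \times_0 T - \cpdeval{w, (A_1)_{:,n_0:r},\dots}} \le \rk(v \times_0 T')$. Summing over $v$, it suffices to bound $\sum_v \rk(v \times_0 T')$.

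Finally, I would invoke the argument of Theorem \ref{laskowski}: its proof only uses the existence of a CPD of the specified rank (to dominate $\rk(v \times_0 T')$ by the number of nonzeros in $v$ applied to the first factor matrix), and does not rely on conciseness. Applying it to $T'$ with rank $R-(r-n_0)$ gives
\[\sum_v \rk(v \times_0 T') \le (R-(r-n_0))(|\F|^{n_0}-|\F|^{n_0-1}),\]
which is exactly the claimed bound. The only delicate point --- really the only thing to verify --- is that the earlier argument does not secretly require $T'$ to be concise; a quick re-reading of the proof of Theorem \ref{laskowski} confirms it does not, so no serious obstacle arises. Conceptually, the pruner exploits the fact that although $(A_0)_{:,n_0:r}$ is unknown, we can ``hide'' it inside the free variable $w$ and still extract a global constraint from the existence of the unfixed portion of the CPD.
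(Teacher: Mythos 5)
Your proof is correct and follows essentially the same route as the paper: subtract the fixed columns from a hypothetical consistent CPD to get a tensor of rank at most $R-(r-n_0)$, apply Theorem \ref{laskowski} to it, and observe that the specific vector $v(A_0)_{:,n_0:r}$ can be dominated by the minimum over a free $w\in\F^{1\times(r-n_0)}$. Your explicit check that the proof of Theorem \ref{laskowski} never uses conciseness is a nice touch, since the paper applies that theorem to the possibly non-concise shifted tensor without comment.
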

\begin{proof}
There must exist some assignment of $(A_0)_{:,n_0:r}$ such that $\rk\paren{T-\cpdeval{\dots,(A_d)_{:,n_0:r},\dots_d}}\le R-(r-n_0)$. Applying Theorem \ref{laskowski} directly yields

\[
s:=\min_{(A_0)_{:,n_0:r}\in\F^{n_0\times (r-n_0)}} \sum_{v\in\F^{1\times n_0}} \rk\paren{v\times_0 \paren{T-\cpdeval{\dots,(A_d)_{:,n_0:r},\dots_d}}}\le (R-(r-n_0))(|\F|^{n_0}-|\F|^{n_0-1}).
\]

Iterating over all possible $(A_0)_{:,n_0:r}$ would be too slow, so we look for a lower bound of the left-hand side that is easier to compute.
Expanding $v\times_0 \paren{T-\cpdeval{\dots,(A_d)_{:,n_0:r},\dots_d}}$ yields

\[
v\times_0 \paren{T-\cpdeval{\dots,(A_d)_{:,n_0:r},\dots_d}}
=v\times_0 T \ - \ \cpdeval{v(A_0)_{:,n_0:r}, (A_1)_{:,n_0:r},\dots};
\]

notice that $v(A_0)_{:,n_0:r}$ is a $1\times (r-n_0)$ vector, so it does not contain too many possible values. Consider replacing this with an arbitrary $1\times (r-n_0)$ vector that has complete freedom relative to $v$ and $(A_0)_{:,n_0:r}$; then
\[
s\ge \sum_{v\in\F^{1\times n_0}} \min_{w\in\F^{1\times (r-n_0)}} \rk\paren{v\times_0 T \ - \ \cpdeval{w, (A_1)_{:,n_0:r},\dots}}.
\]
\end{proof}

We call Lemma \ref{lask-prune} ``Laskowski-pruning" or ``lask-pruning" for short, named after the author of \cite{laskowski}. Similarly to rref-pruning, lask-pruning can be done in $O^*(|\F|^r)$ time and $O^*(1)$ space in the case $D=3$.

Both of the aforementioned pruners are ``negative", meaning they might prove whether a given search state is infeasible. In contrast, our third pruner is ``positive", meaning it might find a solution to a search state much faster than Algorithm \ref{alg:main} would by itself. We call this pruner ``rref-heuristic":
\begin{itemize}
    \item Suppose $T\in\F^{n_0\times\dots\times n_{D-1}}$ is the concise target tensor and we have fixed $((A_d)_{:,n_0:r})_{d\ge 1}$ for some $n_0\le r<R$;
    \item Let $T'$ be the augmented tensor such that $T'_{:n_0,:,\cdots}=T$ and $T'_{r',:,\cdots}=-\bigotimes_{d\ge 1} (A_d)_{:,r'}$ for $n_0\le r'<r$;
    \item Construct a list $L$ of all row vectors $v\in \F^{1\times r}$, sorted in nondecreasing order by $\rk(v\times_0 T')$;
    \item Initialize $\~{Q}$ to the $0\times n_0$ matrix;
    \item For each $v$ in $L$:
    add $v$ as a new row to $\~{Q}$ if doing so increases $\rk(\~{Q}_{:,:n_0})$;
    
    Afterwards, $\~{Q}$ is guaranteed to have $n_0$ many rows.
    
    \item For each $i$, compute a minimum-rank CPD of $\~{Q}_{i,:}\times_0 T'=\cpdeval{B^{(i)}_1,\dots,B^{(i)}_{D-1}}$;

    doing so allows us to express $\~{Q}\times_0 T'$ as
    $\sum_i e_i \otimes \cpdeval{B^{(i)}_1,\dots,B^{(i)}_{D-1}}$,
    where $e_i$ is 1-hot at index $i$ and $r_i:=\rk(\~{Q}_{i,:}\times_0 T')$
    
    \item Construct the CPD \[T
    =\paren{\sum_i (\~{Q}_{:,:n_0})^{-1}_{:,i}\otimes \cpdeval{B^{(i)}_1,\dots,B^{(i)}_{D-1}}} \
    + \ 
    \cpdeval{(\~{Q}_{:,:n_0})^{-1}\~{Q}_{:,n_0:},\ (A_1)_{:,n_0:r},\dots,(A_{D-1})_{:,n_0:r}};\]
\end{itemize}

The resulting CPD has rank $(r-n_0) + \sum_i \rk(\~{Q}_{i,:}\times_0 T)$.
If $D=3$, finding a min-rank CPD of $\~{Q}_{i,:}\times_0 T'$ would be equivalent to matrix rank factorization, which can be computed in polynomial time using Gaussian elimination, so rref-heuristic would run in $O^*(|\F|^r)$ time.

We show some examples of rref-pruning, lask-pruning, and rref-heuristic resolving CPD search states. For simplicity, we only include states where $r=n_0$, i.e. where we have not fixed any variables.

\begin{example}
Both rref-pruning and lask-pruning rule out the statement $\rk(T)\le 2$ for
$T=\M{\M{&1\\1}&\M{1\\&\phantom{0}}}$,
showing that $\rk(T)\ge 3$ over any finite field.
Rref-heuristic matches this lower bound.
\end{example}

\begin{example}
\label{addition-mod-2-tensor}
Both rref-pruning and lask-pruning prove that the tensor $T=\M{\M{&1\\1}&\M{1\\&1}}$ has rank $\ge \cases{3 & \operatorname{char}\ \F = 2 \\ 2 & \textrm{else}}$.
Rref-heuristic matches this lower bound.
\end{example}

\begin{example}
For any finite field of size $\ge n$, rref-heuristic can find a rank-$(2n-1)$ CPD of
the $(2n-1)\times n\times n$ tensor \[T=\M{
\M{1\\\phantom{0}&&\phantom{0}\\\phantom{0}&&\phantom{0}}
& \M{&1\\1\\\phantom{0}&&\phantom{0}}
& \dots
& \underbrace{\M{&&1\\&\iddots\\1}}_n
& \dots
& \M{\phantom{0}&&\phantom{0}\\&&1\\&1}
& \M{\phantom{0}&&\phantom{0}\\\phantom{0}&&\phantom{0}\\&&1}
}\]
corresponding to multiplication of $(n-1)$-th degree polynomials.
This is because for any $x\in\F$, $\M{1&\dots&x^{2n-2}}\times_0 T=\M{x^{i+j}}_{i,j}$ has rank 1.

On the other hand, $\rk(T)\ge 2n-1$, since the axis-0 slices of $T$ are linearly independent.
\end{example}

\begin{example}
Rref-pruning proves that the $n\times n\times n$ tensor
\[T=\M{
\M{1\\\phantom{0}&&\phantom{0}\\\phantom{0}&&\phantom{0}}
& \M{&1\\1\\\phantom{0}&&\phantom{0}}
& \dots
& \underbrace{\M{&&1\\&\iddots\\1}}_n
}\]
has rank $\ge 2n-1$ over any field.
To see why, set the rank threshold to $R=2n-2$; then we need $n$ many linearly independent row vectors $v$ such that $\rk(v\times_0 T)\le R-n+1=n-1$. But this forces $v_{n-1}=0$, which can be seen by considering matrix determinants, so $v$ is confined to a $(n-1)$-dimensional subspace.
\end{example}

\begin{remark}
Neither rref-pruning or lask-pruning is strictly stronger than the other. The below examples are over the field $\F_2$:
\begin{itemize}
    \item For $T_1=\M{\M{1\\&\phantom{0}\\&&\phantom{0}}&\M{1&1\\&\phantom{0}\\&&\phantom{0}}&\M{&&1\\&1\\1}}$, rref-pruning proves $\rk(T_1)\ge 5$ whereas lask-pruning proves $\rk(T_1)\ge 4$.
    \item For $T_2=\M{\M{1\\&\phantom{0}\\&&\phantom{0}}&\M{1\\&1\\&&1}&\M{&&1\\&&1\\1&1}}$, rref-pruning proves $\rk(T_2)\ge 4$ whereas lask-pruning proves $\rk(T_2)\ge 5$.
\end{itemize}

These results hold even if we permute the axes of $T_1,\ T_2$ arbitrarily before applying the pruners.
These examples were found by computer search.
\end{remark}

Other pruners are possible, some of which we list here for the case $r=n_0$; but we did not use them, either because they were too slow or because they did not seem to help in finding CPDs for some hand-picked tensors.
We leave it as an exercise for the reader to prove their correctness and to extend them to general $r$.

In all examples, $T$ is concise and has shape $n_0\times\dots\times n_{D-1}$, and each pruner is expressed as a condition that would have to be satisfied if $\rk(T)\le R$.

\begin{lemma}[``$k$-th order rref-pruning"]
\label{k-th-rref-pruning}
For any $1\le k\le n_0$, the set
\[
\brace{v_0\wedge\dots\wedge v_{k-1} : \ v_0,\dots,v_{k-1}\in\F^{1\times n_0},\ \rk\paren{\MA{c}{v_0\\\hline \vdots\\\hline v_{k-1}}\times_0 T}\le R-n_0+k}\subseteq \F^{\overbrace{n_0\times\dots\times n_0}^k}
\]
spans all of $\F^{\overbrace{n_0\times\dots\times n_0}^k}$,
where $v_0\wedge\dots\wedge v_{k-1} := \sum_{\sigma\in\operatorname{Sym}(k)} \operatorname{sgn}(\sigma) \bigotimes_{0\le i<k} v_{\sigma(i)}$ is (isomorphic to) the wedge product in exterior algebra.
\end{lemma}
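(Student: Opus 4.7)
The plan is to derive both the rank constraint and the spanning property from a single normalization of the factor matrix $A_0$, generalizing the $k=1$ argument of Lemma \ref{rref-prune}. Suppose $T=\cpdeval{A_0,\dots,A_{D-1}}$ is a rank-$R$ CPD; since $T$ is concise, $\rk(A_0)=n_0$, so after permuting the CPD summands (which preserves the tensor) we can find $Q\in\GL{n_0}{\F}$ with $QA_0=\MA{c|c}{I_{n_0}&C}$ for some $C\in\F^{n_0\times(R-n_0)}$. This is the same reduction used throughout Section \ref{sec:cpd}, and the claimed wedge-product family will be built from the rows of $Q$.

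Fix any $k$-subset $I=\{i_0<\dots<i_{k-1}\}\subseteq\{0,\dots,n_0-1\}$, let $E_I\in\F^{k\times n_0}$ be the matrix whose $j$-th row is the standard basis row $e_{i_j}$, and set $v_j^{(I)}:=e_{i_j}Q$, so stacking these vectors produces $E_IQ$. The key computation is
\[(E_IQ)\times_0 T=\cpdeval{E_IQA_0,A_1,\dots,A_{D-1}}=\cpdeval{\MA{c|c}{E_I&E_IC},A_1,\dots,A_{D-1}}.\]
The columns of $E_I$ indexed by $I$ form $I_k$ and the remaining $n_0-k$ columns are zero, so $\MA{c|c}{E_I&E_IC}$ has at most $k+(R-n_0)=R-n_0+k$ nonzero columns. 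Each nonzero column contributes at most one rank-$1$ summand to the CPD on the right, so $\rk((E_IQ)\times_0 T)\le R-n_0+k$, and the tuple $(v_0^{(I)},\dots,v_{k-1}^{(I)})$ satisfies the membership condition defining the lemma's set.

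Finally, I would show that as $I$ ranges over all $k$-subsets, the resulting wedge products span what one should regard as the ambient space. Strictly speaking, every wedge product $v_0\wedge\dots\wedge v_{k-1}$ is antisymmetric in its $k$ tensor factors, so its span lies in the $\binom{n_0}{k}$-dimensional antisymmetric subspace of $\F^{n_0\times\dots\times n_0}$; the lemma's assertion is really that the set spans this subspace, equivalently the $k$-th exterior power of $\F^{1\times n_0}$. Because $Q$ is invertible, the induced map on the $k$-th exterior power sends the standard exterior basis $\{e_{i_0}\wedge\dots\wedge e_{i_{k-1}}\}_I$ to exactly our collection $\{(e_{i_0}Q)\wedge\dots\wedge(e_{i_{k-1}}Q)\}_I$, which is therefore itself a basis. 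The main substantive step is the column-count rank bound in the middle paragraph; the spanning conclusion is a routine fact from multilinear algebra and should present no real obstacle.
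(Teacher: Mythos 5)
Your proof is correct: the paper leaves Lemma \ref{k-th-rref-pruning} as an exercise rather than proving it, and your argument --- taking the $k$ rows of $Q$ indexed by a subset $I$ (where $QA_0=\MA{c|c}{I_{n_0}&C}$ up to column permutation), observing that $\MA{c|c}{E_I & E_I C}$ has at most $k+(R-n_0)$ nonzero columns so the contracted tensor has rank at most $R-n_0+k$, and then using invertibility of $Q$ on the $k$-th exterior power to get spanning --- is exactly the natural generalization of the paper's own $k=1$ reasoning in Lemma \ref{rref-prune} and Section \ref{sec:cpd}. Your caveat is also well taken: for $k\ge 2$ the wedge products lie in the antisymmetric subspace, so the lemma's conclusion must be read as spanning that $\binom{n_0}{k}$-dimensional subspace (equivalently, spanning the same space as the unrestricted set of wedge products), which is precisely what your argument establishes and is how the pruner is used.
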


\begin{lemma}[``frequency-pruning"]
For any $1\le k\le n_0$, $\abs{\brace{
v\in\F^{1\times n_0}:\ 
\rk(v\times_0 T)\le R-n_0+k
}}\ge |\F|^k$.
\end{lemma}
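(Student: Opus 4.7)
The plan is to mimic the argument used in the proof of Theorem \ref{laskowski} (by \cite{laskowski}), exploiting the fact that $\rk(v\times_0 T)$ is bounded above by the number of nonzero entries in $vA_0$, where $A_0$ is the first factor matrix of any rank-$R$ CPD $T=\cpdeval{A_0,\dots,A_{D-1}}$. This follows from the identity $v\times_0 T = \cpdeval{vA_0, A_1,\dots,A_{D-1}}$ and the observation that columns with a zero scalar contribute nothing to the CPD.

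Given this, it suffices to produce $\ge |\F|^k$ row vectors $v\in\F^{1\times n_0}$ for which $vA_0$ has at most $R-n_0+k$ nonzero entries. Equivalently, we want $vA_0$ to have \emph{at least} $n_0-k$ zero entries, at some fixed collection of column positions of our choice.

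Accordingly, I would pick any subset $S\subseteq\{0,\dots,R-1\}$ of size $n_0-k$ (possible since $R\ge n_0$ by Lemma 2, applied to the concise tensor $T$), and consider the linear subspace
\[
V_S := \brace{v\in\F^{1\times n_0}\ :\ v(A_0)_{:,r}=0\ \forall r\in S}.
\]
Because $(A_0)_{:,S}$ has at most $n_0-k$ columns, its column rank is at most $n_0-k$, so by rank-nullity its left nullspace $V_S$ has dimension at least $n_0-(n_0-k)=k$, giving $|V_S|\ge |\F|^k$. For every $v\in V_S$, $vA_0$ is zero on the $n_0-k$ positions in $S$, hence has at most $R-(n_0-k)=R-n_0+k$ nonzeros, and therefore $\rk(v\times_0 T)\le R-n_0+k$.

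There is essentially no obstacle here; the only point requiring mild care is the hypothesis $n_0-k\ge 0$, which is exactly the assumption $k\le n_0$, and the existence of $n_0-k$ columns in $A_0$, which is ensured by $R\ge n_0$. Note also that the conclusion is independent of the particular CPD chosen, since it only asserts the existence of sufficiently many low-rank contractions of $T$, a property intrinsic to $T$.
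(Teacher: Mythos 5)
Your proof is correct: pad a rank-$\le R$ CPD to exactly $R$ columns, use $v\times_0 T=\cpdeval{vA_0,A_1,\dots,A_{D-1}}$ to bound $\rk(v\times_0 T)$ by the number of nonzeros of $vA_0$, and force $n_0-k$ prescribed entries of $vA_0$ to vanish via the left nullspace of $(A_0)_{:,S}$, which by rank--nullity is a subspace of dimension $\ge k$; the choice of $S$ is legitimate because conciseness and $\rk(T)\le R$ give $R\ge n_0$. The paper gives no proof of this pruner (it is explicitly left as an exercise), so there is nothing to compare against, but your argument is the natural adaptation of the zero-pattern counting used in the paper's proof of Theorem~\ref{laskowski}, which is evidently the intended route.
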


\subsection{Border CPD}
\label{sec:border}
We now present our search algorithm for border CPD, a looser form of CPD that is essential for most progress on fast matrix multiplication. Most of the text in this subsection is duplicated from \cite{yang24}.

\subsubsection{Motivation and Definitions}
Intuitively, a border CPD is a parameterized CPD that approximates a target tensor $T$ arbitrarily well. Such a CPD may have a strictly smaller rank than $\rk\paren{T}$: a famous example is $T=\M{\M{0&1\\1&0}&\M{1&0\\0&0}}$, which has rank 3 but is approximated by the rank-2 border CPD $\frac{1}{x}\paren{\M{1\\x}^{\otimes 3} - \M{1\\0}^{\otimes 3}}$ as $x\rightarrow 0$.


To formally define border CPDs, we multiply everything by a high enough power of $x$ so that everything is a polynomial in $x$. Define the following for a tensor with elements in $\F$ \cite{fmm-survey}:
\begin{itemize}
    \item $\brk{H}(T) := \rk(x^{H-1} T)$ over the polynomial ring $\F[x]/(x^H)$; we call $H$ the \textit{exponent threshold}.
    \item $\borderk{T} := \min_{H\ge 1} \brk{H}(T)$; this is the \textit{border rank} of $T$.
\end{itemize}

Border rank is useful in fast MM because one can essentially pretend it is regular rank when bounding the running time exponent $\omega$: a border rank-$R$ CPD of $\ang{m,n,k}$ for any exponent threshold $H$ can be converted into an $O(N^{(3\log_{mkn} R)+\varepsilon})$-time algorithm for $N\times N$ matrix multiplication, for arbitrarily small $\varepsilon>0$ \cite{fmm-survey}.

\subsubsection{Preliminaries}
To make the border CPD search problem finite, we fix the exponent threshold $H$. We also generalize the problem to finding the CPD rank of an arbitrary tensor over the ring $\F[x]/(x^H)$, i.e., tensor elements are not restricted to $\F$-multiples of $x^{H-1}$.

For brevity, we denote $\F[x]/(x^H)$ as $\borderring{\F}{H}$ and call it the \textit{border ring of $\F$ with exponent threshold $H$}.

We first construct a border analogue of row reduction exists on matrices, which allows us to assume tensor conciseness WLOG in border CPD.

\begin{lemma}
An element $\alpha\in\borderring{\F}{H}$ has a multiplicative inverse if and only if it is not a multiple of $x$. Furthermore, such an inverse is unique and can be found in $O(H^2)$ time.
\end{lemma}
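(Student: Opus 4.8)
The plan is to argue coefficient-by-coefficient. Write $\alpha=\sum_{i=0}^{H-1}a_ix^i$ with $a_i\in\F$, so that ``$\alpha$ is a multiple of $x$'' is exactly the statement $a_0=0$. The whole lemma then reduces to understanding when the triangular convolution system defining $\alpha\beta=1$ is solvable.

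First I would handle the easy direction by contraposition: if $a_0=0$, then for every $\beta\in\borderring{\F}{H}$ the constant term of $\alpha\beta$ equals $a_0$ times the constant term of $\beta$, hence is $0$, so $\alpha\beta\ne 1$; thus no multiple of $x$ is invertible. For the converse, assume $a_0\ne 0$ and look for $\beta=\sum_{j=0}^{H-1}b_jx^j$ with $\alpha\beta=1$ in $\borderring{\F}{H}$. Comparing coefficients of $x^k$ gives the lower-triangular linear system $a_0b_0=1$ together with $\sum_{i=0}^{k}a_ib_{k-i}=0$ for $1\le k<H$. Since $a_0$ is a nonzero element of the field $\F$, it is invertible in $\F$, so this system has exactly one solution, determined by the recursion $b_0=a_0^{-1}$ and $b_k=-a_0^{-1}\sum_{i=1}^{k}a_ib_{k-i}$. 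This yields existence and uniqueness of the inverse in one stroke; uniqueness also follows abstractly, since a two-sided inverse in any ring is unique and $\borderring{\F}{H}$ is commutative.

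For the complexity bound, I would observe that $a_0^{-1}$ is computed once in $O(1)$ field operations, and then each $b_k$ costs $O(k)$ multiplications and additions in $\F$; summing over $0\le k<H$ gives $O(H^2)$ field operations total, which is $O(H^2)$ time under the standard convention (used throughout the paper) that arithmetic in the fixed field $\F$ is constant-time.

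I do not expect a genuine obstacle here. The only points needing a little care are making the triangular-system argument simultaneously deliver existence and uniqueness rather than just one of them, and being explicit about the cost model for $\F$-arithmetic; an alternative route via Newton iteration or the extended Euclidean algorithm on $\F[x]$ would also work but would not improve on the stated $O(H^2)$ bound.
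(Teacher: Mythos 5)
Your proof is correct and follows essentially the same approach as the paper's: express $\alpha$ in coefficient form, observe that $a_0=0$ blocks invertibility, and otherwise solve the lower-triangular convolution system by the recursion $b_0=a_0^{-1}$, $b_k=-a_0^{-1}\sum_{i\ge 1}a_ib_{k-i}$, which gives the $O(H^2)$ bound. You simply spell out the easy direction and uniqueness a bit more explicitly than the paper does.
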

\begin{proof}
We can express $\alpha$ as $\sum_{0\le h<H} \alpha_h x^h$, for $\alpha_h\in\F$.
Clearly, if $\alpha_0=0$ then there is no $\beta\in\borderring{\F}{H}$ such that $\beta\alpha=1$.
Otherwise, we can set $\beta=\sum_{0\le h<H} \beta_h x^h$, where $\beta_0=\frac{1}{\alpha_0}$ and $\forall h\ge 1: \beta_h=-\frac{1}{\alpha_0}\paren{\sum_{h'<h} \alpha_{h-h'}\beta_{h'}}$, so that $\beta\alpha=1$.
\end{proof}

\begin{lemma}
\label{border-reduc}
Given any $M\in\borderring{\F}{H}^{m\times n}$, there exists an invertible matrix $Q$ and a permutation matrix $P$ such that $QMP=\MA{c|c}{U&V}$, where
\begin{itemize}
    \item $U$ is square upper-triangular, and $V$ is arbitrary
    
    \item the diagonal of $U$ contains elements that are nondecreasing powers of $x$
    
    (including 0, which is equal to $x^H$ and considered a higher power of $x$ than any $x^p$ for $p<H$)
    
    \item $QMP$ contains exactly $\rk(M)$ many nonzero rows, which consist of the topmost rows

    \item $Q,P$ can be found in polynomial time
\end{itemize}
\end{lemma}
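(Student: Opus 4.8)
The plan is to prove Lemma~\ref{border-reduc} by mimicking Gaussian elimination with partial pivoting, but where ``smallest pivot'' is replaced by ``lowest power of $x$ dividing the entry.'' The ring $\borderring{\F}{H}$ is local with maximal ideal $(x)$, and every nonzero element $\alpha$ can be written uniquely as $x^p \cdot u$ with $u$ a unit and $0\le p<H$ (by the previous lemma, $u$ is a unit iff its constant term is nonzero); call $p=:\operatorname{val}(\alpha)$ the valuation, and set $\operatorname{val}(0)=H$. The algorithm proceeds column-by-column: among all entries in the current working submatrix, pick one of minimal valuation, permute it (via row operations absorbed into $Q$ and a column permutation absorbed into $P$) into the top-left pivot position, then use the pivot to clear every entry below it. The key point enabling elimination is that if the pivot has valuation $p$ and another entry in its column has valuation $q \ge p$, then that entry is an $\borderring{\F}{H}$-multiple of the pivot, so a single row operation zeroes it out; this is exactly the step that fails over a general ring and works here because we chose the globally minimal valuation.

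First I would make precise the valuation bookkeeping and the ``divisibility by the pivot'' claim, then set up the induction on $\min(m,n)$: after one pivoting-and-clearing step the matrix has the block form $\MA{c|c}{x^{p_0} u_0 & w \\\hline 0 & M'}$ where $M'$ is $(m-1)\times(n-1)$, and I apply the inductive hypothesis to $M'$. To maintain the nondecreasing-diagonal property, I note that every entry of $M'$ is obtained as an $\borderring{\F}{H}$-combination of original entries, all of which had valuation $\ge p_0$ (since $p_0$ was globally minimal in the working block), and valuation is non-decreasing under addition and ring multiplication; hence the next pivot chosen from $M'$ has valuation $\ge p_0$, and by induction the whole diagonal is nondecreasing. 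Handling the case where the entire working block is zero is the termination condition: the remaining rows are already zero, so we stop, and the count of nonzero rows produced equals the number of pivot steps taken.

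The only genuinely delicate point is the third bullet, that $QMP$ has exactly $\rk(M)$ nonzero rows and these are the topmost ones. ``Topmost'' is free from the construction. For the count, I would argue that the number of pivot steps is an invariant: left-multiplication by an invertible $Q$ and right-multiplication by a permutation $P$ do not change the number of pivots the algorithm would extract, and one can characterize this number intrinsically --- e.g., as $\sum_{h=1}^{H}$ of the $\F$-matrix ranks of the ``layers'' $M \bmod (x)$ lifted appropriately, or more simply by taking the definition of $\rk$ over $\borderring{\F}{H}$ used elsewhere in the paper and checking $QMP$ and $M$ have equal rank (invertible transforms preserve CPD rank, which for a matrix is ordinary rank over the ring) while a matrix in the stated echelon form with $k$ nonzero rows visibly has rank $k$ (its nonzero rows are $\borderring{\F}{H}$-independent because the staircase of unit-times-power-of-$x$ pivots forces any vanishing combination to have all coefficients $\equiv 0 \bmod x^H$, by reading off pivot columns top to bottom).

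The running-time claim is routine: there are $O(\min(m,n))$ pivot steps, each doing $O(mn)$ ring operations, and each ring operation (including computing inverses of units, by the previous lemma) costs $\poly(H)$; so the total is polynomial, and $Q$, $P$ are accumulated as products of the elementary and permutation matrices used. I expect the main obstacle to be pinning down the rank-equals-number-of-nonzero-rows statement cleanly, since it requires a usable notion of matrix rank over the non-reduced ring $\borderring{\F}{H}$ and an independence argument for the echelon rows; everything else is a direct transcription of pivoted Gaussian elimination with valuations in place of magnitudes.
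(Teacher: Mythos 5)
Your elimination procedure is fine and in fact slightly streamlines the paper's: you pick a globally minimal-valuation pivot directly, whereas the paper first looks for a unit and, failing that, factors out $x$ and recurses; both produce the same upper-triangular ``staircase'' with nondecreasing powers of $x$ on the diagonal and with each entry of row $i$ divisible by $x^{p_i}$. The polynomial-time bookkeeping is also straightforward.

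The genuine gap is exactly the step you flagged as delicate, and the fix you sketch does not work. You propose to show the echelon matrix with $k$ nonzero rows has rank $\ge k$ by arguing the rows are $\borderring{\F}{H}$-linearly independent, ``reading off pivot columns top to bottom.'' But they are not independent once any $p_i>0$: reading the $i$-th pivot column gives only $c_i x^{p_i}=0$, hence $c_i\in(x^{H-p_i})$, not $c_i=0$; taking $c_i=x^{H-p_i}$ (and using that the other entries of row $i$ are multiples of $x^{p_i}$) produces a nontrivial vanishing combination. The cleanest counterexample is $M=x^{H-1}I_r$ with $H\ge 2$: every row is annihilated by $x$, so the rows are badly dependent over $\borderring{\F}{H}$, yet $\rk(M)=r$. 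Moreover, even if the rows \emph{were} $R$-independent, $R$-linear independence over a non-reduced ring does not by itself lower-bound CPD rank the way it does over a field; the analogous ``determinantal'' route also fails here because the $k\times k$ pivot minor is $x^{\sum p_i}$, which is $0$ in $\borderring{\F}{H}$ whenever $\sum p_i\ge H$.

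The paper closes this gap differently: after reaching the echelon form it applies further (analysis-only) column operations to reduce to $\operatorname{Diag}(x^{p_0},\dots,x^{p_{r-1}})$ padded with zeros, then lower-bounds the rank of that diagonal matrix by $\rk(x^{H-1}I_r)$, and finally proves $\rk(x^{H-1}I_r)=r$ via Conner's lifting argument: a factorization $x^{H-1}I_r=UV$ with $r'<r$ inner dimension would lift to $x^{H-1}I_r+x^H M=UV$ in $\F[x]^{r\times r}$, whose left side has a nonzero $x^{r(H-1)}$ coefficient in its determinant while the right side has determinant $0$, a contradiction. You need some argument of this flavor (or another intrinsic lower bound on CPD rank over $\borderring{\F}{H}$) in place of the row-independence claim.
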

\begin{proof}
We sketch a procedure for reducing $M$ by applying row operations and column permutations:
\begin{itemize}
    \item if $M=O$, stop;
    \item otherwise, find some $(i_0,j_0)$ such that $M_{i_0,j_0}\not\in x\borderring{\F}{H}$;
    \item if such $(i_0,j_0)$ exists:
    \begin{itemize}
        \item swap rows 0 and $i_0$, and columns 0 and $j_0$,
        so that this element is moved to $(0,0)$
        \item multiply row 0 by an inverse of $M_{0,0}$, so that $M_{0,0}=1$
        \item add scalar multiples of row 0 to all other rows so that $M_{1:,j_0}=\vec{0}$
        \item recursively apply the procedure on $M_{1:,1:}$
    \end{itemize}
    
    \item otherwise: we must have $M=xM'$ for some $M'\in\borderring{\F}{H}^{m\times n}$;
    
    apply the procedure on $M'$, tracking the list of row and column operations we apply; then apply those operations on $M$
\end{itemize}

The corresponding matrices $Q, P$ can be constructed and updated as we apply each row and column operation. It is clear from the procedure that $QMP$ has the structure

\[\M{
\begin{array}{ccccc|ccc}
    x^{p_0}&y_{0,1}&y_{0,2}&\cdots&y_{0,r-1} & y_{0,r}&\cdots&y_{0,n-1}\\
    &x^{p_1}&y_{1,1}&\cdots&y_{1,r-1} & y_{1,r}&\cdots&y_{1,n-1}\\
    &&x^{p_2}&\cdots&y_{2,r-1} & y_{2,r}&\cdots&y_{2,n-1}\\
    &&&\ddots&\vdots & \vdots && \vdots\\
    &&&&x^{p_{r-1}} & y_{r-1,r}&\cdots&y_{r-1,n-1}
\end{array}
\\\hline
O
},\]

where $r$ is the number of nonzero rows in $QMP$, each $p_i<H$ is a nonnegative integer, $p_0\le \dots \le p_{r-1}$, and each $y_{i,j}$ is an element in $\borderring{\F}{H}$.
Additionally, each $y_{i,j}$ is a multiple of $x^{p_i}$, since at each step of the reduction process, $p_i$ is the smallest number such that the current state of $M_{i:,:}$ contained some element $\not\in x^{p_i+1}\borderring{\F}{H}$. As a consequence, for each $i$, $x^{p_i}$ divides all elements in $(QMP)_{i:,:}$.

To prove $r$ is indeed equal to $\rk(M)$, we will apply additional column operations on $QMP$ (which are not necessarily column operations).
Because of hte aforementioned divisibility properties of $x^{p_i}$,
it is possible to first add scalar multiples of column 0 to all other columns such that all $y_{0,:}$ are eliminated, then repeat for column 1, etc., so that we end up with the matrix

\[\MA{c|c}{
\begin{array}{ccc}
    x^{p_0}\\
    &\ddots\\
    &&x^{p_{r-1}}
\end{array}
& O
\\\hline
O & O
}.\]

Clearly, the rank of this matrix is at most $r$ (since there are $r$ many nonzero terms) and at least $\rk(x^{H-1} I_r)$ (since we can discard all-zero rows and columns, then multiply on the left by a suitable diagonal matrix).

We present a proof from Austin Conner \cite{austin-conner} that $\rk(x^{H-1} I_r)=r$.
Suppose not: then there would exist matrices $U\in\borderring{\F}{H}^{r\times r'},\ V\in\borderring{\F}{H}^{r'\times r}$, for some $r'<r$, such that $x^{H-1} I_r=UV$.
In the larger polynomial ring $\F[x]$, this condition is equivalent to there existing some $M\in\F[x]^{r\times r}$ such that $x^{H-1}I_r + x^H M=UV$.
The left-hand side has a polynomial determinant with coefficient 1 at the $x^{r(H-1)}$ term, but the right-hand side has determinant 0 because it is rank-deficient; contradiction.
\end{proof}

\subsubsection{Border CPD algorithm}
Lemma \ref{border-reduc} gives a border analogue of row-echelon form for a matrix.
As a consequence, any tensor $T$ can be transformed with invertible axis-operations to be concise, in polynomial time.
Using the same argument as in Section \ref{sec:cpd}, a concise tensor $T$ with rank $\le R$ must have its length along each axis be $\le R$.

Unfortunately, we cannot always obtain a border analogue of \textit{reduced} row-echelon form (rref), i.e. force the matrix $U$ in Lemma \ref{border-reduc} to be diagonal, using only row operations and column permutations, because some terms on the diagonal of $U$ may be non-invertible and thus unable to eliminate elements directly above them. Consider $M=\M{1&1\\&x}$; there is no scalar multiple of $M_{1,:}$ that can be added to $M_{0,:}$ such that $M_{0,1}$ is eliminated.

It is important to emphasize that when performing axis-operations on a tensor, we are not allowed to use general column operations. Doing an axis-$d$ operation on a tensor with CPD $\cpdeval{A_0,\dots,A_{D-1}}$ is equivalent to replacing the factor matrix $A_d$ with $MA_d$ (left multiplication), for some matrix $M$, which performs a row operation on $A_d$ (ignoring invertibility).
However, replacing $A_d$ with $A_d M$ (right multiplication) changes the resulting tensor in a way that cannot be determined without knowing the individual factor matrices.
Column \textit{permutations} are allowed though, because if $P$ is a permutation matrix, then $\cpdeval{A_0 P,\dots,A_{D-1}P}=\cpdeval{A_0,\dots,A_{D-1}}$ due to commutativity of addition.

The lack of border rref means that when we attempt to use the same algorithm as in CPD over a field, the axis-0 slices of $\~{Q}\times_0 \~{T}$ mixes together the outer products $\bigotimes_{d\ge 1} (A_d)_{:,i}$ for multiple $i$, so we cannot treat these slices independently.
Instead, our border CPD search algorithm relies on a simpler idea:

\begin{proof}[Proof of Theorem \ref{thm:border}]
If $\rk(T)\le R$ (and $R>0$), then there exist vectors $u_d\in\borderring{\F}{H}^{n_d},\ 0\le d<D$, such that $\rk\paren{T-\bigotimes_d u_d}\le R-1$. We can thus determine whether $\rk(T)\le R$ by enumerating all possible tuples $(u_d)_d$ and recursively solving the corresponding subproblem (while transforming the tensor in each subproblem to be concise).
The case $R=0$ is trivial.

To bound the running time of this algorithm, a subproblem with inputs $(T'\in\F^{n'_0\times\dots\times n'_{D-1}},\ r)$ will have at most $|\F|^{H\sum_d n'_d}$ children, all of the form $(\dots,\ r-1)$.
WLOG we can assume in each subproblem that $T'$ is concise and $\forall d:\ n'_d\le r$.
Thus, the total runtime of deciding whether $\rk(T)\le R$ is $\borderResult$,
where $n_0\times\dots\times n_{D-1}$ denotes the shape of the original tensor $T$ (after being transformed to be concise).
\end{proof}

Pseudocode for our border CPD algorithm is provided in Algorithm \ref{alg:border}.

\begin{algorithm}
    \caption{Depth-first search algorithm for rank-$\le R$ CPD over border ring $\F[x]/(x^H)$}
    \label{alg:border}
    \begin{algorithmic}[1]
    \newcommand{\target}{\textrm{target}}
    \Require{
        $T_\target\in\paren{\F[x]/(x^H)}^{n_0\times\dots\times n_{D-1}}$,
        $R\in\Z_{\ge 0}$
    }
    \Ensure{
        returns a rank-$\le R$ CPD of $T_\target$ as a sequence of factor matrices, if one exists;
        else, returns null
    }
    \Function{dfs}{$T_\target, R$}
        \Let{$T, Q_0,\dots,Q_{D-1}, r_0,\dots,r_{D-1}$}{\Call{border\_concise}{$T_\target$}} \Comment{satisfies $T_\target=Q_0\times_0 (\dots (Q_{D-1}\times_{D-1} T)\dots )$, $T$ concise and shape $r_0\times\dots\times r_{D-1}$}
        \If{$\exists d | r_d>R$}
            \State \Return{\textbf{null}}
        \EndIf
        \If{$\exists d | r_d=0$}
            \State \Return{$\paren{\M{\ }}_{0\le d<D}$}
        \EndIf
        \For{$0\le d<D,\ u_d\in\paren{\F[x]/(x^H)}^{r_d}$}
            \Let{$A$}{\Call{dfs}{$T-\bigotimes_d u_d ,\ R-1$}}
            \If{$A\ne\textbf{null}$}
                \State \Return{ $\paren{\paren{Q_d^{-1}}_{:,:r_d} \MA{c|c}{A_d & u_d}}_{0\le d<D}$ }
            \EndIf
        \EndFor
      \State \Return{\textbf{null}}
    \EndFunction
  \end{algorithmic}
\end{algorithm}

\subsection{Barriers}
We note some limitations to our algorithms for ordinary and border CPD, and to our pruners. Unless otherwise stated, assume $T$ is concise, has shape $n_0\times\dots\times n_{D-1}$, and satisfies $n_0\ge\dots\ge n_{D-1}$.

The simplest nontrivial rank threshold that our ordinary CPD algorithm can solve is $R=n_0$, since there are no variables from any $A_d$ that have to be fixed. In this case, our algorithm runs in $O^*(|\F|^{n_0})$ time and is equivalent to running both rref-pruning (Lemma \ref{rref-prune}) and rref-heuristic.

In hopes of generalizing this result, we tried analyzing the rank threshold $R=n_0+1$ using $k$-th order rref-pruning.
Unfortunately, the following tensor breaks this idea for $k=2$:

\begin{example}
\label{2nd-order-rref-counterexample}
Over the field $\F_2$, the tensor
$T=\M{\M{1&1\\&1\\&&\phantom{0}}&\M{\phantom{0}\\&1&1\\&&1}&\M{1\\&\phantom{0}\\1&&1}}$
satisfies both rref-pruning (Lemma \ref{rref-prune}) and 2nd-order rref-pruning (Lemma \ref{k-th-rref-pruning}, $k=2$) for rank threshold $R=4$ and no variables assigned ($r=n_0$), along all axes simultaneously.
In fact, for any two distinct indices $i$ and $j$, $\rk\paren{\M{T_{i,:,:} & T_{j,:,:}}}\le 3$, and the same property holds after rotating the axes of $T$.

However, running Algorithm \ref{alg:main} shows that $\rk_{\F_2}\paren{T}=5$.
\end{example}
\begin{remark}
The tensor in Example \ref{2nd-order-rref-counterexample} also passes lask-pruning with rank threshold $R=4$ (no variables assigned) on all axes.
\end{remark}
\begin{remark}
The axis-0 slices of the tensor in Example \ref{2nd-order-rref-counterexample} are diagonal shifts of each other.
Explicitly, $T=\M{A & PAP^\intercal & P^2 A (P^2)^\intercal}$, where $A=\M{1&1\\&1\\&&\phantom{0}}$ and $P=\M{&&1\\1\\&1}$.
\end{remark}

Example \ref{2nd-order-rref-counterexample} was found by first running a computer search, then transforming the resulting tensor so that $k$-th order rref pruning is easier to visually check.

We are not sure if a counterexample exists for $k=3$ or higher, although we suspect one does:
\begin{problem}
Does there exist a 3-dimensional tensor $T\in\F^{n_0\times n_1\times n_2}$ such that:
\begin{itemize}
    \item $T$ passes $k$-th order rref-pruning (Lemma \ref{k-th-rref-pruning}) for all $1\le k\le 3$ and rank threshold $R:=n_0+1$, along all axes;
    \item but $\rk(T)>n_0+1$?
\end{itemize}
\end{problem}

Another algorithmic barrier is that rref-pruning and lask-pruning are unable to prove superlinear lower bounds on tensor rank.
Specifically, when the input tensor $T$ has shape $n_0\times n_1\times n_2$ (with $n_0\ge n_1\ge n_2$) and no variables have been assigned ($r=n_0$):
\begin{itemize}
    \item rref-pruning cannot disprove the statement $\rk(T)\le R$ when $R\ge n_1+n_0$, because $\rk(v\times_0 T')$ is guaranteed to have rank $\le n_1$ just by considering matrix shape, so $S'$ will contain all of $\F^{1\times n_0}$
    (refer to Lemma \ref{rref-prune} for notation);
    \item and lask-pruning cannot disprove $\rk(T)\le R$ when $R\ge \frac{|\F|^{n_0}\cdot n_1}{|\F|^{n_0}-|\F|^{n_0-1}}
    =\frac{|\F|}{|\F|-1}n_1$, since the sum $s$ is at most $|\F|^{n_0}\cdot n_1$
    (refer to Lemma \ref{lask-prune} for notation).
\end{itemize}

In fact, providing any algorithm that is capable of proving superlinear lower bounds on tensor rank for 3-dimensional tensors is an open problem (see Problem \ref{beyond-thm:sub-method}). We suspect the other pruners we have listed are also unable to prove superlinear lower bounds (without becoming impractically slow, e.g., running $k$-th order rref-pruning for high $k$).

The situation for border CPD is much more difficult, as the extra multiplicative $H$ term in the exponent of Theorem \ref{thm:border} renders our search algorithm prohibitively slow for most inputs beyond tensor shape $3\times 3\times 3$, rank threshold $R=3$, exponent threshold $H=2$, and ground field $\F=\F_2$ (substituting this into the asymptotic running time expression yields $2^{36}$).
Additionally, the lack of reduced row-echelon form makes it difficult to devise efficient pruners similar to those for ordinary CPD search.

One potential area for improving our border CPD search, as mentioned in \cite{yang24}, is to analyze the case when the rank threshold is close to $n_0$ (in the extreme case, $R=n_0$). We suspect that most recursive function calls are terminated immediately because the input tensor is too long along some axis after being made concise (Algorithm \ref{alg:border}, line 3), so preventing the algorithm from making these calls in the first place may reduce asymptotic time complexity.

\section{Maximum rank}
\label{sec:max-rank}
In this section we study tensors more generally, beyond the context of fast matrix multiplication.
One problem of interest is finding the maximum possible rank that a tensor of a given shape $n_0\times\dots\times n_{D-1}$ can have over a field $\F$, which we denote $\maxrank_\F(n_0,\dots,n_{D-1})$. We omit $\F$ if it is clear from context.
Outside of this introduction, we only study 3-dimensional tensors ($D=3$).

By permuting axes, we immediately have that $\maxrank(n_0,\dots,n_{D-1})=\maxrank(n_{\sigma_0},\dots,n_{\sigma_{D-1}})$ for any permutation $\sigma\in\operatorname{Sym}(D)$.
Additionally, the 2-dimensional case is solved by Gaussian elimination: $\maxrank(m,n)=\min(m,n)$.

By a simple counting argument, the following lower bound holds over any finite field $\F$; it also holds over infinite $\F$ is we count degrees of freedom instead of number of tensors.
\begin{theorem}
\label{counting-lower-bound}
$\maxrank(n_0,\dots,n_{D-1})\ge \frac{\prod_d n_d}{\sum_d n_d}$.
\end{theorem}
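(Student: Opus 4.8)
The plan is to run a counting argument comparing the number of $n_0\times\dots\times n_{D-1}$ tensors over $\F$ with the number of rank-$\le R$ CPDs, and to extract the bound $R \ge \frac{\prod_d n_d}{\sum_d n_d}$ from the inequality that the latter count must be at least the former. First I would note that the space $\F^{n_0\times\dots\times n_{D-1}}$ has exactly $|\F|^{\prod_d n_d}$ elements. Next, I would observe that every rank-$\le R$ tensor is of the form $\cpdeval{A_0,\dots,A_{D-1}}$ for some choice of factor matrices $A_d\in\F^{n_d\times R}$, and there are only $|\F|^{R\sum_d n_d}$ such tuples of factor matrices; hence the image of the evaluation map $\cpdeval{\cdot}$ — which is precisely the set of tensors of rank $\le R$ — has size at most $|\F|^{R\sum_d n_d}$. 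If every tensor of the given shape had rank $\le R$, we would need $|\F|^{\prod_d n_d}\le |\F|^{R\sum_d n_d}$, i.e. $\prod_d n_d \le R\sum_d n_d$, giving $R\ge \frac{\prod_d n_d}{\sum_d n_d}$. Therefore some tensor has rank strictly larger than any $R<\frac{\prod_d n_d}{\sum_d n_d}$, which yields $\maxrank(n_0,\dots,n_{D-1})\ge \frac{\prod_d n_d}{\sum_d n_d}$ (taking the ceiling is automatic since rank is an integer, but the stated bound is already implied).

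For the infinite-field case, the cardinality argument must be replaced by one counting dimensions: the variety of tensors of rank $\le R$ is the image of the polynomial map $(A_0,\dots,A_{D-1})\mapsto \cpdeval{A_0,\dots,A_{D-1}}$ from an affine space of dimension $R\sum_d n_d$, so its Zariski closure has dimension at most $R\sum_d n_d$; if this is less than $\prod_d n_d = \dim \F^{n_0\times\dots\times n_{D-1}}$, the rank-$\le R$ tensors form a proper subvariety and a generic tensor has rank $>R$. This gives the same inequality $R\sum_d n_d \ge \prod_d n_d$ for $R=\maxrank$.

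I do not expect a genuine obstacle here: the only mild care needed is to make sure the evaluation map's image really is all rank-$\le R$ tensors (immediate from the definition of CPD rank in the excerpt, since $\rk(T)\le R$ means such $A_d$ exist) and that over infinite fields "degrees of freedom" is made precise via dimension of the image variety. The argument is otherwise a one-line pigeonhole / dimension count, so the writeup is short.
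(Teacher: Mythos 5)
Your proof is correct and is precisely the ``simple counting argument'' the paper alludes to but does not spell out: compare the $|\F|^{\prod_d n_d}$ tensors of the given shape with the at most $|\F|^{R\sum_d n_d}$ tensors in the image of the CPD evaluation map, and in the infinite-field case replace cardinality by dimension of the (Zariski closure of the) image variety. Nothing is missing and no different route is taken.
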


In the other direction, we can construct a CPD of any tensor by choosing some axis $d$ and constructing each 1D slice $(i_0,\dots,i_{d-1},:,i_{d+1},\dots,i_{D-1})$ at a time, for each tuple of coordinates $(i_{d'})_{d'\ne d}$. This construction proves the following upper bound:
\begin{theorem}
\label{trivial-upper-bound}
$\maxrank(n_0,\dots,n_{D-1})\le \min_d \prod_{d'\ne d} n_{d'}$.
\end{theorem}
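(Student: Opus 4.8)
The plan is to give, for each fixed axis $d$, an explicit CPD of an arbitrary tensor $T\in\F^{n_0\times\dots\times n_{D-1}}$ that uses exactly $\prod_{d'\ne d} n_{d'}$ rank-1 terms; minimizing over the choice of $d$ then yields $\maxrank(n_0,\dots,n_{D-1})\le \min_d \prod_{d'\ne d} n_{d'}$. So the whole content is a single, very concrete construction plus a one-line entrywise check.

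Fix $d$ and index the summands of the CPD by tuples $\mathbf{j}=(j_{d'})_{d'\ne d}$, one for each of the $\prod_{d'\ne d} n_{d'}$ assignments of coordinates on the axes other than $d$. To $\mathbf{j}$ I assign the rank-1 tensor whose axis-$d'$ factor (for $d'\ne d$) is the standard basis vector $e_{j_{d'}}\in\F^{n_{d'}}$ and whose axis-$d$ factor is the fiber $T_{j_0,\dots,j_{d-1},:,j_{d+1},\dots,j_{D-1}}\in\F^{n_d}$ itself. Equivalently, the factor matrix $A_d$ has its $\mathbf{j}$-th column equal to this fiber, while for $d'\ne d$ the factor matrix $A_{d'}$ has its $\mathbf{j}$-th column equal to $e_{j_{d'}}$. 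Then the $(i_0,\dots,i_{D-1})$ entry of $\cpdeval{A_0,\dots,A_{D-1}}$ is $\sum_{\mathbf{j}}\paren{\prod_{d'\ne d} 1[j_{d'}=i_{d'}]}\cdot T_{j_0,\dots,j_{d-1},i_d,j_{d+1},\dots,j_{D-1}}$, whose only nonvanishing summand is $\mathbf{j}=(i_{d'})_{d'\ne d}$, contributing precisely $T_{i_0,\dots,i_{D-1}}$. Hence this is a valid CPD of $T$ with $\prod_{d'\ne d} n_{d'}$ terms, and since $d$ was arbitrary we take the minimum.

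I do not expect any genuine obstacle here; the construction is literally ``build $T$ one axis-$d$ fiber at a time.'' The only things to be careful about are bookkeeping — confirming the number of summands is $\prod_{d'\ne d} n_{d'}$ and not $\prod_{d'} n_{d'}$ — and the degenerate cases, e.g. $T=O$ (one may discard zero columns, which only lowers the count) or some $n_{d'}=0$ (both sides are $0$), all of which are immediate. (If a sharper bound were desired, the same fiber-by-fiber idea upgrades ``one term per fiber'' to ``the rank of the axis-$d$ unfolding of $T$,'' but that refinement is not needed for the inequality as stated.)
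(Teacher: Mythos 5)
Your construction is exactly the one the paper uses: build $T$ one axis-$d$ fiber at a time, with each summand the outer product of standard basis vectors on the other axes and the corresponding fiber on axis $d$, then minimize over $d$. This matches the paper's argument (which it gives only as a one-sentence sketch preceding the theorem), and your entrywise verification is correct.
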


As an immediate consequence, for 3-dimensional cube-shaped tensors we have $\frac{n^2}{3}\le \maxrank(n,n,n)\le n^2$. In fact, over a finite field $\F$, a uniformly random tensor will have rank $\ge \frac{n^2}{3}$ with high probability.
Unfortunately, we do not know an \textit{explicit} (i.e. deterministic) tensor with rank nearly as high; the following is a major open problem:
\begin{problem}[\cite{blaser-explicit-tensors}]
\label{beyond-thm:sub-method}
Is there an explicit tensor $T\in \F^{n\times n\times n}$ satisfying $\rk(T)\ge (3+\varepsilon)n$ for any constant $\varepsilon>0$?
\end{problem}

The strongest known lower bound on $\maxrank(n,n,n)$ from an explicit tensor is $3n-o(n)$ (Example \ref{landsberg-michalek-3n}, Remark \ref{landsberg-mm-lower}).

The main reason why most values $\maxrank(n_0,\dots,n_{D-1})$ are difficult to find is because of a lack of symmetry:
the set of all $n_0\times \dots\times n_{D-1}$ tensors has size $|\F|^{\prod_d n_d}$, or $\prod_d n_d$ degrees of freedom (DoF);
but the group of all invertible axis-operations on the tensor shape, $\GL{n_0}{\F}\times\dots\times \GL{n_{D-1}}{\F}$, has size $\Theta(|\F|^{\sum_d n_d^2})$ (for constant $D$), or only $\sum_d n_d^2$ degrees of freedom \cite{cohn-stackexchange}.
When $D=2$, the latter has more DoF than the former, so there exists a simple canonical form of matrices up to invertible axis-operations (obtainable by applying Gaussian elimination on the rows, and then on the columns); but for most shapes with $D\ge 3$, there is no such form.

This problem with degrees of freedom also explains why $\maxrank(n,n,2)$ has been completely solved (see Theorem \ref{nn2-max-rank}) but $\maxrank(n,n,3)$ is still not well-understood: the former has $2n^2$ DoF for tensors, which is comparable to $2n^2+2^2$ for invertible axis-operations; but the latter has $3n^2$ versus $2n^2+3^2$.

\subsection{Previous work}
Theorem \ref{counting-lower-bound} is the strongest known lower bound on maximum rank when the tensor shape is roughly cubic. For unbalanced shapes, stronger results are known.

The shape $m\times n\times 2$ is completely solved:
\begin{theorem}[\cite{nn2-max-rank}]
\label{nn2-max-rank}
For a finite field $\F$ and integers $m\ge n\ge 2$, the following holds:

if $\F=\F_2$: $\maxrank_{\F_2}(m,n,2)=\cases{n+\ceil{m/2} & n\le m\le 2n-2 \\ 2n-1 & m=2n-1 \\ 2n & m\ge 2n}$;

otherwise: $\maxrank_\F(m,n,2)=\cases{n+\floor{m/2} & n\le m\le 2n-1 \\ 2n & m\ge 2n}$.
\end{theorem}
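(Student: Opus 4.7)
The plan is to identify each tensor $T \in \F^{m \times n \times 2}$ with its \emph{matrix pencil} $(A, B) := (T_{:,:,0}, T_{:,:,1})$, where $A, B \in \F^{m \times n}$. From the definition of CPD, $\rk(T)$ is the smallest $R$ such that $A = \sum_r \alpha_r u_r v_r^{\intercal}$ and $B = \sum_r \beta_r u_r v_r^{\intercal}$ share a common list of rank-$1$ outer products $u_r v_r^{\intercal}$. This rank is invariant under the strict pencil equivalence $(A, B) \mapsto (PAQ, PBQ)$ for $P \in \GL{m}{\F},\ Q \in \GL{n}{\F}$, together with the $\GL{2}{\F}$-action that linearly combines the two slices.

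For the upper bound, I would invoke the Kronecker canonical form for matrix pencils, which holds over any field (unlike the Jordan form, which requires algebraic closure): every pencil is strictly equivalent to a block-diagonal direct sum of indecomposable pieces, each of which is either a rectangular singular block $L_k$ of shape $k \times (k+1)$, its transpose, or a ``regular'' block $(I_{k \deg f}, C_{f^k})$ associated to a power of an irreducible polynomial $f \in \F[x]$. Because tensor rank is sub-additive on direct sums whose blocks occupy disjoint rows and columns, the task reduces to (a) computing the tensor rank of each indecomposable block, and (b) maximizing the sum over all valid block decompositions consistent with the shape $m \times n$. The key computation is that a single Jordan-type block $(I_k, J_k(\lambda))$ for $\lambda \in \F$ has tensor rank $k + \lceil (k-1)/2 \rceil$ when $|\F| \geq 3$, but $k + \lceil k/2 \rceil$ when $\F = \F_2$, which is precisely the source of the $\F_2$ correction. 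Optimizing the block decomposition subject to the shape then produces the stated formulas; the boundary $m = 2n - 1$ over $\F_2$ needs an ad hoc adjustment saving exactly one summand over the naive formula.

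For the lower bound, I would exhibit an explicit tensor for each regime. When $m \geq 2n$, a pair $(A, B)$ with both slices of rank $n$ and sufficiently independent column spans realizes the trivial bound $2n$ from Theorem \ref{trivial-upper-bound}. When $m \leq 2n - 1$, I would take a pencil consisting of a single large Jordan-type block at $\lambda = 0$ together with a rectangular Kronecker tail filling out the extra $m - n$ rows. The matching lower bound comes from the substitution method: for any rank-$R$ CPD, fixing a single factor-vector $u_r$ to a carefully chosen value reduces the pencil to a strictly smaller instance of the same family using one fewer summand, yielding an inductive lower bound.

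The main obstacle will be the block-level tensor rank computation, particularly the lower bound on $\rk(I_k, J_k(\lambda))$. The upper-bound construction pairs consecutive rows and uses a clever decomposition requiring a third distinct nonzero scalar to express each pair with only three rank-$1$ summands instead of four. Over $\F_2$ this third scalar does not exist, forcing each odd pair to cost an additional summand, which is exactly the $\lceil m/2 \rceil$ versus $\lfloor m/2 \rfloor$ gap. The matching lower bound, usually the most technically demanding step, combines the substitution method on factor matrices with a careful dimension count on the variety of low-rank decompositions; verifying that this count is tight in the $\F_2$ case at $m = 2n-1$ is what forces the exceptional middle case of the theorem.
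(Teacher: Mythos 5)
The paper does not actually prove Theorem \ref{nn2-max-rank}; it is imported verbatim from \cite{nn2-max-rank}, so there is no in-paper argument to compare against, and your sketch has to stand on its own. Its skeleton is the standard one --- view an $m\times n\times 2$ tensor as a matrix pencil, reduce to Kronecker canonical form (which indeed works over any field), bound each indecomposable block, and optimize over block structures --- and the pairing construction you describe is a reasonable source of \emph{upper} bounds. But the computation you call the key step is wrong, and the lower-bound half collapses with it. The pencil $(I_k, J_k(\lambda))$ does not have rank $k+\ceil{(k-1)/2}$ for every field with $|\F|\ge 3$: whenever $|\F|\ge k$ its rank is exactly $k+1$. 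It is a restriction of the tensor for multiplying a $2$-coefficient polynomial by a $k$-coefficient polynomial, which has rank $k+1$ by evaluation--interpolation at $k+1$ points of $\F\cup\{\infty\}$, and the matching lower bound $k+1$ holds because a rank-$k$ decomposition of a regular $k\times k$ pencil $(I_k,F)$ would exhibit $F$ as diagonalizable. Concretely, $(I_4,J_4(0))$ over $\F_5$ has rank $5$, not $6=4+\ceil{3/2}$. Your pairing argument only gives an upper bound per block, not the exact value, and the field-size dependence is relative to the block size, not a uniform dichotomy ``$\F_2$ versus $|\F|\ge 3$''.

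Because of this, your proposed extremal tensor for $m\le 2n-1$ --- one large Jordan block plus a Kronecker tail --- is not extremal once $|\F|$ is large compared with $n$, yet the theorem asserts the value $n+\floor{m/2}$ for \emph{every} finite field other than $\F_2$, including arbitrarily large ones. The extremal pencils must instead be built from many bounded-size non-diagonalizable blocks (e.g.\ size-$2$ Jordan blocks together with the singular Kronecker blocks absorbing the extra $m-n$ rows), and proving they have large rank needs more than sub-additivity: your step ``maximize the sum of block ranks'' silently assumes rank is \emph{additive} on these direct sums, which is exactly the hard direction (it is available for two-slice tensors via J\'aJ\'a--Takche-type results, or must be replaced by a direct substitution-method argument, but it cannot be waved through). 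Finally, the genuinely delicate parts of the statement --- the $\F_2$ anomaly, the exceptional value $2n-1$ at $m=2n-1$ over $\F_2$, and the blocks $(I_{k\deg f}, C_{f^k})$ for irreducible $f$ of degree $\ge 2$, which are unavoidable over finite fields --- are precisely the places your outline defers to ``an ad hoc adjustment'' or omits, so as written the proposal does not yield the theorem.
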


For skinny shapes, i.e. $m\times n\times (mn-k)$ for small $k$, the following is known:
\begin{theorem}[\cite{bshouty}]
For $0\le k\le \min\paren{\frac{(n-1)^2}{2},\ \frac{(m-1)^2}{2}}$: $\maxrank(m,n,mn-k)\ge mn-4\sqrt{2k}+O(1)$.
\end{theorem}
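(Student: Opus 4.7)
The plan is to exhibit a tensor $T\in\F^{m\times n\times(mn-k)}$ of high rank via the subspace characterization of CPD. As in Section~\ref{sec:cpd}, a rank-$R$ CPD $T=\sum_{r=1}^R a_r\otimes b_r\otimes c_r$ corresponds to $R$ rank-$1$ matrices $a_rb_r^\intercal\in\F^{m\times n}$ whose span contains the axis-$3$ slice space $V:=\operatorname{span}\{T_{:,:,i}\}_i$. Hence $\rk(T)$ equals the minimum dimension of a subspace $U\subseteq\F^{m\times n}$ that (i) contains $V$ and (ii) is spanned by rank-$1$ matrices. Since generically $\dim V=mn-k$, the problem reduces to finding a $V$ such that every admissible $U$ has dimension at least $mn-4\sqrt{2k}+O(1)$.

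The hypothesis $k\le\min((n-1)^2/2,(m-1)^2/2)$ permits a symmetric $s\times s$ block with $s\approx\sqrt{2k}$ to fit inside both axes. I would therefore take $V^\perp$ to be the space of $m\times n$ matrices whose top-left $s\times s$ block is symmetric and whose remaining entries vanish; this $V^\perp$ has dimension $\binom{s+1}{2}\approx k$. Setting $\dim U=mn-s'$ and $Y:=U^\perp\subseteq V^\perp$, the condition ``$U$ spanned by rank-$1$'' rewrites $Y$ as the solution set of rank-$1$ linear equations $a_r^\intercal M b_r=0$. Those $r$ with $a_rb_r^\intercal\in V$ contribute trivially, and the remainder give functionals on $V^\perp$ of the form $M\mapsto\tilde a_r^\intercal M\tilde b_r$ (where $\tilde a_r,\tilde b_r\in\F^s$ are the block restrictions) that must span a $(k-s')$-dimensional subspace of $(V^\perp)^*$.

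The crucial step is a Segre-incidence argument. On the symmetric $V^\perp$, the functional $M\mapsto\tilde a^\intercal M\tilde b$ factors through the symmetrization $\tilde a\tilde b^\intercal+\tilde b\tilde a^\intercal$, so the restricted rank-$1$ functionals are parameterized by symmetric rank-$\le 2$ matrices of size $s\times s$, a variety of affine dimension $\approx 2s$. Bounding the dimension of the subspace of $(V^\perp)^*$ spanned by such functionals gives $k-s'\le O(s'^2)$ (roughly $k-s'\le\binom{s'+1}{2}$), which forces $s'\ge\sqrt{2k}-O(1)$. Tracking constants carefully (and allowing slack for non-genericity over finite fields) produces the claimed bound $\rk(T)\ge mn-4\sqrt{2k}+O(1)$.

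The main obstacle is the incidence count in the third step: the $\sqrt{k}$ rate arises only by tightly exploiting the bilinearity of rank-$1$ matrices against the quadratic (symmetric) structure of $V^\perp$, whereas naive parameter counting gives only a linear-in-$(m+n)$ improvement over the trivial bound. Handling small characteristics (e.g., $\operatorname{char}(\F)=2$) may additionally require switching to a skew-symmetric block $V^\perp$ of dimension $\binom{s}{2}$, which yields the same asymptotic scale $s\approx\sqrt{2k}$.
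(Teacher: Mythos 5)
Your opening reformulation is fine: $\rk\paren{T}$ is indeed the minimum dimension of a subspace $U\subseteq\F^{m\times n}$ that contains the axis-2 slice span $V$ and is spanned by its rank-one elements. But the ``Segre-incidence'' step that carries all the weight is both unjustified and false. The restrictions of rank-one functionals to your symmetric block $V^\perp$ are (up to the trace pairing) the symmetrized matrices $\tilde a\tilde b^\intercal+\tilde b\tilde a^\intercal$; these do form a variety of dimension roughly $2s$, but the dimension of a variety says nothing about the dimension of its \emph{linear span}, and here the span is the entire $\binom{s+1}{2}$-dimensional space of symmetric $s\times s$ matrices (already $e_ie_i^\intercal$ and $e_ie_j^\intercal+e_je_i^\intercal$ arise this way when $\ch{\F}\ne 2$). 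So nothing like $k-s'\le\binom{s'+1}{2}$ follows; taking $U=\F^{m\times n}$ (so $s'=0$), the restricted functionals span all $k$ dimensions, which already violates your inequality. The logic also points the wrong way: a lower bound $\rk\paren{T}\ge mn-4\sqrt{2k}+O(1)$ requires showing every admissible $U$ has \emph{large} dimension, i.e.\ $s'\le 4\sqrt{2k}+O(1)$, whereas your conclusion $s'\ge\sqrt{2k}-O(1)$ is an upper bound on $\dim{U}$ and yields no rank lower bound at all.

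Worse, no incidence lemma can rescue this particular $V$. With $\ch{\F}\ne 2$, your $V$ is the set of matrices whose leading $s\times s$ block is skew-symmetric. Let $U$ be the matrices whose leading block is skew-symmetric plus diagonal. Then $V\subseteq U$, $\dim{U}=\dim{V}+s=mn-k+s$, and $U$ is spanned by its rank-one elements: the matrices vanishing on the leading block are spanned by $e_ib^\intercal$ with $i>s$ and $ae_j^\intercal$ with $j>s$, while the leading-block part is spanned by $e_ie_i^\intercal$ and $(e_i+e_j)(e_i-e_j)^\intercal$. Hence any tensor whose slices lie in $V$ has rank at most $mn-k+O(\sqrt{k})$, which falls below the target $mn-4\sqrt{2k}+O(1)$ once $k$ is larger than a constant, so the construction itself cannot prove the theorem. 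The route actually used (by \cite{bshouty}, and reused in this paper to get the sharper constant of Theorem \ref{3-sqrt3}) is primal and elementary: embed a maximal-rank $r\times s\times(rs-k)$ tensor with $rs\ge k$ into the first $rs-k$ slices, append $mn-rs$ slices each containing a single $1$ outside the $r\times s$ corner, strip those slices with the substitution method (Theorem \ref{thm:sub-method}) to get $\maxrank(m,n,mn-k)\ge\maxrank(r,s,rs-k)+mn-rs$, and then lower-bound $\maxrank(r,s,rs-k)$ by the counting argument (Theorem \ref{counting-lower-bound}); optimizing $r=s\approx c\sqrt{k}$ is exactly where the $\sqrt{k}$-order loss, and the constant, come from.
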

We improve the constant factor on $\sqrt{k}$ from $4\sqrt{2}$ to $3\sqrt{3}$ in Theorem \ref{3-sqrt3}, using the same idea as \cite{bshouty} used for this result.

For upper bounds, the following result is known over any \textit{principal ideal domain}; for brevity, we state the proof over a field:
\begin{theorem}[\cite{howell}]
\label{howell}
$\maxrank(m,n,p)\le \maxrank(m-1,n-1,p)+p$.
\end{theorem}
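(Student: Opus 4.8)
The plan is to show that an arbitrary tensor $T \in \F^{m \times n \times p}$ can be written as the sum of a rank-$\le p$ tensor and a tensor whose effective shape is $(m-1) \times (n-1) \times p$, from which the bound $\maxrank(m,n,p) \le \maxrank(m-1,n-1,p) + p$ follows immediately by subadditivity of rank. The natural way to peel off a $(m-1)\times(n-1)\times p$ core is to use one "row" along axis 0 and one "column" along axis 1 of $T$ to zero out an entire row and column of every axis-2 slice.

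Concretely, first I would view $T$ as a stack of $p$ matrices $M_0, \dots, M_{p-1} \in \F^{m \times n}$, the axis-2 slices. The goal is to choose vectors $a \in \F^{m}$, and then $b^{(0)}, \dots, b^{(p-1)} \in \F^{n}$ and $c^{(0)}, \dots, c^{(p-1)} \in \F^{m}$, so that subtracting the rank-$\le p$ tensor whose $k$-th slice is $a (b^{(k)})^\intercal + c^{(k)} (a')^\intercal$ (for a suitable fixed second vector $a' \in \F^{n}$) makes row $0$ and column $0$ of every resulting slice vanish. The cleanest route: apply invertible axis-operations to arrange that some slice, say $M_0$, has a nonzero entry in position $(0,0)$ after a change of basis — but we must be careful, since it is possible that \emph{every} slice has first row and first column already forced by linear dependence. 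The standard trick (and the one I expect \cite{howell} uses) is: pick any index, say along axis 2; by applying an invertible axis-0 and axis-1 operation we may assume that $M_0$ has the block form $\left[\begin{smallmatrix} 1 & 0 \\ 0 & * \end{smallmatrix}\right]$ or, if $M_0 = 0$, handle that degenerate slice trivially. Then for each $k$, we can subtract $e_1 \times_0 (\text{row } 0 \text{ of } M_k) + (\text{col } 0 \text{ of } M_k) \times_1 e_1 - (M_k)_{0,0}\, e_1 \otimes e_1$, which is a rank-$\le 2$ correction per slice but these can be \emph{reorganized} across all $p$ slices into a total of only $p$ rank-one tensors, because the $m$-direction vector $e_1$ and the $n$-direction vector $e_1$ are shared across all $k$: the combined correction is $e_1 \otimes X + Y \otimes e_1$ where $X \in \F^{n\times p}$, $Y \in \F^{m \times p}$, and after absorbing the overlap this has tensor rank at most $p$ (one rank-one term per axis-2 coordinate, grouping the $e_1 \otimes (\cdot)$ part and the $(\cdot)\otimes e_1$ part together slice by slice).

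After this subtraction, row $0$ and column $0$ of every axis-2 slice are zero, so the residual tensor is supported on indices $(i,j,k)$ with $i \ge 1$, $j \ge 1$; that is a tensor of shape $(m-1)\times(n-1)\times p$, which has rank at most $\maxrank(m-1,n-1,p)$. Adding the two pieces gives $\rk(T) \le \maxrank(m-1,n-1,p) + p$, and taking the max over $T$ proves the theorem.

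\textbf{Main obstacle.} The delicate point is the bookkeeping that shows the "clear row $0$ and column $0$ simultaneously in all $p$ slices" operation really costs only $p$ rank-one tensors rather than $2p$ (or $p+1$). The subtlety is that naively we subtract, per slice, a rank-$\le 2$ matrix of the form (outer product with fixed $e_1$ on the left) plus (outer product with fixed $e_1$ on the right); across slices the \emph{left} factors all equal $e_1$ and the \emph{right} factors all equal $e_1$, so the two families collapse into $e_1 \otimes X + Y \otimes e_1$, and one must verify carefully — accounting for the doubly-counted $(0,0,k)$ entries — that this whole object can be realized as $\sum_{k} (\text{rank-one})$ with exactly $p$ terms (e.g. by writing the $k$-th rank-one term as $v_k \otimes w_k \otimes e_k$ for appropriate $v_k, w_k$, which works precisely because on the $i=0$ or $j=0$ cross we only need a rank-one pattern per fixed $k$). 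Making that grouping rigorous — and handling the edge cases $m = n$, or slices that are zero, or the case where no single invertible axis-operation normalizes $M_0$ because $M_0$ itself is zero but some other slice is not — is where the real care lies; everything else is immediate from subadditivity of tensor rank.
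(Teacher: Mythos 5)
There is a real gap in the ``reorganization'' step, and it is exactly where you flagged uncertainty. The per-slice correction you propose, $C_k := e_1 a_k^\intercal + b_k e_1^\intercal - (M_k)_{0,0}\, e_1 e_1^\intercal$ (with $a_k$ the first row and $b_k$ the first column of $M_k$), does clear the cross of $M_k$, but it has matrix rank up to $2$, not $1$: for instance with $a_k = (0,1)$ and $b_k = (0,1)^\intercal$ you get $C_k = \left[\begin{smallmatrix}0 & 1\\ 1 & 0\end{smallmatrix}\right]$, which is invertible. Consequently $\sum_k C_k \otimes e_k = e_1 \otimes X + Y \otimes e_1$ has tensor rank bounded only by $\rk(X)+\rk(Y) \le 2p$, and no regrouping ``slice by slice'' can bring it down to $p$, because the $k$-th slice of that tensor is precisely $C_k$, whose rank may genuinely be $2$. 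Normalizing only $M_0$ to $\left[\begin{smallmatrix}1 & 0\\ 0 & *\end{smallmatrix}\right]$ by axis-$0$/axis-$1$ operations does nothing for $k\ne 0$, so this does not rescue the count.

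The missing ingredient is a different per-slice correction together with a normalization of \emph{every} slice's corner. First (assuming $T\ne 0$, else trivial) permute and scale so $T_{0,0,0}=1$, then add a multiple of slice $0$ (along axis $2$) to each other slice $k$ so that $T_{0,0,k}=1$ for all $k$; this is an invertible axis-$2$ operation, hence rank-preserving. Now subtract, for each $k$, the single rank-one matrix $b_k a_k^\intercal$ (column $0$ of $M_k$ times row $0$ of $M_k$): because $(M_k)_{0,0}=1$, this outer product agrees with $M_k$ on both the first row ($(b_k a_k^\intercal)_{0,\ell} = (M_k)_{0,0}(M_k)_{0,\ell}$) and the first column ($(b_k a_k^\intercal)_{j,0} = (M_k)_{j,0}(M_k)_{0,0}$). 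That is $p$ rank-one tensors $b_k \otimes a_k \otimes e_k$, after which the residual lives in an $(m-1)\times(n-1)\times p$ block, giving $\rk(T)\le p + \maxrank(m-1,n-1,p)$. This is in fact what the paper's proof does (up to an axis relabeling --- it phrases the argument with axis-$0$ slices and normalizes $T_{i,0,0}=1$, proving the equivalent $\maxrank(m,n,p)\le \maxrank(m,n-1,p-1)+m$). The moral: the cross of a matrix is matched by a rank-one outer product only if the corner entry is nonzero, and the whole proof hinges on forcing that nonzeroness uniformly across slices via the axis operation, rather than trying to collapse a pair of rank-one families afterward.
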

\begin{proof}
By permuting and scaling slices of a tensor $T\in\F^{m\times n\times p}$, WLOG we can assume $T_{0,0,0}=1$; then we can add scalar multiples of $T_{0,:,:}$ to all other $T_{i,:,:}$ so that for all $i$, $T_{i,0,0}=1$; doing so is equivalent to performing $\M{1\\\lambda_1 & 1 \\ \vdots && \ddots \\ \lambda_{m-1} &&& 1}\times_0$ for some scalars $\lambda_i$ that we can choose.

Now for each $i$, we subtract the matrix outer product $T_{i,:,0}\otimes T_{i,0,:}$ from $T_{i,:,:}$; equivalently, we subtract $e_i\otimes T_{i,:,0}\otimes T_{i,0,:}$ from $T$, where $e_i$ has a 1 at index $i$ and zeros everywhere else.
Doing so changes each slice $T_{i,:,:}$ into the form $\bracket{\begin{array}{c|c}
0 & \begin{array}{ccc}0&\dots&0\end{array} \\
\hline
\begin{array}{c}0\\\vdots\\0\end{array} & M
\end{array}}$ for some arbitrary $M$.
Finally, removing the zeros at slices $(:,0,:)$ and $(:,:,0)$ leaves us with a $(m-1)\times (n-1)\times p$ tensor.
\end{proof}

\begin{corollary}
\label{howell-nnn}
$\maxrank(n,n,n)\le \ceil{\frac{3}{4}n^2}$.
\end{corollary}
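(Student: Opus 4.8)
The plan is to iterate Theorem~\ref{howell}. Since $\maxrank$ is invariant under permuting axes, Theorem~\ref{howell} may be applied to \emph{any} two of the three axes, reducing each of the chosen two by $1$ at an additive cost equal to the length of the remaining axis. The idea is therefore to always reduce the two \emph{largest} axes, so that the term we pay off is as small as possible, and to do this three times so that a cubic shape returns to a cubic shape.

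Concretely, I would first establish the recurrence
\[
\maxrank(n,n,n)\ \le\ \maxrank(n-2,n-2,n-2)+(3n-3)\qquad(n\ge 2),
\]
by three successive applications of Theorem~\ref{howell}: starting from $(n,n,n)$, reducing the two axes of length $n$ gives shape $(n,n-1,n-1)$ at cost $n$; then reducing the axes of lengths $n$ and $n-1$ gives $(n-1,n-1,n-2)$ at cost $n-1$; then reducing the two axes of length $n-1$ gives $(n-2,n-2,n-2)$ at cost $n-2$. Chaining the three inequalities yields the displayed bound (degenerate shapes with a $0$-length axis, which have maximum rank $0$, make the small cases consistent as well).

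Next I would close by induction on $n$ in steps of $2$, using the two base cases $\maxrank(0,0,0)=0=\ceil{3\cdot 0^2/4}$ and $\maxrank(1,1,1)=1=\ceil{3/4}$. For the inductive step the key identity is $\tfrac34 n^2-\tfrac34(n-2)^2=3n-3$; since $3n-3$ is an integer, adding it commutes with the ceiling, so $\ceil{\tfrac34(n-2)^2}+(3n-3)=\ceil{\tfrac34 n^2}$, and combining this with the recurrence and the inductive hypothesis $\maxrank(n-2,n-2,n-2)\le\ceil{\tfrac34(n-2)^2}$ gives $\maxrank(n,n,n)\le\ceil{\tfrac34 n^2}$, as claimed in Corollary~\ref{howell-nnn}.

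There is no real obstacle here; the only point needing a little care is handling even and odd $n$ uniformly. This is taken care of automatically by using two base cases and inducting in steps of $2$, together with the observation that the ceiling in $\ceil{3n^2/4}$ only rounds up in the odd case (where $3n^2\equiv 3\pmod 4$) while the recurrence's increment $3n-3$ is always an integer, so it never interacts with that rounding. If one prefers, the recurrence can instead be unrolled directly to the closed forms $3(n/2)^2$ for even $n$ and $3((n-1)/2)^2+3(n-1)/2+1$ for odd $n$, both of which equal $\ceil{3n^2/4}$.
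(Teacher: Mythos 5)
Your proof is correct, and it takes a genuinely different route from the paper's. The paper applies Theorem~\ref{howell} always to the same two axes, $\lfloor n/2\rfloor$ times, paying the constant cost $n$ per step; this yields $\maxrank(n,n,n)\le\maxrank(\lceil n/2\rceil,\lceil n/2\rceil,n)+n\lfloor n/2\rfloor$, after which the argument is closed off with the trivial upper bound $\maxrank(\lceil n/2\rceil,\lceil n/2\rceil,n)\le\lceil n/2\rceil^2$ (Theorem~\ref{trivial-upper-bound}) and some even-odd arithmetic. You instead rotate which two axes are reduced, keeping the shape (nearly) cubic, which gives the tidy self-similar recurrence $\maxrank(n,n,n)\le\maxrank(n-2,n-2,n-2)+(3n-3)$ and a clean induction in steps of two from the trivial base cases. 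Both approaches chain the same telescoping sum and arrive at exactly the same numeric bound $\lceil\tfrac34 n^2\rceil$; your version is somewhat more self-contained (it needs only Theorem~\ref{howell} and $\maxrank(0,0,0)=0$, $\maxrank(1,1,1)=1$, and the observation that $3n-3$ is an integer makes the ceiling bookkeeping painless), whereas the paper's version is shorter to state at the cost of invoking a second theorem and a slightly messier final casework on parity.
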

\begin{proof}
Apply Theorem \ref{howell} $\floor{\frac{n}{2}}$ many times on $\maxrank(n,n,n)$ to get $\maxrank(n,n,n)\le \maxrank(\ceil{\frac{n}{2}},\ceil{\frac{n}{2}},n)+n\floor{\frac{n}{2}}$. Then by Theorem \ref{trivial-upper-bound},
$\maxrank(\ceil{\frac{n}{2}},\ceil{\frac{n}{2}},n)\le \ceil{\frac{n}{2}}^2$, so $\maxrank(n,n,n)\le \ceil{\frac{n}{2}}^2+n\floor{\frac{n}{2}}$. Some even-odd casework shows that this expression equals $\ceil{\frac{3}{4}n^2}$.
\end{proof}

Over the complex numbers $\C$, a much stronger and better-known result holds:
\begin{theorem}[\cite{atkinson-stephens}]
\label{atkinson-stephens}
$\maxrank_\C(m,n,p)\le m+\floor{p/2}n$.
\end{theorem}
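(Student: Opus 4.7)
The plan is to proceed by induction on $p$, reducing to the base cases $p\le 2$ via the inductive step $\maxrank_\C(m,n,p) \le \maxrank_\C(m,n,p-2) + n$ for $p\ge 3$. The base cases come from Theorem~\ref{trivial-upper-bound}: for $p\in\{0,1\}$ we have $\rk\le\min(m,n)\le m$; and for $p=2$ the trivial bound $\min(2m,2n,mn)$ is at most $m+n$, assuming WLOG $m\le n$ (which we may arrange by permuting axes, since $\maxrank$ is symmetric). Telescoping the reduction from $p$ down to $1$ or $2$ then yields the target bound $m+\lfloor p/2\rfloor n$.

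For the inductive step, given any $T\in\C^{m\times n\times p}$, I would apply an invertible axis-$2$ operation $M\in\GL{p}{\C}$ (which does not change the rank) such that the last two axis-$2$ slices of $M\times_2 T$, call them $A$ and $B$ in $\C^{m\times n}$, form an $m\times n\times 2$ subtensor of tensor rank at most $n$. Given such $M$, I construct a rank-$\le n$ tensor $T_1\in\C^{m\times n\times p}$ whose last two axis-$2$ slices are $A$ and $B$ and whose remaining slices are zero; the residual $(M\times_2 T) - T_1$ then has its last two axis-$2$ slices vanish, so it is equivalent to a tensor in $\C^{m\times n\times(p-2)}$ and has rank at most $\maxrank_\C(m,n,p-2)$ by the inductive hypothesis. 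Adding the two pieces yields $\rk(T)\le n + \maxrank_\C(m,n,p-2)$, as required.

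The main obstacle and the technical heart of the proof is establishing the existence of such an $M$: equivalently, the axis-$2$ span of $T$ (a subspace of $\C^{m\times n}$ of dimension at most $p$) must contain a $2$-dimensional subspace on which the associated pair of matrices has $m\times n\times 2$ tensor rank at most $n$. Heuristically this is plausible because the generic rank of $m\times n\times 2$ tensors over $\C$ is $\lceil 2mn/(m+n)\rceil$, which is at most $n$ whenever $m\le n$; thus a generic $2$-dim subspace of any sufficiently large axis-$2$ span should yield a pair of rank at most $n$, and the set of ``bad'' 2-dim subspaces (yielding pair rank exceeding $n$) is a Zariski-closed subvariety of the Grassmannian of 2-dim subspaces of $\C^p$. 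The hard part is proving it is a \emph{proper} subvariety for every $T$, not just generic ones. Carrying this out uses algebraic closedness of $\C$ in an essential way: one brings the matrix pencil of two carefully chosen axis-$2$ slices into Kronecker canonical form over $\C$, which exposes enough structure (eigenvalues, regular Jordan blocks, singular $L_k$ and $L_k^T$ blocks) to exhibit a concrete 2-dim subspace of the axis-$2$ span of $T$ that achieves rank at most $n$, block by block.
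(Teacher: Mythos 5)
First, note that the paper does not actually prove Theorem \ref{atkinson-stephens}; it cites \cite{atkinson-stephens}, and in Section \ref{sec:max-rank-barriers} it identifies the engine of that proof as Lemma \ref{simul-diag}: the $p$ slices are grouped into $\floor{p/2}$ pairs, each pair is perturbed by \emph{diagonal} matrices so that the corrected pair becomes simultaneously diagonalizable (hence of rank at most $n$), and all of the diagonal corrections across all pairs are collected into a single tensor whose slices are supported on the diagonal, which has rank at most $m$ --- that is where the additive $m$ comes from. Your route is genuinely different: you want each peeled pair of slices to \emph{already} have rank at most $n$ inside the axis-$2$ span, with no shared correction. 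That is a strictly stronger requirement than what Atkinson--Stephens establish, and it is exactly the step your proposal leaves unproved.

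The gap is your key lemma (every axis-$2$ span of dimension $\ge 3$ contains a $2$-dimensional subspace of pencil rank $\le n$), and the justification you sketch does not hold up. The ``bad'' subspaces need not form a Zariski-closed subvariety of the Grassmannian, because pencil rank is not semicontinuous: in $\operatorname{Gr}(2,\C^{2\times 2})$ the subspaces $\operatorname{span}\brace{E_{11}+tE_{22},\ E_{12}}$ have rank $3$ for $t\ne 0$ but rank $2$ at $t=0$. Worse, the bad set can be \emph{dense} inside a given span, so no genericity argument within the span can work: take $V=\operatorname{span}\brace{I,\ J,\ J^2}\subset\C^{3\times 3}$ with $J=J_3(0)$. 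Any $2$-dimensional subspace of $V$ containing an invertible $X$ satisfies $X^{-1}Y=\alpha I+N$ with $N\ne 0$ nilpotent (upper-triangular Toeplitz matrices commute), so $X^{-1}Y$ is not diagonalizable and the pencil has rank $4>n$; the unique subspace of rank $\le 3$ is $\operatorname{span}\brace{J,\ J^2}$. So a proof must locate such needle-in-a-haystack subspaces, and your appeal to the Kronecker form of ``two carefully chosen slices'' gives no mechanism for this --- it cannot suffice on its own, since a single $2$-dimensional span can have rank as large as $\floor{3n/2}$. Whether your lemma is even true for every span is unclear to me; the cited proof deliberately sidesteps it via the globally shared diagonal correction. (Minor further issues: the case where the slice span has dimension $\le 2$ must be routed through the $p\le 2$ base case; the generic rank of $m\times n\times 2$ over $\C$ for $m\le n$ is $\min(2m,n)$, not $\ceil{2mn/(m+n)}$ (e.g.\ $3\times 5\times 2$ has generic rank $5$); and since the bound is asymmetric in $m,n$ you cannot freely assume $m\le n$ --- your base case survives only because $\min(2m,2n)\le m+n$ holds regardless.)
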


In Section \ref{sec:max-rank-barriers}, we show that the technique used by \cite{atkinson-stephens} to prove Theorem \ref{atkinson-stephens} fails over a finite field.

Finally, exact values of maximum rank are known for small tensor shapes:
\begin{itemize}
    \item $\maxrank(2,2,2)=3$ over any field (see Example \ref{W-state} and Theorem \ref{howell-nnn})
    \item $\maxrank(3,3,3)\le 6$ over any field, and for some fields (e.g. $\F_2$) this bound is tight \cite{lavrauw-pavan-zanella, bremner-hu}
    \item $\maxrank_\C(3,3,3)=5$ \cite{bremner-hu}
\end{itemize}

We give more details and new results in Section \ref{max-rank-new-results}.

\subsection{Substitution method}
To find new results, as well as restate previous ones in a clean form, we will often want to prove lower bounds on the tensor rank of specific tensors. One powerful method to do so is the \textit{substitution method}, first described by \cite{substitution-method} in the context of polynomials and repurposed by \cite{alexeev-forbes-tsimerman} for tensors; we follow the latter source. Intuitively, if we remove a slice along an arbitrary axis from a tensor, there exists a way of adding scalar multiples of that slice to the remaining slices so that the tensor rank decreases by at least one.

\begin{theorem}[\cite{alexeev-forbes-tsimerman}, Appendix B]
\label{thm:sub-method}
Let $T\in\F^{n_0\times\dots\times n_{D-1}}$ be a tensor, $d$ be an axis, and $0\le i<n_d$ be a coordinate such that the slice $T_{\cdots,:,i,:,\cdots}$ along axis $d$ is nonzero. Then there exist scalars $\lambda_j$ for each $0\le j<n_d,\ j\ne i$ such that the tensor $T'$ with axis-$d$ slices $T_{\cdots,:,0,:,\cdots}+\lambda_0 T_{\cdots,:,i,:,\cdots},\ \cdots,\ T_{\cdots,:,n_d-1,:,\cdots}+\lambda_{n_d-1} T_{\cdots,:,i,:,\cdots}$ satisfies $\rk(T')\le \rk(T)-1$.
\end{theorem}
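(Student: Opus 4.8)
The plan is to prove Theorem \ref{thm:sub-method} by induction on $\rk(T)$ and by reducing to an essentially 2-dimensional statement about rank-1 terms involving the fixed slice direction. First I would set up coordinates: without loss of generality take $d$ to be the last axis, so $T \in \F^{n_0 \times \dots \times n_{D-1}}$, and fix the coordinate $i$; write $T = \cpdeval{A_0,\dots,A_{D-1}}$ for an optimal rank-$R$ CPD, $R = \rk(T)$. The relevant data attached to each summand $r$ is the scalar $(A_{D-1})_{i,r}$, i.e.\ the $i$-th coordinate of the last factor vector. Call a summand ``relevant'' if this scalar is nonzero; since the slice $T_{\cdots,:,i}$ is nonzero, at least one summand is relevant (after possibly discarding zero columns). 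The scalars $\lambda_j$ we want to add are the same for every slice $j$, and choosing them amounts to replacing $A_{D-1}$ by $L A_{D-1}$ where $L$ is the elementary matrix $I + \sum_{j \ne i} \lambda_j e_j e_i^\intercal$; the new slice in position $i$ is unchanged, and $T'$ is obtained by then \emph{deleting} row $i$ of $A_{D-1}$ (equivalently projecting away the $i$-coordinate). So the goal becomes: choose $\lambda = (\lambda_j)_{j\ne i}$ so that the tensor $\cpdeval{A_0,\dots,A_{D-2}, P_i L A_{D-1}}$, where $P_i$ drops the $i$-th row, has rank $\le R-1$.

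The key step is a dichotomy on the number of relevant summands. If there is exactly one relevant summand, say $r_0$, then after applying $P_i$ to kill row $i$ the only summand whose last factor is affected nontrivially by the choice of $\lambda$ is $r_0$; but more importantly, that summand's contribution to $T'$ is $\bigotimes_{d<D-1}(A_d)_{:,r_0} \otimes (P_i L A_{D-1})_{:,r_0}$, and since $r_0$ was the unique summand with $(A_{D-1})_{i,r_0}\ne 0$, we can absorb: take $\lambda_j = -(A_{D-1})_{j,r_0}/(A_{D-1})_{i,r_0}$, which forces $(L A_{D-1})_{:,r_0}$ to be supported only at coordinate $i$, hence $(P_i L A_{D-1})_{:,r_0}=0$, so that summand vanishes and $T'$ has rank $\le R-1$. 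If there are two or more relevant summands, then we are not immediately done with this choice, and here is where the induction enters: pick any relevant summand $r_0$; subtract the rank-1 tensor $t_0 := \bigotimes_d (A_d)_{:,r_0}$ from $T$ to get $T_0$ of rank $\le R-1$; apply the inductive hypothesis to $T_0$ along axis $d$ at coordinate $i$ (the slice $(T_0)_{\cdots,:,i}$ equals $T_{\cdots,:,i}$ minus a rank-$\le 1$ matrix, which may be zero — this is the case that needs care), obtaining scalars $\lambda$ with $\rk(T_0') \le \rk(T_0) - 1 \le R - 2$, and then observe $T' = T_0' + t_0'$ where $t_0'$ is what the same $\lambda$ and row-deletion do to $t_0$; since $t_0'$ has rank $\le 1$ we get $\rk(T') \le R - 1$ as desired, \emph{provided} the same $\lambda$ works — which it does because the operation ``add $\lambda_j$ times slice $i$ to slice $j$, then delete slice $i$'' is linear in the tensor.

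The hard part will be the boundary case in the induction where, after subtracting the rank-1 term $t_0$, the $i$-th slice of $T_0$ becomes zero, so the inductive hypothesis does not apply to $T_0$ at coordinate $i$. The fix is to be more careful about which rank-1 term to peel off, or to peel off a cleverly chosen \emph{combination}: one wants to subtract a rank-1 tensor $t_0$ whose $i$-th slice (along axis $d$) is nonzero but which still leaves $(T_0)_{\cdots,:,i}$ nonzero, when possible; and when every optimal CPD forces the $i$-slice of $T$ to be ``rank 1 and carried by a single summand,'' one is in the first case of the dichotomy anyway. Making this alternative airtight — essentially arguing that either the $i$-slice is a single rank-1 matrix pinned to one summand (base-case handled) or we can arrange $(T_0)_{\cdots,:,i} \ne 0$ — is the real content; I would handle it by choosing $r_0$ to be a relevant summand such that $(A_{D-1})_{i,\cdot}$ restricted to the other relevant summands is not a scalar multiple of the pattern forced by $r_0$, which exists precisely when there are $\ge 2$ relevant summands with linearly independent associated slice data. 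The rest is the bookkeeping of tracking the elementary matrix $L$ and the projection $P_i$ through the CPD, which is routine given the axis-operation/CPD correspondence stated earlier in the excerpt.
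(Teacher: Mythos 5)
The core observation in your first case is correct and, in fact, applies unconditionally: once you pick \emph{any} relevant summand $r_0$ (i.e., any $r_0$ with $(A_{D-1})_{i,r_0}\ne 0$, which exists because the $i$-slice of $T$ is nonzero), the choice $\lambda_j = -(A_{D-1})_{j,r_0}/(A_{D-1})_{i,r_0}$ makes the $r_0$-th column of $P_i L A_{D-1}$ vanish, so the $r_0$-th summand drops out of $\cpdeval{A_0,\dots,A_{D-2},P_i L A_{D-1}}$ and $\rk(T')\le R-1$. The uniqueness you invoke there plays no role: you only need to annihilate \emph{one} column, not all of the relevant ones, and it does not matter how $L$ perturbs the other columns of $A_{D-1}$ --- they simply remain as the $R-1$ surviving summands. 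This is exactly the paper's proof, done after reducing to $d=0,\ i=0$ by permutation of axes and slices.

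Because the case-1 argument already closes the proof for any number of relevant summands, the dichotomy and the induction are unnecessary and, as you yourself flagged, the induction has a real hole: nothing prevents the peeled-off rank-1 term $t_0$ from carrying the \emph{entire} $i$-slice of $T$, leaving $(T_0)_{\cdots,:,i}=0$ and stalling the inductive hypothesis. The repair you sketch (choosing $r_0$ so the remaining relevant data is not proportional to the pattern forced by $r_0$) is vague, would need a nontrivial lemma of its own, and ultimately reconstructs the direct argument from a more fragile angle. Re-examine your case 1 and notice you never actually use the assumption that $r_0$ is the unique relevant summand; once that is seen, the rest of the proposal can be deleted.
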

\begin{proof}
By permuting axes and slices, it suffices to prove the case $d=0,\ i=0$; then $T'=\M{\lambda_1 & 1 \\ \vdots && \ddots \\ \lambda_{n_0-1} &&& 1} \times_0 T$.

Let $T=\cpdeval{A_0,\dots,A_{D-1}}$ be a rank-$\rk(T)$ CPD. Because $T_{0,:,\cdots}\ne O$, there must exist some $r$ s.t. $(A_0)_{0,r}\ne 0$. Then setting $\lambda_j=-\frac{(A_0)_{j,r}}{(A_0)_{0,r}}$ forces the $r$-th column of $\M{\lambda_1 & 1 \\ \vdots && \ddots \\ \lambda_{n_0-1} &&& 1}A_0$ to be all-zeros.
Since $T'=\cpdeval{\M{\lambda_1 & 1 \\ \vdots && \ddots \\ \lambda_{n_0-1} &&& 1}A_0, A_1,\dots,A_{D-1}}$, removing the $r$-th column of each factor matrix from this CPD does not change the resulting tensor, so $\rk(T')\le \rk(T)-1$.
\end{proof}

\begin{corollary}
Using the same $T$ and $T'$ as in Theorem \ref{thm:sub-method}, if we can find some $R$ such that $\rk(T')\ge R$ for all choices of scalars $\lambda_j$, then $\rk(T)\ge R+1$.
\end{corollary}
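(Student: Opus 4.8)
The plan is to read Theorem \ref{thm:sub-method} as an existence statement and contrapose it against the universally quantified hypothesis of the corollary. Theorem \ref{thm:sub-method} hands us \emph{some} particular tuple of scalars $\lambda_j$ for which the transformed tensor $T'$ obeys $\rk(T')\le\rk(T)-1$; the corollary assumes $\rk(T')\ge R$ holds for \emph{every} admissible tuple $\lambda_j$. Since the two quantifiers range over the same family of transformations, I can simply specialize the universal bound to the witness produced by the theorem.

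Concretely I would proceed as follows. First, invoke Theorem \ref{thm:sub-method} with the given axis $d$ and coordinate $i$ (carrying over its precondition that the slice $T_{\cdots,:,i,:,\cdots}$ is nonzero) to obtain scalars $\lambda_j$ for $0\le j<n_d$, $j\ne i$, and the associated tensor $T'$ satisfying $\rk(T')\le\rk(T)-1$. Second, apply the hypothesis of the corollary to exactly this $T'$, which gives $R\le\rk(T')$. Finally, chain the inequalities: $R\le\rk(T')\le\rk(T)-1$, so $\rk(T)\ge R+1$, as desired.

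I do not expect any real obstacle here; the content is purely logical bookkeeping. The only point deserving a word of care is the alignment of quantifiers --- one must check that the family of $\lambda_j$'s over which the corollary's hypothesis is asserted is literally the same family from which Theorem \ref{thm:sub-method} draws its witness, so that the specialization step is legitimate --- together with the inheritance of the nonvanishing-slice precondition from Theorem \ref{thm:sub-method}, without which $T'$ is not even defined. Neither of these requires computation; the corollary is an immediate repackaging of the theorem into a form convenient for lower-bound arguments, and I would present it as a two-line deduction.
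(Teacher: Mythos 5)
Your proof is correct and is the natural (indeed, the only reasonable) argument: the paper states this corollary without proof, precisely because it is the immediate contrapositive bookkeeping you describe. Specializing the universally quantified hypothesis $\rk(T')\ge R$ to the particular tuple of scalars $\lambda_j$ whose existence Theorem \ref{thm:sub-method} guarantees, and chaining $R\le\rk(T')\le\rk(T)-1$, is exactly the intended reasoning; your note that the nonvanishing-slice precondition is inherited is the right point of care and is likewise what the phrase ``using the same $T$ and $T'$ as in Theorem \ref{thm:sub-method}'' is meant to carry over.
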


When applying the substitution method to the slice $T_{\cdots,:,i,:,\cdots}$ along some axis $d$, we call this action ``subbing out $(\cdots,:,i,:,\cdots)$".

Overall, the substitution method is the strongest method of proving tensor rank lower bounds without resorting to computational search \cite{blaser-explicit-tensors}. Because using the substitution method to its full potential may require subbing out along every axis (see Example \ref{landsberg-michalek-3n}), it is difficult to combine it with our CPD search algorithm (Theorem \ref{thm:main}).

Below we list some examples of the substitution method, including previously discovered results:

\begin{example}
\label{W-state}
The ``W-state" tensor $T=\M{\M{1\\&\phantom{0}}&\M{&1\\1}}$ has rank 3 over any field.
\end{example}
\begin{proof}
Subbing out $(:,1,:)$ yields $\M{\M{1&\phantom{0}}&\M{x&1}}$ for some scalar $x$ we cannot control.
Then, subbing out $(:,:,1)$ yields $\M{\M{1}&\M{y}}$ for a (possibly different) scalar $y$. Finally, subbing out $(:,:,0)$ yields $\M{}$ (an empty tensor), so we cannot continue.

Since an empty tensor trivially has rank 0, and we subbed out three slices from $T$, we have proven $\rk(T)\ge 3$.
On the other hand, $\rk(T)\le 3$ because there are only 3 nonzero elements in $T$.


\end{proof}

Example \ref{W-state} is an explicit tensor of shape $n_0\times n_1\times n_2$ with rank strictly higher than any $n_d$, which is not possible for matrices.

The next example combines the substitution method with axis-operations:

\begin{example}[Thm, \cite{brockett-dobkin}]
\label{mm222-rank7}
The tensor $T=\ang{2,2,2}$ corresponding to matrix multiplication of two $2\times 2$ matrices has rank 7 over any field.
\end{example}
\begin{proof}
We can permute slices along several axes so that $T$ becomes
\[\M{
\M{1\\&\phantom{0}\\&&1\\&&&\phantom{0}}
&
\M{&1\\\phantom{0}\\&&&1\\&&&\phantom{0}}
&
\M{&&&\phantom{0}\\1\\&\phantom{0}\\&&1}
&
\M{\phantom{0}\\&1\\\phantom{0}\\&&&1}
}.\]

Subbing out $(:,3,:)$ yields
\[\M{
\M{1\\&\phantom{0}\\&&1&\phantom{0}}
&
\M{&1\\\phantom{0}\\&&&1}
&
\M{&&x&\phantom{0}\\1&&y\\&&z}
&
\M{\phantom{0}&&&x\\&1&&y\\&&&z}
}\]

for some scalars $x,y,z$ we cannot control.

We notice that the sub-tensor $(:2,:,:)$ has rank 4 (because its axis-2 slices are linearly independent). To prove $\rk(T)\ge 7$, we would like to sub out slices $(2,:,:)$ and $(3,:,:)$, and then prove the resulting tensor has rank $\ge 4$. However, if we did that right now, we would destroy the 1s at $(0,2,2)$ and $(1,2,3)$, and the resulting tensor would at best only have rank 2.

To avoid destroying terms in $(:2,:,:)$, we perform some invertible axis-operations (which do not affect tensor rank) to cancel some unwanted terms in $(2,:,:)$ and $(3,:,:)$. First, we apply $\M{1\\&1\\-z&&1\\&-z&&1}\times_0$ to get

\[\M{
\M{1\\&\phantom{0}\\&&1&\phantom{0}}
&
\M{&1\\\phantom{0}\\&&&1}
&
\M{-z&&x&\phantom{0}\\1&&y\\&&}
&
\M{\phantom{0}&-z&&x\\&1&&y\\&&&}
};\]

then we apply $\M{1&z\\&1\\&&1}\times_1$ to get

\[\M{
\M{1\\&\phantom{0}\\&&1&\phantom{0}}
&
\M{&1\\\phantom{0}\\&&&1}
&
\M{&&x+yz&\phantom{0}\\1&&y\\&&}
&
\M{\phantom{0}&&&x+yz\\&1&&y\\&&&}
}.\]

At this point, we replace all indeterminates with asterisks to simplify notation, as we will no longer have to examine their exact structure.

Now we sub out slices $(2,:,:)$ and $(3,:,:)$ to get


\[\scalebox{0.9}{$\M{
\M{1\\&\phantom{0}\\&&1&\phantom{0}}
+(*)\M{&&*&\phantom{0}\\ *&&*\\&&}
+(*)\M{\phantom{0}&&& *\\& *&& *\\&&&}
&\phantom{0}&
\M{&1\\\phantom{0}\\&&&1}
+(*)\M{&&*&\phantom{0}\\ *&&*\\&&}
+(*)\M{\phantom{0}&&& *\\& *&& *\\&&&}
}$}\]
\[
=\M{
\M{1&&*&*\\ *&*&*&*\\&&1&\phantom{0}}
&
\M{&1&*&*\\ *&*&*&*\\&&&1}
}.
\]

This tensor has rank at least 4, no matter what each * is replaced with,
since we can sub out $(:,:,0)$, $(:,:,1)$, $(:,:,2)$, and $(:,:,3)$ in that order; each slice we sub out avoids destroying the 1s in later slices.

Since we sub out seven slices in total,
$\rk(T)\ge 7$.
Because we did not perform any scalar divisions, this proof works over any field.

On the other hand, $\rk(T)\le 7$ over any field, because the CPD corresponding to Strassen's algorithm only contains integers.
\end{proof}

\begin{remark}
An earlier proof of Example \ref{mm222-rank7} is from \cite{winograd-222} and uses matrix determinants.
\end{remark}

The substitution method proof of Example \ref{mm222-rank7} can be generalized to the following:
\begin{theorem}[\cite{brockett-dobkin}]
$\rk(\ang{m,n,2})\ge mn+m+n-1$ over any field.
\end{theorem}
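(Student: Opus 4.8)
The plan is to generalize, essentially verbatim, the substitution-method argument of Example \ref{mm222-rank7} (which is the case $m=n=2$). Present $T=\ang{m,n,2}$ as the tensor computing the $2m$ bilinear forms $f_{ik}=\sum_{j=1}^{n}x_{ij}y_{jk}$ for $1\le i\le m$ and $k\in\{1,2\}$, where the three axes carry, respectively, the $mn$ ``left'' variables $x_{ij}$, the $2n$ ``right'' variables $y_{jk}$, and the $2m$ outputs. I would invoke Theorem \ref{thm:sub-method} exactly $mn+m+n-1=(n-1)+m(n-1)+2m$ times, in three phases, interleaving invertible axis-operations (which do not change the rank) between sub-outs so that every slice is nonzero at the moment it is subbed out, which is the hypothesis of Theorem \ref{thm:sub-method}. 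Since each sub-out drops the rank by at least $1$, summing the phase counts gives $\rk(T)\ge mn+m+n-1$; and, as in Example \ref{mm222-rank7}, nothing in the construction forces a restriction on the field, so the bound holds over every field.

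Phase 1 is to sub out, one at a time, the $n-1$ slices along the right-variable axis at $y_{12},\dots,y_{n-1,2}$; these are visibly nonzero, and the only effect of the sub-outs is to add adversary-chosen $\F$-linear combinations of $\{x_{ij}:j<n\}$ into the forms $f_{i2}$ (rank-$1$ ``garbage'' supported on the left variables), leaving every $f_{i1}$ untouched and leaving each $f_{i2}$ still genuinely depending on the monomial $x_{in}y_{n2}$. Phase 2 is to sub out, for every $i$ and every $j<n$, the left-variable slice at $x_{ij}$ --- that is $m(n-1)$ sub-outs --- but before each one I would apply invertible axis-operations (row operations on the left-variable axis together with compensating operations on the right-variable axis, generalizing the two explicit $\times_0$ and $\times_1$ cleanup steps of Example \ref{mm222-rank7}) that cancel the Phase-1 garbage sitting on $x_{ij}$ and re-expose the distinguished entry coming from the clean monomial $x_{ij}y_{j1}$, so that Theorem \ref{thm:sub-method} again applies; crucially these operations can be chosen so as not to disturb the $m$ slices at $x_{1n},\dots,x_{mn}$ or any of the $2m$ output slices, so that afterwards the tensor has collapsed, up to low-rank corruption, to a copy of $\ang{m,1,2}$ on the variables $x_{1n},\dots,x_{mn}$, the right variables, and the $2m$ outputs. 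Phase 3 is then to sub out all $2m$ output slices $f_{ik}$ one by one: the surviving clean part still carries the $2m$ pairwise-distinct monomials $x_{in}y_{n1}$ and $x_{in}y_{n2}$, so each output slice is nonzero when its turn comes, contributing the final $2m$ and leaving the empty tensor.

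The hard part is Phase 2. After Phase 1 the forms $f_{i2}$ carry adversary-injected linear garbage in the variables $x_{ij}$ with $j<n$, and a naive sub-out of an $x_{ij}$-slice would also destroy the term $x_{ij}y_{j1}$ that certifies nonzeroness of the output slices still needed in Phase 3. So the crux is to exhibit, for each of the $m(n-1)$ steps, explicit invertible axis-operations that sweep this garbage aside while provably leaving untouched the $m+2m$ slices ($x_{in}$ and $f_{ik}$) that must survive into Phase 3 --- and to check that the bookkeeping of which entry certifies which slice stays consistent across all $mn+m+n-1$ steps for general $m$ and $n$, not merely for $m=n=2$ where the garbage occupies only the single scalar $z$ appearing in Example \ref{mm222-rank7}.
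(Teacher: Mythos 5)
The paper does not actually prove this theorem; it cites \cite{brockett-dobkin} and simply remarks that the substitution-method argument of Example \ref{mm222-rank7} generalizes. So there is no paper proof to compare against, only the implicit assertion that the $\ang{2,2,2}$ argument scales up.

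Your proposal is in the right spirit and gets the accounting exactly right: $(n-1)+m(n-1)+2m = mn+m+n-1$ matches the $1+2+4=7$ split of Example \ref{mm222-rank7}. But you have correctly identified, and then left open, the load-bearing step. In the $\ang{2,2,2}$ case, after the single right-variable sub-out there is exactly one indeterminate $z$ worth of garbage, and the two hand-crafted operations $\times_0$ and $\times_1$ in the paper are tailored to cancel precisely that one term before the $x$-sub-outs. For general $(m,n)$, each of the $n-1$ right-variable sub-outs injects an independent vector of $\lambda$'s, and this garbage sits in the $x_{ij}$-slices for $j<n$ at every surviving $y$-position, including $y_{n,2}$, which you need intact to carry $x_{in}y_{n2}$ into Phase 3. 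Your claim that invertible axis-operations can sweep this aside ``so as not to disturb'' the $x_{in}$-slices and the output slices is exactly the thing that has to be proved, and it is not automatic: the cleanup on axis 1 changes the contents of every output slice (it cannot mix them, but it rewrites their entries), so you need to verify that the surviving $1$'s at $(y_{n1},f_{i1})$ and $(y_{n2},f_{i2})$ are preserved through all $m(n-1)$ rounds, and that the accumulated garbage from the Phase-2 sub-outs themselves (a fresh adversary vector per sub-out) never cancels the entries that certify nonzeroness in Phase 3. Until that bookkeeping is carried out explicitly --- for instance by an induction on $n$ that reduces $\ang{m,n,2}$ to $\ang{m,n-1,2}$ while losing exactly $m+1$ rank, which keeps the garbage local to one step at a time --- the argument remains a plan rather than a proof.
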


However, the substitution method has its limits. Because subbing out one slice contributes 1 point to the lower bound, we cannot obtain a lower bound better than $\sum_d n_d$.
Proving a lower bound even slightly above this limit is essentially the motivation for Problem \ref{beyond-thm:sub-method} and will likely require nonlinear methods.

The next few examples use the substitution method close to its limit:
\begin{example}[\cite{abelian-tensors}]
\label{landsberg-michalek-2n}
The tensor \[T(k):=\M{\M{1\\&\phantom{0}\\&&\phantom{0}\\&&&\phantom{0}} & \M{&1\\1\\&&\phantom{0}\\&&&\phantom{0}} & \M{&&&1\\&&1\\&1\\1}&\dots}\]

with shape \footnote{In \cite{abelian-tensors}, the tensor shape was incorrectly written as $k\times 2^k\times 2^k$.} $(k+1)\times 2^k\times 2^k$ has rank $2\cdot 2^k-1$.
\end{example}
\begin{proof}
Sub out slice $(:,i,:)$ for each $i=2^k-1,\dots,2^{k-1}$; then slice $(:,:,i)$ for each $i=2^k-1,\dots,2^{k-1}$; then remove slice $(k-1,:,:)$. Afterwards, we have the tensor $T(k-1)$, so we have shown $\rk(T(k))\ge \rk(T(k-1))+2^k$.
Applying induction on $k$ proves $\rk(T(k))\ge 2\cdot 2^k-1$, and counting the number of nonzero terms in $T(k)$ proves a matching upper bound.
\end{proof}

\begin{example}[\cite{abelian-tensors}]
\label{landsberg-michalek-3n}
The $2^k\times 2^k\times 2^k$ tensor $T'(k)$ such that
\begin{itemize}
    \item $T'(k)_{:k+1,:,:}=T(k)$, using the tensor $T(k)$ from Example \ref{landsberg-michalek-2n};
    \item for $i\ge k+1$: $T'(k)_{i,:,:}$ has a 1 at element $(2^k-1,\ 2^k-1-(i-k))$ and zeros everywhere else;
\end{itemize}

has rank $\ge 3\cdot 2^k - k - 3$.
\end{example}
\begin{proof}
Sub out slice $(i,:,:)$ for each $i\ge k+1$; then remove slice $(:,2^k-1,:)$. Doing so leaves us with the truncated tensor $T(k)_{:,:2^k-1,:}$. We can then continue as in the proof for Lemma \ref{landsberg-michalek-2n}, except we sub out one less slice, so
$\rk(T(k)_{:,:2^k-1,:})\ge 2\cdot 2^k - 2$.
Thus, $\rk(T'(k))\ge \rk(T(k)_{:,:2^k-1,:})+(2^k-(k+1))
=3\cdot 2^k - k - 3$.
\end{proof}

\begin{remark}
\label{landsberg-mm-lower}
The matrix multiplication tensor $\ang{n,n,n}$ is also known to have rank $\ge 3n-o(n)$ \cite{landsberg-mm-lower}.
\end{remark}

Finally, we use Example \ref{landsberg-michalek-2n} to prove lower bounds on $\maxrank(p,n,n)$ for $p=2,3$:
\begin{theorem}
\label{diag-stack}
For any constant $p$, $\maxrank(p,n,n)\ge (2-\frac{1}{2^p})n-O(1)$ over any field.
\end{theorem}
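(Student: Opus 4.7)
The natural approach is to block-diagonally stack copies of the Landsberg--Michalek tensor from Example~\ref{landsberg-michalek-2n} and apply the substitution method. Set $q := \lfloor n/2^{p-1}\rfloor$ and let $T := T(p-1)$ (which has shape $p\times 2^{p-1}\times 2^{p-1}$ and rank $2^p - 1$). Define $T^q \in \F^{p\times n\times n}$ so that its $i$-th axis-$0$ slice is the $n\times n$ matrix carrying $q$ copies of $T_{i,:,:}$ in disjoint $2^{p-1}\times 2^{p-1}$ diagonal blocks, padded with zeros. The key claim is $\rk(T^q) \ge q(2^p - 1)$, which combined with $q\ge n/2^{p-1} - 1$ and $p$ being a constant delivers the desired linear lower bound up to an $O(1)$ additive error.

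For the lower bound I would iterate Example~\ref{landsberg-michalek-2n}'s substitution proof one block at a time. For each $a = 0, 1, \dots, q-1$, execute within block $a$ the $2(2^{p-1} - 1) = 2^p - 2$ axis-$1$ and axis-$2$ subbings that the example uses on $T(p-1)$. The essential locality observation is: an axis-$1$ slice $T^q_{:,j,:}$ with $j$ lying in block $a$ has support contained in axis-$2$ indices of block $a$, so adding scalar multiples of it to any other axis-$1$ slice only alters entries at axis-$2$ positions inside block $a$. Consequently every block $b \ne a$'s diagonal sub-matrix stays a pristine copy of $T(p-1)$ until $b$'s turn, and each subbing demanded by the example's proof finds a nonzero slice inside the current block regardless of the adversary's coefficient choices.

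After all $q(2^p - 2)$ within-block subbings, the residual tensor has shape $p\times q\times q$: exactly one ``$0$-index'' per block survives along each of axes $1$ and $2$, corresponding to the base case $T(0)$ of the example's recursion. The finishing step is to observe that the axis-$0$ index-$0$ slice of the residual equals $I_q$: the diagonal entries are $1$ by inheritance from $T(p-1)_0 = E_{0,0}$ inside each block, and the off-diagonal entries remain $0$ because every slice subbed during the entire process has its axis-$0$ index-$0$ row/column supported only at the $(0,0)$ corner of its own block, so no cross-block contamination ever lands at axis-$0$ index-$0$ of a leftover position in another block. Combining $\rk(T^q) \ge \rk(\textrm{residual}) \ge \rk(I_q) = q$ with the count of $q(2^p - 2)$ subbings yields $\rk(T^q) \ge q(2^p - 1)$, as claimed.

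The main obstacle is the ``protection'' book-keeping in the previous paragraph --- namely, checking that axis-$0$ index-$0$ entries really do survive all $q(2^p - 2)$ adversarial subbings. This verification relies on the single-corner support of $T(p-1)_0 = E_{0,0}$, which forces every contamination term contributed at axis-$0$ index $0$ to vanish outside its own block, regardless of the adversary's coefficient choices. Attempting to sidestep this book-keeping by invoking Strassen's direct-sum conjecture to justify rank additivity on diagonal stacks would rest on an open problem; the explicit substitution-method argument above bypasses that dependency at the cost of the structural lemma about $E_{0,0}$.
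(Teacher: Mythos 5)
Your argument is essentially the paper's: diagonally stack copies of the tensor from Example~\ref{landsberg-michalek-2n}, and use the fact that its substitution-method proof only subs out slices along axes $1$ and $2$, so the copies can be processed one at a time without interference; your block-locality and ``protected identity residual'' book-keeping simply spells out what the paper asserts in a single sentence, and that part of your write-up is sound.

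The one substantive point of divergence is the constant. To fit the shape $p\times n\times n$ you (correctly) take $T(p-1)$, which has shape $p\times 2^{p-1}\times 2^{p-1}$ and rank $2^p-1$, and your stacking gives $\rk\paren{T^q}\ge \floor{n/2^{p-1}}(2^p-1)=\paren{2-\tfrac{1}{2^{p-1}}}n-O(1)$ --- not the stated $\paren{2-\tfrac{1}{2^{p}}}n-O(1)$, so you should not claim the literal bound of the theorem. This mismatch is inherited from the paper itself rather than being a flaw in your reasoning: the paper's proof sets $k=p$, but $T(p)$ has $p+1$ axis-$0$ slices and hence does not fit inside a $p\times n\times n$ tensor, so its stated exponent is off by one. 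Indeed, at $p=2$ the literal statement would give $\tfrac{7}{4}n-O(1)$, contradicting the exact value $\maxrank(n,n,2)\approx\tfrac{3}{2}n$ from Theorem~\ref{nn2-max-rank}, whereas your constant $2-\tfrac{1}{2^{p-1}}$ is the one consistent with the paper's surrounding remarks (it matches $\maxrank(n,n,2)$ at $p=2$ up to $O(1)$ and gives $\tfrac{7}{4}n$ at $p=3$). So your proof establishes the corrected form of the theorem by the same method; state the bound as $\paren{2-\tfrac{1}{2^{p-1}}}n-O(1)$ (or equivalently prove the stated constant for shape $(p+1)\times n\times n$).
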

\begin{proof}
Let $T$ be the tensor in Example \ref{landsberg-michalek-2n} with $k=p$. Consider diagonally stacking $\floor{\frac{n}{2^p}}$ many copies of $T$ as so:

$T':=\M{
\M{T_{0,:,:} \\ &\ddots \\ &&T_{0,:,:}}
&\dots&
\M{T_{k-1,:,:} \\ &\ddots \\ &&T_{k-1,:,:}}
},$

then padding $T'$ with zeros so that it has shape $p\times n\times n$.

Because we are working with a larger tensor, subbing out one slice will cause scalar multiples of it to be added to a larger set of elements; however, because of how the copies of $T$ are placed, subbing out along axes 1 or 2 will not disrupt any terms in the sub-tensors $\paren{:,\ i\cdot 2^k:(i+1)2^k,\ i\cdot 2^k:(i+1)2^k}$ for $0\le i<\floor{\frac{n}{2^p}}$.

Because the proof of Example \ref{landsberg-michalek-2n} only subbed out slices along axes 1 and 2, we can perform the same substitution procedure on each copy of $T$, one at a time, without disrupting any other copies. Thus, $\rk(T')\ge \floor{\frac{n}{2^p}}(2\cdot 2^p-1)=(2-\frac{1}{2^p})n-O(1)$ for any constant $p$.
\end{proof}

\begin{remark}
For $p=2$, Theorem \ref{diag-stack} is within an additive $O(1)$ factor of the exact value of $\maxrank(n,n,2)$ \cite{nn2-max-rank}.
\end{remark}

\begin{remark}
The case $p=3$ was previously mentioned by \cite{atkinson-stephens} without proof.
\end{remark}

\begin{remark}
Theorem \ref{diag-stack} is no better than the counting argument (Theorem \ref{counting-lower-bound}) if $p\ge 4$.
\end{remark}

\subsection{New results}
\label{max-rank-new-results}
We first improve existing bounds on $\maxrank(n,n,n)$ and $\maxrank(m,n,mn-k)$ by tuning parameters more carefully.

\begin{theorem}
\label{improved-nnn}
$\maxrank(n,n,n)\le \frac{23}{32}n^2+O(n)$ over any field.
\end{theorem}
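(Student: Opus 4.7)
The plan is to combine Theorem \ref{howell} with Theorem \ref{nn2-max-rank} via a less symmetric reduction than the one used in Corollary \ref{howell-nnn}. Rather than applying Howell on axes $0, 1$ for $\floor{n/2}$ iterations and then invoking the trivial bound, the idea is to apply Howell on axes $1, 2$ for only $b$ iterations (where $b$ will be optimized), and then handle the residual by partitioning axis $2$ into size-$2$ slabs, each of which is bounded by Theorem \ref{nn2-max-rank}.

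Concretely, apply Theorem \ref{howell} in the permuted form that reduces axes $1$ and $2$ at cost $n$ per iteration, yielding $\maxrank(n, n, n) \le bn + \maxrank(n, n-b, n-b)$. Split the residual along axis $2$ into $\floor{(n-b)/2}$ consecutive pairs of slices; tensor rank is subadditive under this slab decomposition since the slabs have disjoint axis-$2$ support. Each slab has shape $(n, n-b, 2)$, and provided $b \le (n-1)/2$ (so that $n - b \le n \le 2(n-b) - 1$), Theorem \ref{nn2-max-rank} gives each slab rank at most $(n - b) + \floor{n/2}$; the $\F_2$ variant of that formula differs by only $O(1)$ per slab. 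Summing over all slabs gives
\[
\maxrank(n, n, n) \le bn + \floor{\frac{n-b}{2}} \bracket{(n-b) + \floor{\frac{n}{2}}} + O(n) = \frac{3n^2 - nb + 2b^2}{4} + O(n).
\]

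Minimizing the quadratic in $b$ yields the critical point $b = n/4$, which is admissible since $n/4 \le (n-1)/2$ for $n \ge 2$, and substitution gives the claimed bound $23n^2/32 + O(n)$. The main conceptual point---the ``tuning parameters more carefully'' indicated in the section introduction---is that splitting into size-$2$ slabs handled by Theorem \ref{nn2-max-rank} strictly outperforms the trivial bound on $\maxrank(n, n-b, n-b)$, and that the asymmetric choice of Howell axes makes it optimal to stop at $b = \floor{n/4}$ instead of the $b = \floor{n/2}$ used in Corollary \ref{howell-nnn}. The only routine part of the argument is tracking the $O(n)$ error from floor/ceiling rounding, a possible leftover odd axis-$2$ slab, and the $\F_2$ discrepancy, all of which contribute at most $O(1)$ across the $O(n)$ slabs.
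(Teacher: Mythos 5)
Your proposal is correct and follows essentially the same route as the paper: apply Theorem \ref{howell} $b$ times (the paper's $k$), permute to reduce to $\maxrank(n,n-b,n-b)$, split along axis 2 into $n\times(n-b)\times 2$ slabs bounded by Theorem \ref{nn2-max-rank}, and optimize at $b=n/4$ to get $\frac{23}{32}n^2+O(n)$. The algebra and the admissibility check ($n\le 2(n-b)$ at $b=n/4$) match the paper's argument.
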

\begin{proof}
We combine the following previous results:
\begin{itemize}
    \item Theorem \ref{howell}: $\maxrank(m,n,p)\le \maxrank(m-1,n-1,p)+p$
    \item Theorem \ref{nn2-max-rank} (weakened): $\maxrank(m,n,2)\le n+\frac{m}{2}+O(1)$ for $m\ge n$
\end{itemize}

Using the latter result, $\maxrank(m,n,p)\le \ceil{\frac{p}{2}}\paren{n+\frac{m}{2}+O(1)}
=\frac{p}{2}(n+\frac{m}{2})+O(n+m+p)$,
by splitting a $m\times n\times p$ tensor along axis 2 into $m\times n\times 2$-shaped pieces (plus a smaller remaining piece).

Consider applying Theorem \ref{howell} to $\maxrank(n,n,n)$ some $0\le k\le n$ many times, then invoking the aforementioned bound derived from Theorem \ref{nn2-max-rank}:

\[
\begin{array}{lll}
    \maxrank(n,n,n) & \le \maxrank(n-k,n-k,n)+kn \\
     & =\maxrank(n,n-k,n-k)+kn \\
     & \le \frac{n-k}{2}((n-k)+\frac{n}{2})+kn+O(n) & \textrm{if } n\le 2(n-k);
\end{array}
\]

The optimal $k$ is $\arg\min_k \frac{n-k}{2}((n-k)+\frac{n}{2})+kn+O(n)
=\arg\min_k \frac{1}{4}(2k^2-kn)
=\frac{n}{4}$, yielding $\maxrank(n,n,n)\le \frac{23}{32}n^2+O(n)$.
We check that the condition $n\le 2(n-k)=2\cdot \frac{3}{4}n=\frac{3}{2}n$ is satisfied.
\end{proof}

\begin{theorem}
\label{3-sqrt3}
For $k=o(\min(m,n)^2)$, $\maxrank(m,n,mn-k)\ge mn-3\sqrt{3k}+O(1)$ over any field.
\end{theorem}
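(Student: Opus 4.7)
I would construct an explicit tensor $T \in \F^{m \times n \times (mn-k)}$ whose rank is provably at least $mn - 3\sqrt{3k} + O(1)$, following Bshouty's strategy while tightening the parameter optimization.

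The starting point is the standard characterization: for a tensor $T$ of shape $m \times n \times p$ whose axis-2 slices span $V \subseteq \F^{m \times n}$, the rank of $T$ equals $f(V) := \min\{R : V \subseteq \operatorname{span}\{M_1,\ldots,M_R\}$ for some rank-1 matrices $M_i\}$ (this is essentially the same observation used implicitly in Section \ref{sec:cpd} when characterizing $\~{Q} \times_0 \~{T}$). Thus it suffices to exhibit a single subspace $V$ of dimension $mn - k$ with $f(V) \ge mn - 3\sqrt{3k} + O(1)$; the tensor $T$ whose axis-2 slices are any basis of such $V$ then witnesses the lower bound on $\maxrank$.

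I would pick $V$ to be the annihilator of a sufficiently generic $k$-dimensional subspace $V^\perp \subseteq \F^{m \times n}$, whose existence can be established by a Schwartz--Zippel style counting argument after passing (if necessary) to an appropriate finite extension and then descending back. For any candidate rank-1 cover $U \supseteq V$ spanned by $\{a_r b_r^\intercal\}_{r=1}^R$, set $s = mn - R$; then $U^\perp$ is an $s$-dimensional subspace of $V^\perp$ consisting of matrices $x$ satisfying $a_r^\intercal x b_r = 0$ for every $r$. The central claim to verify is: for a generically chosen $V^\perp$ of dimension $k$, no such rank-1 presentation of an $s$-dimensional $U^\perp \subseteq V^\perp$ can exist unless $s$ satisfies a quadratic inequality of the form $s^2 \le 27 k + O(s)$, which rearranges to $s \le 3\sqrt{3k} + O(1)$.

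The geometric input powering this bound is a dimension count on the variety $\{(a,b) \in \F^m \times \F^n : a^\intercal x b = 0 \text{ for all } x \in W\}$ associated to an $s$-dimensional subspace $W$: it has dimension at most $m + n - s$, so the linear span of the rank-1 matrices it parameterizes has codimension at least $s$ inside $W^\perp$. Aggregating these codimension losses across all $s$-dimensional subspaces $W \subseteq V^\perp$, via an extremal/optimization argument analogous to Bshouty's but with a sharper splitting of $V^\perp$ according to rank strata of matrices representing its elements, yields the inequality $s^2 \le 27k + O(s)$; the constant $27 = (3\sqrt{3})^2$ arises as the minimum of a convex function of a splitting parameter (compare the analogous step that yielded $32 = (4\sqrt{2})^2$ in \cite{bshouty}). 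I expect the main obstacle to be tightening the extremal count so that the constant comes out to exactly $3\sqrt{3}$ rather than something slightly larger, and verifying that the genericity hypotheses on $V^\perp$ are realizable over any finite field (not just over sufficiently large extensions).
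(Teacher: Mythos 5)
Your proposal has a genuine gap, and in fact its central claim is false in part of the theorem's regime, so the genericity strategy cannot work as stated. Over an algebraically closed field, for codimension $k\le m+n-2$ (a regime permitted by $k=o(\min(m,n)^2)$ once $m,n$ are large), a generic $(mn-k)$-dimensional subspace $V\subseteq\F^{m\times n}$ is already spanned by the rank-one matrices lying inside it: the rank-one locus in $V$ is a generic linear section of the Segre variety, still positive-dimensional and nondegenerate in $V$. Equivalently, the generic tensor of shape $m\times n\times(mn-k)$ has rank exactly $mn-k$ (the standard ``unbalanced format'' generic-rank fact). Taking $W=V^{\perp}$ itself then gives a rank-one presentation with $s=k$, contradicting your central claim $s\le 3\sqrt{3k}+O(1)$ as soon as $k$ exceeds a fixed constant; worse, since $mn-3\sqrt{3k}>mn-k$ in that range, the bound of Theorem \ref{3-sqrt3} strictly exceeds the generic rank, so no generic (or Schwartz--Zippel-plus-descent) choice of $V^{\perp}$ can witness it --- the witnesses are necessarily special tensors. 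There are also internal soundness problems: the inference ``the incidence variety of pairs $(a,b)$ with $a^{\intercal}xb=0$ for all $x\in W$ has dimension at most $m+n-s$, hence the span of the rank-one matrices it parameterizes has codimension at least $s$ in $W^{\perp}$'' does not follow (a low-dimensional variety can have a large linear span, as rational normal curves already show), the dimension bound itself need not hold for the adversarial, non-generic subspaces $U^{\perp}$ arising from a rank-one cover, and the pivotal inequality $s^2\le 27k+O(s)$ is asserted rather than derived, so the step carrying the entire content of the theorem is missing.

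For contrast, the paper's proof is an explicit, deliberately non-generic construction. Choose $r\le m$, $s\le n$ with $rs\ge k$, place a maximal-rank $r\times s\times(rs-k)$ tensor $T'$ in the slices $T_{:,:,:(rs-k)}$, and let each remaining slice be a distinct unit matrix supported outside the $r\times s$ corner. Subbing out those $mn-rs$ unit slices via the substitution method (Theorem \ref{thm:sub-method}) and then deleting rows and columns outside the corner recovers $T'$, giving $\maxrank(m,n,mn-k)\ge \maxrank(r,s,rs-k)+mn-rs$; the counting/degrees-of-freedom bound (Theorem \ref{counting-lower-bound}) lower-bounds $\maxrank(r,s,rs-k)$, and optimizing $r=s\approx\sqrt{3k}$ yields $mn-3\sqrt{3k}+O(1)$, with $k=o(\min(m,n)^2)$ exactly ensuring $r\le m$ and $s\le n$. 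This is the same padding idea as \cite{bshouty} with a sharper parameter choice; if you want to pursue a dual/subspace viewpoint, it would have to be applied to such structured subspaces, not generic ones.
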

\begin{proof}
Choose $1\le r\le m,\ 1\le s\le n$ such that $rs\ge k$, and let $T'$ be a maximal-rank $r\times s\times (rs-k)$ tensor. Then we can construct the $m\times n\times (mn-k)$ tensor $T$ such that
\begin{itemize}
    \item $T_{:,:,:(rs-k)}=T'$;
    \item for $i>rs-k$, each $T_{:,:,i}$ has a 1 at a distinct coordinate $(a,b)$ not inside $\brace{0,\dots,r-1}\times \brace{0,\dots,s-1}$, and zeros everywhere else.
\end{itemize}

We have $\rk(T)\ge \rk(T')+(mn-rs)$, since if we sub out each slice $(:,:,i)$ at a time for $i>rs-k$, and then remove slices $(r:,:,:)$ and $(:,s:,:)$, we are left with $T'$.

Thus, $\maxrank(m,n,mn-k)\ge \maxrank(r,s,rs-k)+mn-rs$.
Using the counting lower bound (Theorem \ref{counting-lower-bound}) on $\maxrank(r,s,rs-k)$ yields
\[\maxrank(m,n,mn-k)\ge \frac{rs(rs-k)}{rs+r+s-k}+mn-rs
=\frac{-rs(r+s)}{rs+r+s-k}+mn.\]

To minimize $f(r,s):=\frac{rs(r+s)}{rs+r+s-k}$, we set its partial derivatives to 0.

\[
\begin{array}{rrl}
     & \frac{\partial}{\partial r} \frac{r(r+s)}{rs+r+s-k} & =0 \\
     && =\frac{2r+s}{rs+r+s-k}-\frac{r(r+s) (s+1)}{(rs+r+s-k)^2} \\
    \Rightarrow & (2r+s)(rs+r+s-k)-r(r+s)(s+1) & =0 \\
     && =r^2s+(r+s)^2-(2r+s)k; \\
\end{array}
\]

by symmetry, having $\frac{\partial f}{\partial s}=0$ implies $s^2r+(r+s)^2-(2s+r)k=0$.

Taking the sum and difference of these two equations yields

\[
\begin{array}{rl}
    rs(r+s)+2(r+s)^2-3(r+s)k & =0 \\
    rs(r-s)-(r-s)k & =0 \\
\end{array}
\]
\[
\Rightarrow
\begin{array}{ccc}
    rs+2(r+s)-3k & =0 & \textrm{(since } r+s\ne 0 \textrm{)} \\
    (r-s)(rs-k) & =0 \\
\end{array}
\]

If $rs-k=0$, then $f(r,s)=k$. On the other hand, if $r-s=0$, the first equation implies $r^2+4r-3k=0
\Rightarrow r^*=s^*=-2+\sqrt{4+3k}$.

To simplify analysis, it suffices to plug in $r^*=s^*=\sqrt{3k}$ instead;
then $f(\sqrt{3k},\sqrt{3k})
=\frac{3k\cdot 2\sqrt{3k}}{2k+2\sqrt{3k}}
=\frac{3k\cdot \sqrt{3k}}{k+\sqrt{3k}}
=\frac{3k\cdot \sqrt{3k}}{k} \cdot \frac{k}{k+\sqrt{3k}}
=3\sqrt{3k} \cdot \paren{1-\frac{\sqrt{3k}}{k+\sqrt{3k}}}
=3\sqrt{3k} \cdot \paren{1-O\paren{\frac{1}{\sqrt{k}}}}
=3\sqrt{3k}-O(1)$.

Thus, $\maxrank(m,n,mn-k)\ge mn-3\sqrt{3k}+O(1)$.

To ensure the conditions $r^*\le m,\ s^*\le n$ are met, it suffices to have $k=o(\min(m,n)^2)$.
\end{proof}

Finally, we list some bounds on small tensor shapes that we obtained with computational search, all of which are over $\F_2$. To achieve this, we canonicalized the tensors $T$ we searched up to invertible axis-operations along two of the three axes (axes 1 and 2):
\begin{itemize}
    \item force $T_{0,:,:}=\M{I_r&O\\O&O}$ for some $r$ (i.e. simultaneously row-reduced and column-reduced);
    
    \item for $i=1,2,\dots$:
    
    force $T_{i,:,:}$ to be lexicographically minimized over all transformations $M\mapsto PMQ^\intercal$, for pairs of invertible matrices $(P,Q)$ such that $P\times_1 (Q\times_2 T_{:i,:,:})=T_{:i,:,:}$;
\end{itemize}

Additionally, when possible, we force $\rk(T_{0,:,:})\ge r_0$ for some manually chosen value $r_0$.
Specifically, if we already know $\maxrank(m,n,p)\ge R_0$ for some $R_0$, we only need to search $T$ for which there exists some $i$ such that $\rk(T_{i,:,:})\ge \floor{\frac{R_0}{m}}+1$; otherwise $T$ would immediately have rank $\le R_0$. Thus, we set $r_0=\floor{\frac{R_0}{m}}+1$.

We have solved tensor shapes $3\times 3\times 4$, $3\times 4\times 4$, and $3\times 3\times 5$; results are summarized in Table \ref{tab:max-rank-small-shapes}.
The next shape we would want to solve is $4\times 4\times 4$, but with our current method there are too many tensors to check; the best lower bound we know of is $\maxrank_{\F_2}(4,4,4)\ge 8$, which can be obtained from $\maxrank_{\F_2}(3,4,4)$ by padding.

\begin{table}[h]
    \label{tab:max-rank-small-shapes}
    \centering
    \begin{tabular}{ccc|ccc|rrcc}
        $m$ & $n$ & $p$  &  $R_0$ & example & $r_0$ & \# tensors & time (s) & $\maxrank_{\F_2}(m,n,p)$ & example \\
        && & & tensor && searched &&& tensor \\
        \hline
        3 & 3 & 4  &  6 &
        ${\scriptscriptstyle \M{&v_2&v_1\\v_2&&v_0\\v_1&v_0&v_2}}$
        & 3 & 14664 & 8.70 & \textbf{6} & rank-$R_0$ example \\
        &&  &  & (padding of &&&&&  \\
        &&  &  & $3\times 3\times 3$ &&&&&  \\
        &&  &  & from \cite{bremner-hu}) &&&&&  \\
        \hline
        3 & 4 & 4  &  7 & ${\scriptscriptstyle \M{v_0&v_1&&v_2\\v_1&&v_2\\&v_2\\v_2}}$ & 3 & 657616 & 24500 & \textbf{8} & ${\scriptscriptstyle \M{v_0&&v_2\\v_2&v_0&&v_1\\&v_2&v_0+v_1&v_1+v_2\\v_1&&v_1+v_2&}}$ \\
        &&  &  & (Ex. \ref{landsberg-michalek-2n}, $k=2$) &&&&& \\
        \hline
        3 & 3 & 5  &  6 & ${\scriptscriptstyle \M{v_0&v_1&v_2\\&v_0&v_3\\v_4}}$ & 3 & 31428 & 14.9 & \textbf{7} & ${\scriptscriptstyle \M{v_0&&&&v_2\\&v_0&v_2&&v_1\\&v_2&v_0+v_2&v_1+v_2&v_2}}$ \\
        &&  &  & ($5\times 3\times 3$) &&&&& \\
    \end{tabular}
    \caption{
        Table of maximum rank $\maxrank_{\F_2}(m,n,p)$ over $\F_2$ for specific tensor shapes $m\times n\times p$.
        $R_0$ denotes a previously known lower bound on the maximal rank and is used to reduce the search space.
        In all rows, $r_0=\floor{\frac{R_0}{m}}+1$.
        \\
        To save space, each example tensor $T$ is notated as its axis-0 contraction $v\times_0 T$ for a row vector $v$; this is sometimes called the \textit{characteristic matrix} of the tensor.
        For example, $T=\M{\M{1\\&\phantom{0}}&\M{1&2\\3}}$ is notated as $\M{v_0+v_1&2v_1 \\ 3v_1}$.
        \\
        The tensor $\M{v_0&v_1&v_2\\&v_0&v_3\\v_4}$ has rank $\ge 6$, since we can: sub out $(i,:,:)$ for each $i\ge 1$; remove $(:,2,:)$ and $(:,:,2)$;
        then do an axis-operation to transform the resulting tensor to $\M{\M{1\\&1}}$.
    }
\end{table}

\subsection{Barriers}
\label{sec:max-rank-barriers}
Here we list some difficulties in improving lower and upper bounds on maximum rank.
Besides the still-high runtime of our CPD search algorithm, finding maximum rank over finite fields will likely require field-specific techniques.

First, the best known upper bound on maximum rank over the complex numbers (Theorem \ref{atkinson-stephens}), which is roughly half of the trivial upper bound, fails badly over finite fields, as it relies on the following lemma:

\begin{lemma}[\cite{atkinson-stephens}]
\label{simul-diag}
Given two arbitrary square matrices $A,B\in\C^{n\times n}$, there exist diagonal matrices $J,K\in\C^{n\times n}$ such that $A+J$ and $B+K$ are simultaneously diagonalizable.
\end{lemma}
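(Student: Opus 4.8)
The plan is to reduce the statement to a genericity argument about eigenvectors. Recall that two diagonalizable matrices over $\C$ are simultaneously diagonalizable precisely when they commute; more usefully here, two matrices $A'$ and $B'$ are simultaneously diagonalizable if there is a single invertible matrix $P$ whose columns are a common eigenbasis for both. So the goal is to choose diagonal perturbations $J,K$ so that $A+J$ and $B+K$ share a full set of $n$ linearly independent common eigenvectors. First I would fix an arbitrary invertible matrix $P=\M{p_0 \mid \dots \mid p_{n-1}}$ with the property that every $n\times n$ submatrix formed from its rows is invertible — for instance a Vandermonde matrix $P_{ij}=t_i^{\,j}$ with distinct $t_i\in\C$ works, and such a $P$ exists because $\C$ is infinite. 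The claim will be that for \emph{this} $P$, one can find diagonal $J,K$ and scalars $\lambda_0,\dots,\lambda_{n-1}$, $\mu_0,\dots,\mu_{n-1}$ so that $(A+J)p_i=\lambda_i p_i$ and $(B+K)p_i=\mu_i p_i$ for all $i$.

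Next I would unpack what the eigenvector equations demand of $J$. Writing $J=\diag{d_0,\dots,d_{n-1}}$, the condition $(A+J)p_i=\lambda_i p_i$ reads, row by row $k$, as $(Ap_i)_k + d_k (p_i)_k = \lambda_i (p_i)_k$. For each fixed row $k$, letting $i$ vary over $0,\dots,n-1$ gives $n$ equations in the unknowns $d_k$ and $\lambda_0,\dots,\lambda_{n-1}$; but $d_k$ appears with coefficient $(p_i)_k$ and $\lambda_i$ appears with coefficient $(p_i)_k$ as well, so after rearranging, row $k$ says $(Ap_i)_k = (\lambda_i - d_k)(p_i)_k$. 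The cleanest way to proceed is to treat the $\lambda_i$ as free parameters: pick $\lambda_0,\dots,\lambda_{n-1}$ to be $n$ distinct complex numbers chosen generically, and then try to solve for $J$. Collecting the row-$k$ equations over all $i$, we need a single scalar $d_k$ with $(Ap_i)_k/(p_i)_k = \lambda_i - d_k$, i.e. $d_k = \lambda_i - (Ap_i)_k/(p_i)_k$ must be \emph{independent of $i$}. That is generally false for an arbitrary prescribed $P$, which tells me the argument must instead choose $P$ (equivalently the common eigenbasis) as part of the solution, not in advance. So I would restructure as follows.

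The better route, and the one I believe the authors use, is to solve for the eigenvectors simultaneously with everything else via a dimension count over $\C$. Parametrize the data of a solution as: an invertible matrix $P$ ($n^2$ parameters), two diagonal matrices $J,K$ ($2n$ parameters), and two eigenvalue tuples ($2n$ parameters), subject to the $2n^2$ scalar equations $(A+J)P = P\,\diag{\lambda}$ and $(B+K)P = P\,\diag{\mu}$. The key observation is that, for each fixed column $p_i$ of $P$, the equation $(A+J)p_i=\lambda_i p_i$ can be rewritten $\diag{p_i}^{-1}(A p_i) - \lambda_i \mathbf{1} = -\,(d_0,\dots,d_{n-1})^\top$ where $\mathbf{1}$ is the all-ones vector — so once $P$ is fixed with all entries nonzero, the vector $J$ is forced to equal $\lambda_i\mathbf{1} - \diag{p_i}^{-1}Ap_i$ for \emph{every} $i$ simultaneously. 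Subtracting the $i=0$ instance from the $i$-th gives the constraint $\lambda_i\mathbf{1} - \lambda_0\mathbf{1} = \diag{p_i}^{-1}Ap_i - \diag{p_0}^{-1}Ap_0$, i.e. the vector on the right must be a scalar multiple of $\mathbf{1}$. This is $n-1$ scalar conditions (the $n$ coordinates must all be equal) for each of $i=1,\dots,n-1$, i.e. $(n-1)^2$ conditions on the $n^2$ unknowns $P$ from the $A$-side, plus another $(n-1)^2$ from the $B$-side, for $2(n-1)^2$ conditions on $n^2$ unknowns (the $\lambda_i,\mu_i$ being then determined up to the free shifts $\lambda_0,\mu_0$, which is fine). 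Since $2(n-1)^2 < n^2$ fails for $n\ge 3$, a naive count does not close — so the real content is that these equations are highly non-generic and have extra structure. I expect the actual proof exploits that the map $p\mapsto \diag{p}^{-1}Ap$ has image landing in $\mathbf{1}^\perp$-cosets in a controlled way, or uses an explicit inductive construction peeling off one common eigenvector at a time (choose $p_0$ so that $\diag{p_0}^{-1}Ap_0$ and $\diag{p_0}^{-1}Bp_0$ are both multiples of $\mathbf{1}$ — one vector equation, $2(n-1)$ conditions on $n$ unknowns, solvable for $n\ge$ small by Bézout/genericity over $\C$ — fix the corresponding $J,K$ rows, then recurse on the quotient).

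The main obstacle, and the step I would spend the most care on, is exactly this count: showing that one can iteratively select common eigenvectors while the already-fixed diagonal entries of $J,K$ remain consistent — equivalently, that the relevant algebraic variety over $\C$ is nonempty. This is where algebraic closure and infiniteness of $\C$ are essential (one gets to invoke that a system of polynomial equations with more unknowns than the "effective" number of constraints has a solution, or apply the projective dimension theorem so that low-degree subvarieties intersect), and it is precisely why the argument collapses over a finite field $\F$: there one cannot guarantee a common eigenvector exists, eigenvalues of the perturbed matrices need not lie in $\F$, and the counting heuristic gives no existence guarantee. I would therefore present the proof as: (1) reduce "simultaneously diagonalizable after diagonal shift" to "existence of a common eigenbasis after diagonal shift"; (2) show inductively that a single common eigenvector can be found and the corresponding diagonal entries of $J,K$ fixed, by a dimension/Bézout argument over $\C$; (3) quotient out and recurse, checking the consistency of the partially-determined $J,K$; and (4) remark that steps (2)–(3) use algebraic closure in an essential way, which is the content of Section \ref{sec:max-rank-barriers}.
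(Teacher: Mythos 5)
The paper itself does not prove this lemma; it is stated as a citation to \cite{atkinson-stephens} and then used only to contrast with the finite-field case, where the text gives a concrete counterexample (the nilpotent shift $B$ has $\rk(B+K-\lambda I)\ge n-1$ for every diagonal $K$ and scalar $\lambda$, so $B+K$ is never diagonalizable once $n>|\F|$). So there is no in-paper proof to compare your argument against, and what follows evaluates your sketch on its own terms.

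As a standalone argument, your proposal does not close, and you say as much. You correctly reformulate the claim as existence of a common eigenbasis $P$, you correctly derive the consistency condition that $\diag{p_i}^{-1}Ap_i-\diag{p_0}^{-1}Ap_0$ be a scalar multiple of $\mathbf{1}$ for every $i$, and you honestly report that the resulting affine count does not favor solvability for moderate $n$. But the promised rescue via the projective dimension theorem or a B\'ezout-style argument is never carried out, and the way you have set things up makes it hard to see how it could: if $J$ is chosen generically so that $A+J$ has simple spectrum, its eigenbasis $P$ is essentially determined, and then the $n(n-1)$ off-diagonal entries of $P^{-1}(B+K)P$ must be annihilated using only the $n$ diagonal parameters of $K$ --- an over-determined linear system with no generic reason to be solvable. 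A correct proof must choose $J$ and $K$ jointly and exploit nontrivial structure in how the eigenbasis of $A+J$ varies with $J$; none of that appears in the sketch. There is also a local error in the recursion step you propose: the base condition you impose on $p_0$ --- that $\diag{p_0}^{-1}Ap_0$ and $\diag{p_0}^{-1}Bp_0$ both be scalar multiples of $\mathbf{1}$ --- is precisely the condition that $p_0$ be a common eigenvector of the \emph{unperturbed} $A$ and $B$, which is generically impossible and also unnecessary: as your own earlier algebra shows, a single prospective eigenvector imposes no constraint at all, since the diagonal $J$ with $d_k=\lambda_0-(Ap_0)_k/(p_0)_k$ always makes any $p_0$ with nonzero entries an eigenvector of $A+J$. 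The binding constraints begin only at the second eigenvector, which is exactly where your count opens the gap, so the recursion as written starts from the wrong base case and never resolves the real obstruction.
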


Lemma \ref{simul-diag} crucially relies on $\C$ being infinite and algebraically closed. As a counterexample over a finite field $\F$, consider setting $B=\M{0&1\\&&\ddots\\&&&1\\&&&0}$;
then the geometric multiplicity of each eigenvalue of $B+K$ is at most one, because $\rk(B+K-\lambda I)\ge n-1$ for any $\lambda$ and $K$.
Since any square matrix over a finite field $\F$ trivially has $\le |\F|$ many eigenvalues, $B+K$ is not diagonalizable if $n>|F|$.

Additionally, the rank of a specific tensor can be change across different fields, and usually the rank seems to increase over smaller fields:

\begin{example}
The tensor $T=\M{\M{1\\&1}&\M{&1\\1}}$ (Ex. \ref{addition-mod-2-tensor}) satisfies
$\rk_\F\paren{T}=\cases{3 & \ch{\F}=2 \\ 2 & \textrm{else}}$.
\end{example}
\begin{proof}
The upper bound for the case $\ch{\F}=2$ can proven with the CPD 
$T=\M{1\\\phantom{0}}\otimes \M{1\\1}\otimes \M{1\\1}
+\M{1\\1}\otimes \M{1\\\phantom{0}}\otimes \M{\phantom{0}\\1}
+\M{1\\1}\otimes \M{\phantom{0}\\1}\otimes \M{1\\\phantom{0}}
=\M{\M{1&1\\1&1}&O}
+\M{\M{&1\\\phantom{0}}&\M{&1\\\phantom{0}}}
+\M{\M{&\phantom{0}\\1}&\M{&\phantom{0}\\1}}$.

The case $\ch{\F}\ne 2$ and a lower bound for the case $\ch{\F}=2$ can be seen by running Algorithm \ref{alg:main} with rank threshold $R=2$, which we demonstrate symbolically.

If $\rk(T)=2$, there must exist two linearly independent solutions $\M{a&b}$ to the equation $\rk(\M{a&b}\times_0 T)=1$. Since $\M{a&b}\times_0 T=\M{a&b\\b&a}$ is a $2\times 2$ matrix, the equation is equivalent to forcing the determinant of this matrix to be 0, so $a^2-b^2=0=(a+b)(a-b)$.
If $\ch{\F}\ne 2$, there exist solutions $\M{a&b}=\M{1&1},\ \M{1&-1}$. Otherwise, the polynomial equation degenerates to $(a+b)^2=0 \Rightarrow a+b=0$, so the set of solutions is contained in a 1-dimensional subspace.
\end{proof}

\begin{example}
\label{W-state-pow2}
The tensor $T=\M{
    \M{1\\&\phantom{0}\\&&\phantom{0}\\&&&\phantom{0}}
    &\M{&1\\1&\phantom{0}\\&&\phantom{0}\\&&&\phantom{0}}
    &\M{&&1\\&\phantom{0}\\1&&\phantom{0}\\&&&\phantom{0}}
    &\M{&&&1\\&&1\\&1\\1}
}$ satisfies
$\rk_\F\paren{T}=\cases{8 & \F=\F_2 \\ 7 & \ch{\F}\ne 2}$.
\end{example}
\begin{proof}
Over all fields, a rank lower bound of 7 can be proven by subbing out slices $(i,:,:)$ for each $0\le i<3$, then noticing that the resulting $1\times 4\times 4$ tensor is guaranteed to have rank 4.

For $\F=\F_2$, the exact rank of 8 is confirmed by running our implementation of Algorithm \ref{alg:main} with rref-pruning, lask-pruning, and rref-heuristic. A rank-8 CPD that we showed in \cite{yang25} is

\[
T=\cpdeval{
\M{
    1&0&0&0&0&0&0&0\\
    1&0&0&1&0&0&1&0\\
    1&0&1&0&0&1&0&0\\
    1&1&0&0&1&1&1&1\\
},\ 
\M{
    1&1&1&1&0&0&0&0\\
    0&0&0&1&0&0&1&1\\
    0&0&1&0&0&1&0&1\\
    0&1&0&0&1&0&0&0\\
},\ 
\M{
    1&1&1&1&0&0&0&0\\
    0&0&0&1&0&0&1&1\\
    0&0&1&0&0&1&0&1\\
    0&1&0&0&1&0&0&0\\
}
}.
\]

For $\ch{\F}\ne 2$, a rank-7 CPD of $T$ is shown in \cite{lysikov-slides}.
\end{proof}
\begin{remark}
We are not sure what the rank of the tensor in Example \ref{W-state-pow2} is over a field with characteristic 2 that is not $\F_2$.
\end{remark}
\begin{remark}
The tensor in Example \ref{W-state-pow2} is equal to $W\boxtimes W$, where $W=\M{\M{1\\&\phantom{0}}&\M{&1\\1}}$ is the ``W-state" tensor from Example \ref{W-state},
and $\boxtimes$ is the \textit{Kronecker product} of tensors, which is defined similarly as is for matrices.
\end{remark}

A potential problem with this phenomenon of tensor rank increasing over finite fields is that general lower bounds techniques, like the substitution method, may be too weak. Recall the indeterminate variables $\lambda_i$ in Theorem \ref{thm:sub-method}: to our knowledge, most mathematical proofs that use the substitution method try to avoid using these variables, or cancel them out when possible (e.g., the proof that $\rk(\ang{2,2,2})\ge 7$ due to \cite{brockett-dobkin} that we rephrase in Example \ref{mm222-rank7}). We call such proofs ``symbolic"; in contrast, stronger lower bounds may require enumerating some or all assignments of the indeterminates $\lambda_i$ and recursing, which we call ``combinatorial". Doing so may lead to a depth-first search problem with complexity comparable to our CPD search algorithm, with its own set of pruners that can speed up computation in practice.

\section{Conclusion}
We present algorithms for both ordinary CPD and border CPD search of an arbitrary tensor over a finite field, which run in exponential time much faster than na\"ive methods while only using polynomial space. For ordinary CPD, we also describe pruning conditions that can reduce the search space in practice for specific tensor inputs.

We also study the maximum possible rank of a tensor with a given shape over a specific field.
We rephrase existing results in a notation we believe is easier to read, and we improve some of these results by tuning parameters more carefully.
Finally, we use our aforementioned CPD search algorithm to exactly determine maximum rank of shapes $3\times 3\times 4$, $3\times 4\times 4$, and $3\times 3\times 5$ over the field $\F_2$.

We suspect that the runtime of our ordinary CPD search is near the limit for algorithms relying on linear analysis (e.g. contractions of the input tensor along a single axis, linear systems of linear equations) and that further improvements must use nonlinear methods.
Even so, the best known algorithms for this task (e.g., Gr\"obner bases) still use linear algebra at a high level by treating a polynomial as a linear combination of monomials, so a lot of challenging problems remain.

For border CPD search, there may be more room for improvement. The main difficulty we face is that a reduced row-echelon form of a matrix does not always exist over a border ring, but a more careful analysis of this ring may get around this obstacle.

\end{document}